\RequirePackage{fix-cm}

\documentclass[11pt]{article}


\usepackage[utf8]{inputenc}
\usepackage{authblk}
\usepackage{amsmath}
\usepackage{amssymb}
\usepackage{amsthm}
\usepackage{amsfonts}
\usepackage{graphicx,epstopdf}
\usepackage{xcolor, soul}
\usepackage{enumitem}
\usepackage{setspace}
\usepackage{tcolorbox}
\usepackage{mathtools}
\usepackage{multirow}
\usepackage{color, colortbl}
\usepackage{array}
\usepackage{adjustbox}
\usepackage{mathrsfs}
\usepackage{soul}
\usepackage{hyperref}
\usepackage{array}
\usepackage{subcaption}
\usepackage{flafter}

\usepackage{caption}
\usepackage{rotating}
\usepackage{tikz}
\usepackage{ctable}

\definecolor{Gray}{gray}{0.9}
\newcolumntype{P}{>{\centering\arraybackslash}}

\newtheorem{remark}{Remark}
\newtheorem{definition}{Definition}
\newtheorem{proposition}{Proposition}
\newtheorem{theorem}{Theorem}
\newtheorem{lemma}{Lemma}
\newcommand{\legendxpos}{330}
\newcommand{\legendypos}{200}

\definecolor{srcjsqcolor}{RGB}{255,165,0}


\colorlet{brjsqcolor}{black!50!red}
\colorlet{brjsqstarcolor}{black!50!red}
\colorlet{genjsqcolor}{black!50!red}
\colorlet{genjsqstarcolor}{black!50!red}


\colorlet{brsedcolor}{black!60!green}
\colorlet{brsedstarcolor}{black!60!green}
\colorlet{gensedcolor}{black!60!green}
\colorlet{gensedstarcolor}{black!60!green}


\colorlet{brsewcolor}{black!10!blue}
\colorlet{brsewstarcolor}{black!10!blue}
\colorlet{gensewcolor}{black!10!blue}
\colorlet{gensewstarcolor}{black!10!blue}
\newcommand{\lthickness}{0.9}
\newcommand{\labscal}{1.0}
\tikzstyle{srcjsqstyle}=[draw=srcjsqcolor,dash pattern=,line width= \lthickness pt,line join=round]

\tikzstyle{brjsqstyle}=[draw=brjsqcolor,dash pattern=on 2pt off 2pt on 6pt off 2pt, line width= \lthickness pt,line join=round]
\tikzstyle{brjsqstarstyle}=[draw=brjsqstarcolor,densely dotted,line width= \lthickness pt,line join=round]
\tikzstyle{genjsqstyle}=[draw=genjsqcolor,dash pattern=,line width= \lthickness pt,line join=round]
\tikzstyle{genjsqstarstyle}=[draw=genjsqstarcolor,dashed,line width= \lthickness pt,line join=round]

\tikzstyle{brsedstyle}=[draw=brsedcolor,dash pattern=on 2pt off 2pt on 6pt off 2pt, line width= \lthickness pt,line join=round]
\tikzstyle{brsedstarstyle}=[draw=brsedstarcolor,densely dotted,line width= \lthickness pt,line join=round]
\tikzstyle{gensedstyle}=[draw=gensedcolor,dash pattern=,line width= \lthickness pt,line join=round]
\tikzstyle{gensedstarstyle}=[draw=gensedstarcolor,dashed,line width= \lthickness pt,line join=round]

\tikzstyle{brsewstyle}=[draw=brsewcolor,dash pattern=on 2pt off 2pt on 6pt off 2pt, line width= \lthickness pt,line join=round]
\tikzstyle{brsewstarstyle}=[draw=brsewstarcolor,densely dotted,line width= \lthickness pt,line join=round]
\tikzstyle{gensewstyle}=[draw=gensewcolor,dash pattern=,line width= \lthickness pt,line join=round]
\tikzstyle{gensewstarstyle}=[draw=gensewstarcolor,dashed,line width= \lthickness pt,line join=round]

\definecolor{Gray}{gray}{0.9}


\mathchardef\mhyphen="2D
\def\multiset#1#2{\ensuremath{\left(\kern-.3em\left(\genfrac{}{}{0pt}{}{#1}{#2}\right)\kern-.3em\right)}}


\newcommand{\Scal}{\mathcal{S}}
\newcommand{\Scalb}{\bar{\mathcal S}}
\newcommand{\Dcal}{\mathcal{D}}
\newcommand{\Dvec}{\boldsymbol{D}}
\newcommand{\dvec}{\boldsymbol{d}}
\newcommand{\Acal}{\mathcal{A}}
\newcommand{\Avec}{\boldsymbol{A}}
\newcommand{\avec}{\boldsymbol{a}}
\newcommand{\Acalv}{\vec{\Acal}}
\newcommand{\Avecv}{\vec{\Avec}}
\newcommand{\avecv}{\vec{\avec}}
\newcommand{\Avecn}[1]{\Avec^{\left(#1\right)}}
\newcommand{\avecn}[1]{\avec^{\left(#1\right)}}
\newcommand{\ain}[2]{a_{#1}^{\left(#2\right)}}
\newcommand{\Ain}[2]{A_{#1}^{\left(#2\right)}}

\newcommand{\Tcal}{\mathcal{T}}
\newcommand{\Pcal}{\mathcal{P}}
\newcommand{\Jcal}{\mathcal{J}}

\newcommand{\Qcal}{\mathcal{Q}}
\newcommand{\Bcal}{\mathcal{B}}

\newcommand{\zerovec}{\boldsymbol{0}}
\newcommand{\evec}{\boldsymbol{e}}

\newcommand{\bi}{b_i(\dvec)}
\newcommand{\bj}{b_j(\dvec)}
\newcommand{\bsp}{b_{s+1}(\dvec)}
\newcommand{\rsd}{\rho_{s+1}^{d_{s+1}}}

\newcommand{\li}{\lambda_i^{\mathrm{I}}}
\newcommand{\lb}{\lambda_i^{\mathrm{B}}}
\newcommand{\lil}{\lambda_{\ell}^{\mathrm{I}}}
\newcommand{\lbl}{\lambda_{\ell}^{\mathrm{B}}}

\newcommand{\ri}{r_i^{\mathrm I}(\dvec)}
\newcommand{\rb}{r_i^{\mathrm B}(\dvec)}
\newcommand{\rbj}[1]{r_i^{\mathrm B}\left(\dvec |J=#1\right)}

\newcommand{\genq}{\mathbf{GEN}}
\newcommand{\detq}{\mathbf{DET}}
\newcommand{\indq}{\mathbf{IND}}
\newcommand{\iidq}{\mathbf{IID}}
\newcommand{\srcq}{\mathbf{SRC}}
\newcommand{\sfcq}{\mathbf{SFC}}
\newcommand{\brq}{\mathsf{BR}}
\newcommand{\uniq}{\mathsf{UNI}}

\newcommand{\cida}{\mathbf{CID}}
\newcommand{\clda}{\mathbf{CLD}}
\newcommand{\cda}{\mathbf{CD}}
\newcommand{\nda}{\mathsf{ND}}
\newcommand{\lda}{\mathbf{LD}}
\newcommand{\ida}{\mathbf{ID}}
\newcommand{\qrf}{\mathbf{QRF}}
\newcommand{\qr}{\mathsf{QR}}

\newcommand{\arf}{\mathbf{ARF}}
\newcommand{\ar}{\mathsf{AR}}

\newcommand{\sed}{\mathsf{SED}}
\newcommand{\sew}{\mathsf{SEW}}
\newcommand{\jsq}{\mathsf{JSQ}}
\newcommand{\jiq}{\mathsf{JIQ}}
\newcommand{\fcfs}{\mathsf{FCFS}}
\newcommand{\plcfs}{\mathsf{PLCFS}}
\newcommand{\ps}{\mathsf{PS}}
\newcommand{\fb}{\mathsf{FB}}
\newcommand{\las}{\mathsf{LAS}}
\newcommand{\jfiq}{\mathsf{JFIQ}}
\newcommand{\jfsq}{\mathsf{JFSQ}}
\newcommand{\lsq}{\mathbf{LSQ}}
\newcommand{\genseed}{\mathsf{IPOptD}_{\genq}^{\mathsf{SEED}}}
\newcommand{\genseedq}{\mathsf{IPOptQ}_{\genq}^{\mathsf{SEED}}}
\newcommand{\genseeda}{\mathsf{IPOptA}_{\genq}^{\mathsf{SEED}}}
\newcommand{\ipoptd}[1]{\mathsf{IPOptD}_{#1}}
\newcommand{\ipopta}[1]{\mathsf{IPOptA}_{#1}}
\newcommand{\ipoptq}[1]{\mathsf{IPOptQ}_{#1}}

\newcommand{\indp}{\widetilde{p}}
\newcommand{\srcp}{\widehat{p}}

\newcommand{\pr}{\mathbb{P}}
\newcommand{\ep}{\mathbb{E}}
\newcommand{\DP}[2]{\left\langle #1, #2 \right\rangle}

\usetikzlibrary{decorations.pathreplacing,decorations.markings,shapes,arrows,patterns}

\setstretch{1.5}
\topmargin=-0.45in
\evensidemargin=0in
\oddsidemargin=0in
\textwidth=6.5in
\textheight=9.0in
\headsep=0.25in

\begin{document}

\title{A General ``Power-of-$d$'' Dispatching Framework for Heterogeneous Systems
}

\author{Jazeem Abdul Jaleel \and
        Sherwin Doroudi \and
        Kristen Gardner \and
        Alexander Wickeham
}

\date{December 2021}

\maketitle

\begin{abstract}
Intelligent dispatching is crucial to obtaining low response times in large-scale systems.
One common scalable dispatching paradigm is the ``power-of-$d$,'' in which the dispatcher queries $d$ servers at random and assigns the job to a server based only on the state of the queried servers.
The bulk of power-of-$d$ policies studied in the literature assume that the system is homogeneous, meaning that all servers have the same speed; meanwhile real-world systems often exhibit server speed heterogeneity.

This paper introduces a general framework for describing and analyzing heterogeneity-aware power-of-$d$ policies.
The key idea behind our framework is that
dispatching policies can make use of server speed information at two decision points: when choosing which $d$ servers to query, and when assigning a job to one of those servers.
Our framework explicitly separates the dispatching policy into a querying rule and an assignment rule; we consider general families of both rule types.

While the strongest assignment rules incorporate both detailed queue-length information and server speed information, these rules typically are difficult to analyze.
We overcome this difficulty by focusing on heterogeneity-aware assignment rules that ignore queue length information beyond idleness status.  In this setting, we analyze mean response time and formulate novel optimization problems for the joint optimization of querying and assignment.  We build upon our optimized policies to develop heuristic queue length-aware dispatching policies.  Our heuristic policies perform well in simulation, relative to policies that have appeared in the literature.

\end{abstract}

\section{Introduction}
\label{sec:Intro}
Large-scale systems are everywhere, and deciding how to dispatch an arriving job to one of the many available servers is crucial to obtaining low response time.
One common scalable dispatching paradigm is the ``power-of-$d$,'' in which the dispatcher queries $d$ servers at random and assigns the job to a server based only on the state of the queried servers.
Such policies incur a much lower communication cost than querying all servers while sacrificing little in the way of performance.
However, many power-of-$d$ policies, such as \textsf{Join the Shortest Queue}-$d$ ($\jsq$-$d$)\footnote{Throughout, names and abbreviations of \emph{individual} rules and policies are rendered in \textsf{sans-serif font}; see Appendix~\ref{app:notation} for a list of individual rules and policies proposed, studied, and/or referenced in this paper.}~\cite{mitzenmacher2001power}, share a notable weakness: they do not account for the fact that, in many modern systems, the servers' speeds are \emph{heterogeneous}. Unfortunately, such heterogeneity-unaware dispatching policies can perform quite poorly in the presence of server heterogeneity~\cite{gardner2020scalable}.
Indeed, it is not straightforward to determine how to dispatch in heterogeneous systems to achieve low mean response times.
For example, it may sometimes be desirable to exclude the slowest classes of servers entirely, yet at other times even the slow servers are needed to maintain the system's stability.

Motivated by the need for dispatching policies that perform well in heterogeneous systems, researchers have designed new policies for this setting.
For example, under the \textsf{Shortest Expected Delay}-$d$ ($\sed$-$d$) policy the dispatcher queries $d$ servers uniformly at random and assigns the arriving job to the queried server at which the job's expected delay (the number of jobs in the queue, scaled by the server's speed) is the smallest~\cite{selen2016steady}.
Under the \textsf{Balanced Routing} ($\brq$) policy, the dispatcher queries $d$ servers with probabilities proportional to the servers' speeds and assigns the arriving job to the queried server with the fewest jobs in the queue~\cite{chen2012asymptotic}.
While both of these policies generally lead to better performance than the fully heterogeneity-unaware $\jsq$-$d$ policy, there is still substantial room for improvement.
Together, $\sed$-$d$ and $\brq$ illustrate a key observation about how to design heterogeneity-aware power-of-$d$ dispatching policies.
There are two decision points at which such policies can use server speed information: when choosing which $d$ servers to query (exploited by $\brq$), and when assigning a job to one of those servers (exploited by $\sed$-$d$).

One of the primary contributions of this paper is the introduction of a general framework to describe and analyze heterogeneity-aware power-of-$d$ policies; we discuss our framework in detail in Section~\ref{sec:model}.
Our framework explicitly separates the dispatching policy into a \emph{querying rule} that determines how to select $d$ servers upon a job's arrival, and an \emph{assignment rule} that determines where among the $d$ queried servers to send the job.
Both $\sed$-$d$ and $\brq$ fit within our framework, as do many other policies that have been proposed and studied in the literature.
For example, recent work has proposed two families of policies that leverage heterogeneity at both decision points by querying fixed numbers of ``fast'' and ``slow'' servers, then probabilistically choosing whether to assign the job to a fast or a slow server based on the idle/busy statuses of the queried servers~\cite{gardner2020scalable}.
One can also imagine designing new policies within our framework; for example, a policy could query $d$ servers probabilistically in proportion to their speeds---as in $\brq$---and then assign the job to the queried server at which its expected delay is smallest---as in $\sed$-$d$ (for more details on how such policies fit into our framework, see Sections~\ref{sec:model} and~\ref{sec:construction}).


Our framework is quite general in the space of querying rules it permits: we allow for any querying rule that is \emph{static} (i.e., ignores past querying and assignment decisions) and \emph{symmetric} (i.e., treats servers of the same speed class identically).
The $\brq$ \emph{querying rule}---viewed separately from the fact that the $\brq$ \emph{dispatching policy} from \cite{chen2012asymptotic} uses $\jsq$ assignment---for example, clearly satisfies these properties.
The $\brq$ querying rule is a member of what we call the \textbf{Independent and Identically Distributed Querying} ($\iidq$) \emph{family} of querying rules.\footnote{Throughout, names and abbreviations of parameterized \emph{families} of rules and policies are rendered in \textbf{bold serif font}; see Appendix~\ref{app:notation} for a list of families of rules and policies proposed, studied, and/or referenced in this paper.} Each specific policy within this family selects each of the $d$ servers independently according to the same distribution over the server speed classes.  That is, the $\iidq$ family of querying rules is parameterized by a set of probabilities that determine the rates at which each server class is queried.

We consider several families of querying rules that satisfy the static and symmetric properties; as is the case for the $\iidq$ family, each family is characterized by its own set of probabilistic parameters that determine how to select the $d$ servers, and different settings for these parameters specify different policies within the family (e.g., one parameter setting of the $\iidq$ querying rule family yields the $\brq$ querying rule, as alluded to above). 
Other examples of querying rule families in the literature include \textbf{Single Random Class} ($\srcq$) ~\cite{mukhopadhyay2016analysis}, under which a single server class is selected probabilistically for each arriving job and all $d$ queried servers are chosen from that class, and \textbf{Deterministic Class Mix} ($\detq$)~\cite{gardner2020scalable}, under which the $d$ queried servers always contain a fixed number of servers of each class.
We also introduce several new families of querying rules that generalize those in the literature in various ways.


Our framework also permits a wide range of \emph{assignment rules}.
For example, included in our framework are the assignment rules such as \textsf{Shortest Expected Delay} ($\sed$) and \textsf{Join the Shortest Queue} ($\jsq$), which when paired with \textsf{Uniform Querying} ($\uniq$)---the querying rule defined in Section \ref{sec:querying} that queries each server with equal probability, regardless of its class---constitute the $\sed$-$d$ and $\jsq$-$d$ \emph{dispatching policies} as they are typically defined in the literature.  The $\sed$ assignment rule is especially attractive as it simultaneously incorporates both detailed queue-length information and server class information when making an assignment decision among the queried servers.  The potentially powerful rules that make use of both class and queue-length information, such as $\sed$, fall within what we call the \textbf{Class and Length Differentiated} ($\clda$) family of assignment rules.  Unfortunately, general $\clda$ assignment rules (including $\sed$) preclude tractable exact performance analysis.  In light of this tractability barrier, we introduce the \textbf{Class and Idleness Differentiated} ($\cida$) family of assignment rules, a subfamily of $\clda$.  The assignment rules in the $\cida$ family eschew detailed queue length information and make assignment decisions based only on the idle/busy statuses and classes (speeds) of the queried servers.  
Even with the information limitations imposed by the $\cida$ family, there is a rich space of reasonable ways to assign jobs among queried servers of different speeds and idle/busy statuses.  While it is natural to favor an idle fast server over a slower server---whether busy or idle---it is less obvious whether a busy fast server or an idle slow server is preferable; it can even be beneficial to occasionally assign jobs to a busy slow server over a busy fast server.  Following our earlier work in \cite{gardner2020scalable}, we make decisions of this sort probabilistically.  As a result, policies within the $\cida$ family of assignment rules are parameterized by the probabilities with which each queried server class is assigned the arriving job.  Specifically, each set of parameters that specifies an assignment rule within $\cida$ encodes a distribution over the classes  for each type of ``scenario'' the dispatcher may confront---in terms of the speed classes of servers queried and their idle/busy statuses. As we show, unlike the dispatching policies driven by $\clda$ assignment, dispatching policies constructed from any static and symmetric querying rule and a $\cida$ assignment rule are amenable to exact analysis (Section~\ref{sec:analysis}).  

In light of the fact that we can---and do---analyze $\cida$-driven dispatching polices, the bulk of this paper (Sections~\ref{sec:analysis}--\ref{sec:numerical}) is devoted to the study of families of \emph{dispatching policies} that are formed by combining one of several families of querying rules (e.g., $\iidq$, $\srcq$) with the $\cida$ family of assignment rules.
Each resulting family constitutes (often infinitely) many possible individual dispatching policies, each of which is specified by a different choice of the probabilistic parameters governing the chosen querying and assignment rule families.
In Section~\ref{sec:optimization} we formulate optimization problems for \emph{jointly} determining the querying and assignment rule parameterizations that yield the lowest mean response time for a given set of system parameters (e.g., arrival rate, server classes, etc.).
To the best of our knowledge, this paper is the first to feature a joint-optimization of the querying and assignment decisions across continuous parameter spaces for both rule types; while our earlier work~\cite{gardner2020scalable} features a joint optimization, that paper considers only the $\detq$ querying family with only two server classes, which yields at most $|\detq|=d+1$ possible querying rules.
In addition to our allowance for continuous spaces of querying and assignment rules, in this paper we allow for any number of server classes, yielding substantially larger and more complicated optimization problems; for details on the sizes of our optimization problems see Appendix~D.
Nonetheless, the problem of selecting an optimal policy from many of the families introduced in this paper is significantly less computationally intensive than the corresponding problem associated with $\detq$-based policies, such as those in~\cite{gardner2020scalable}, because the continuous space of our querying rules allows for purely continuous optimization, obviating the need for combinatorial optimization.
We discuss practical considerations and present a numerical study of the performance of $\cida$-driven dispatching policies in Section~\ref{sec:numerical}.

Understandably, restricting ourselves to the $\cida$ assignment rule family leads to sacrifices in performance; one would expect $\clda$ assignment rules such as $\sed$ to yield lower mean response times when paired with a judiciously chosen querying rule.
At the same time, because of the difficulty of finding exacts mean response times for $\clda$ assignment rules---which make use of both server speed and detailed queue length information---it is also challenging to systematically identify querying rules that perform well in tandem with the $\clda$ assignment rules.
In Section~\ref{sec:construction}, we offer the following heuristic remedy to the problem of finding suitable querying rules to be paired with those assignment rules that are not amenable to tractable analysis: we pair various assignment rules in $\clda$ (e.g., $\sed$) with a querying rule that was jointly-optimized with a $\cida$ assignment rule.
Simulation results demonstrate that these heuristic dispatching policies tend to perform favorably to other policies---both those existing in the literature and the $\cida$-based policies we study in this paper.
Furthermore, our results yield insights about the relative importance of the querying and assignment decisions at different system loads: we observe that at light load the querying decision drives the dispatching policy's performance, whereas at heavy load the assignment decision plays the larger role.

While throughout the paper, we operate under the assumption that job sizes are exponentially distributed, many of our results hold for generally distributed job sizes (see Appendix F for details).
The work presented in this paper is a starting point for the further study of the policies within our framework; to this end, we discuss ample opportunities for future work in Section~\ref{sec:conclusion}.

\section{Literature review}
\label{sec:literature}

In large-scale systems, the power-of-$d$ is the dominant dispatching paradigm; power-of-$d$ policies operate by querying $d$ servers uniformly at random and dispatching an arriving job to one of the queried servers.
The best-known policy within this paradigm is \textsf{Join the Shortest Queue}-$d$ ($\jsq$-$d$), under which a job is dispatched to the server with the shortest queue among the $d$ queried servers.
Response time under $\jsq$-$d$ has been analyzed, under the assumption of homogeneous servers and exponential service times~\cite{mitzenmacher2001power,vvedenskaya1996queueing}.
$\jsq$-$2$ has also been studied in heterogeneous systems with general service times under both the $\fcfs$~\cite{izagirre2014light,zhou2017designing} and \textsf{Processor Sharing} ($\ps$) scheduling rules~\cite{mukhopadhyay2016analysis}.
Variants of $\jsq$-$d$ include $\jsq(d,T)$, under which a job is dispatched to a queried server with workload less than a threshold $T$, and \textsf{Join the Idle Queue}-$d$ ($\jiq$-$d$), which is a special case of $\jsq(d,T)$ with $T=0$~\cite{Hellemans2019WD}.
While power-of-$d$ policies typically are designed for homogeneous systems, several heterogeneity-aware policies akin to $\jsq$-$d$ also have been proposed.
These include \textsf{Shortest Expected Delay}-$d$ ($\sed$-$d$), which uses server speed information to assign a job to a queried server based on the expected waiting time rather than the number of jobs in the queue, and \textsf{Balanced Routing} ($\brq$), which queries $d$ servers with probabilities proportional to their speeds and then uses $\jsq$ assignment~\cite{chen2012asymptotic}. 
Other power-of-$d$-like familes of policies that make use of server speed information include $\mathbf{JIQ}$-($d_F,d_S$) and $\mathbf{JSQ}$-($d_F,d_S$)~\cite{gardner2020scalable}, as well as the \textbf{Hybrid SQ}(2) \textbf{Scheme}, which has been studied under the \textsf{Processor Sharing} ($\ps$) scheduling discipline~\cite{mukhopadhyay2016analysis}.
All of these policies fit within our framework; we will discuss many of them in more detail, in the context of our framework, in the sections that follow.

A different stream of related literature focuses on policies that use information about \emph{all} servers' states when making dispatching decisions; because these policies do not involve querying a subset of the servers, they fall outside of our framework.
The most well-known policy in this category is \textsf{Join the Shortest Queue} ($\jsq$), which is known to minimize mean response time in homogeneous systems with $\fcfs$ scheduling, assuming that service times are independent and identically distributed and have non-decreasing hazard rate~\cite{weber1978optimal,winston1977optimality}.
Mean response time under $\jsq$ has been analyzed approximately under both $\fcfs$ scheduling, assuming exponential service times~\cite{nelson1989approximation}, and $\ps$ scheduling, assuming general service times~\cite{gupta2007analysis}.
\textsf{Join the Idle Queue} ($\jiq$) was proposed as a low-communication alternative to $\jsq$~\cite{lu2011join,wang2018distributed}; again, the analysis assumes homogeneous servers.
More recently, several heterogeneity-aware variants on $\jsq$ and $\jiq$ have been proposed and studied~\cite{stolyar2015pull,zhou2017designing}.
While some of these policies have been shown to stochastically minimize the queue length distribution in heterogeneous systems~\cite{stolyar2015pull}, this does not imply optimality with respect to mean response time.
Indeed, policies within our framework can outperform these heterogeneity-aware policies that use state information from all servers (see, e.g.,~\cite{gardner2020scalable}).

Still other scalable heterogeneity-aware policies have been designed for systems with slightly different modeling assumptions than those we consider in this work.
For example, the $\jfiq$ and $\jfsq$ policies were designed for systems in which jobs have locality constraints (i.e., each job is capable of running on only a subset of the servers)~\cite{weng2020optimal}.
While the \emph{assignment} rules used in these policies are similar to some of the assignment rules that fit within our framework, the $\jfiq$ and $\jfsq$ \emph{dispatching policies} would not be considered part of our framework because they do not involve querying a subset of servers; instead, the dispatcher considers all compatible servers for each arriving job.
Similarly, the \textbf{Local Shortest Queue} ($\lsq$)\label{lsq} family of policies~\cite{vargaftik2020lsq} is orthogonal to our work; these policies assume multiple dispatchers, each of which store a local---possibly out of date---view of server states.
While some of the policies in the $\lsq$ family are quite similar to policies in our framework, the analytical approach and key insights of~\cite{vargaftik2020lsq} are fundamentally different from our work because of the use of out of date information.

Another category of heterogeneity-aware dispatching policies that fall outside our framework includes those policies that are designed specifically for small-scale systems.
Policies in this category use information about all servers' queue lengths---and sometimes more detailed information---when making dispatching decisions~\cite{banawan1989load,bonomi-tc-1990,FENG2005,esa2013,Sethuraman:1999,tantawi1985optimal}.
These policies typically would not be considered scalable, and hence are less applicable to the setting we consider in this paper.
Some policies, such as \textsf{Shortest Expected Delay} and \textsf{Generalized Join the Shortest Queue}, have well-defined power-of-$d$ variants appropriate for large-scale systems.
Thus far, analysis of these policies has focused on systems with only a small number of servers~\cite{banawan1992comparative,selen2016approximate,selen2016steady,whitt1986deciding}; we consider the power-of-$d$ versions of these policies, which do fall within our framework, in later sections.
Further away from our setting is work focusing on the ``slow server problem,'' which asks whether a slow server should be used at all~\cite{koole-scl-1995,larsen81,lin84,luh2002,rubinovitch85,rubinovitch85_stall,rykov09}.
These models consider systems with a central queue, and thus, the policies proposed do not apply to our setting.

\section{Model and framework}
\label{sec:model}
The framework introduced in this paper necessitates a large volume of notation.  Throughout the paper, notation is defined when introduced. Additionally, most of the notation in the paper is summarized in Appendix~\ref{app:notation}

\subsection{Preliminaries}

We consider a system with $k$\label{k} servers.
There are $s$ classes of server speeds,\label{s}
\begin{align}
    \Scal&\equiv\{1,\ldots,s\},\label{eq:scal}
\end{align}
where the number of class-$i$ servers is $k_i$\label{k_i}; let $q_i \equiv k_i/k$\label{q_i} be the fraction of servers belonging to class $i$.  
In the interest of both clarity and tractability, we assume that the size (i.e., service requirement in terms of time) of a job running on a class-$i$ server is exponentially distributed with rate $\mu_i$ (for a discussion of generally distributed service times, see Appendix F).
Classes are indexed in decreasing order of speed, i.e., $\mu_1 > \cdots > \mu_s$. \label{mu_i}We assume that $\displaystyle{\sum_{i=1}^s} \mu_i q_i = 1$.
Jobs arrive to the system as a Poisson process with rate $\lambda k$.\label{lambda}  Except where stated otherwise, we carry out our analysis in the regime where $k\to\infty$ under the assumption of asymptotic independence (see Section~\ref{sec:analysis} for details).

The goal is to minimize the mean response time $\ep[T]$\label{T}, i.e., the end-to-end duration of time from when a job first arrives to the dispatcher until it completes service at one of the servers.  Upon a job's arrival, the dispatcher (i) queries a given number ($d\ll k$)\label{d} of servers according to a \emph{querying rule}, then (ii) sends the job to one of the queried servers according to an  \emph{assignment rule}, at which (iii) the job is queued and/or served according to a work-conserving  \emph{scheduling rule}.
 In this paper, we are primarily interested in elaborating on and analyzing the consequences of the first two rule types---querying and assignment; together these two rules constitute the totality of the \emph{dispatching policy}.  We denote the dispatching policy that uses querying rule $\qr$\label{QR} and assignment rule $\ar$\label{AR} by $\DP{\qr}{\ar}$\label{DP}\label{qrar}.  Our goal is to find dispatching policies (i.e., jointly determine how to query servers and how to assign jobs) that result in low mean response times.  While explicitly determining and evaluating the performance of the optimal policy will be prohibitively difficult, we propose some families of rules that are simple to implement and understand alongside techniques for identifying optimal rules within these families given a particular problem instance.
 
The details of how individual rules function can depend on the parameters of a particular system (i.e., on the number of server classes $s$, the server speeds $\mu_1,\ldots,\mu_s$, the fraction of the total server count constituting each class, the arrival rate, $\lambda$, etc.) and the query count $d$ (which we can take as given). 
A \emph{family} of (querying or assignment) rules, is a collection of individual rules parameterized by a shared set of additional decision variables (e.g., probabilistic parameters indicating which server classes should be queried or which server should be assigned a job given the state of the queried servers).   We are interested in rule families insofar as they allow us to optimize over their parameter spaces in order to find the specific rule that minimizes the mean response time $\ep[T]$ within that family for a given system parameterization. We note that this optimization is performed once for a given system; the same querying rule and assignment rule are then applied throughout the system's lifetime. Even where optimization is prohibitively intractable, we can still set parameter values heuristically in the hope of finding strong policies among those available within a family.

Throughout the paper we use the following convention: the abbreviated names of \emph{individual} (querying, assignment, and scheduling) rules and dispatching policies are rendered in a sans-serif font (e.g., a querying rule $\qr$, an assignment rule $\ar$, and a dispatching policy $\mathsf{DP}$), while those of entire \emph{families} of rules and policies are rendered in a bold serif font (e.g., a querying rule family $\qrf$\label{qrf}, an assignment rule family $\arf$\label{arf}, and a dispatching policy family $\mathbf{DPF}$)\label{dpf}.  Often, we will also denote families of dispatching policies by extending our notation for individual dispatching rules $\DP{\qr}{\ar}$ as follows:  for an individual querying rule $\qr$ and a family of assignment rules $\arf$, let $\DP{\qr}{\arf}\equiv\{\DP{\qr}{\ar}\colon \ar\in\arf\}$ \label{qrarf} be the family of dispatching policies constructed from the individual querying rule $\qr$ in combination with any individual assignment rule $\ar$ belonging to the family $\arf$.  By analogy, for a querying rule family $\qrf$ and individual assignment rule $\ar$, let $\DP{\qrf}{\ar}\equiv\{\DP{\qr}{\ar}\colon \qr\in\qrf\}$. \label{qrfar}  When discussing a family of dispatching policies where neither querying nor assignment is restricted to an individual rule, we write $\DP{\qrf}{\arf}\equiv\{\DP{\qr}{\ar}\colon \qr\in\qrf,\, \ar\in\arf\}$. \label{qrfarf}

 We assume throughout that the sizes of specific jobs are unknown until they are completed, and hence we restrict attention to querying, assignment, and scheduling rules that cannot make use of (i.e., are ``blind'' to) job size information.  We further assume that querying and assignment decisions are made and carried out instantaneously without any overheads; consequently, jobs may not be held at the server for dispatching at some later time.  Under the assumption of exponentially distributed job sizes, our analysis and results hold under all work conserving size-blind scheduling rules.
 Under general service time distributions this is no longer the case; for a discussion of the interaction between service time distributions and scheduling rules, see Appendix F).

\subsection{Overview of querying rules}
\label{sec:querying}

When a job arrives, the dispatcher queries $d$ servers at random according to a \emph{querying rule}.  Throughout this paper, in the interest of tractability, brevity, and simplicity, we restrict attention to those querying rules that are \emph{static} and \emph{symmetric} (properties, which we define below).
\begin{definition}
A querying rule is \emph{static} if each querying decision is made without reference to any kind of state information, i.e., the set of servers queried upon a job's arrival is chosen independently of all past and future querying and assignment decisions.
\end{definition}


Insisting that our querying rules be static is motivated by simplicity, and may preclude some superior querying rules: it is conceivable that there would be some benefit in weighting the likelihood that a server is queried in terms of how recently it was queried (or better yet, in terms of how recently it was assigned a job), which is not possible under static querying rules.  We note in particular that restricting attention to static querying rules precludes round-robin querying (i.e., the rule where all servers would be put into an ordered list, and one would query by going down the list and querying the next $d$ servers at each arrival, cycling back to the beginning of the list after querying the server at the end of the list).   Nevertheless, this restriction comes with an important advantage: static querying rules can be uniquely and unambiguously described in terms of a probability distribution over the set of all $d$-tuples of servers.  By further imposing that our static querying rules also be symmetric (according to the definition that follows), we can simplify these distributions even further.
\begin{definition}
A static querying rule is \emph{symmetric} if it is equally likely to query a set of $d$ servers $U_1$ or $U_2$ whenever $U_1$ and $U_2$ contain the same number of class-$i$ servers for all $i\in\Scal$.
\end{definition}
Essentially, a static symmetric querying rule is one where each query is carried out independently of all others (as with all static querying rules), while no server (respectively combination of servers) is \emph{ex ante} treated any differently than any other server (respectively combination of servers) of the same class (respectively class composition).  As with the restriction to symmetric querying rules, requiring that a querying rule be symmetric may preclude superior dispatching policies.

These restrictions motivate the introduction of some additional notation and terminology.
Let $D_i$\label{D_i} denote the number of class-$i$ servers in a given query, let $\Dvec\equiv(D_1,\ldots,D_s)$\label{Dvec} denote the \emph{class mix}, let $d_i$ and $\dvec\equiv(d_1,\ldots,d_s)$\label{dvec} denote the realizations of the random variable $D_i$ and the random vector $\Dvec$, respectively, and finally let
\begin{align}
    \Dcal&\equiv\{\dvec\colon d_1+\cdots+d_s=d\}\label{eq:dcal}
\end{align}\label{Dcal}
be the set of all possible class mixes $\dvec$ (involving exactly $d$ servers).  Observe that any static symmetric querying rule can be uniquely and unambiguously defined in terms of a distribution over the set of all possible query mixes, $\Dcal$. 
Formally, a querying rule is given by a function $p\colon\Dcal\to[0,1]$ satisfying $\sum_{\dvec\in\Dcal} p(\dvec)=1$. The querying rule selects servers so that $\pr(\Dvec=\dvec)=p(\dvec)$.\label{p}

 We conclude this subsection by introducing the main families of querying rules---in addition to two individual rules---studied in this paper, taking the query count $d$ as given:

\begin{itemize}
    \item The \textbf{General Class Mix} ($\genq$)\label{genq} family consists of all (and only those) querying rules that are static and symmetric.  Note that such querying rules are equally likely to query any combination of $d$ servers that constitute the same query mix $\dvec\in\Dcal$.  The following families are all subsets of $\genq$.
    \item The \textbf{Independent Querying} ($\indq$)\label{indq} family consists of those querying rules in $\genq$ where each of the $d$ servers to be queried is chosen independently according to some (but not necessarily the same) probability distribution over the set of classes $\Scal$.  Consider the following example of a policy in $\indq$ when $s=d=3$: always query at least one class-$1$ server, exactly one class-$2$ server, and either an additional class-$1$ server or a class-$3$ server with equal probability.  Note that we ignore the possibility of a single server being queried more than once, as we are primarily concerned with the setting where the number of servers in each class $k_i\to\infty$.
    \item The \textbf{Independent and Identically Distributed Querying} ($\iidq$) \label{iidq}family consists of those querying rules in $\genq$ where each of the $d$ servers to be queried are chosen independently according to \emph{the same} probability distribution over the set of classes $\Scal$, and hence, the random vector $\Dvec$ is drawn from a multinomial distribution under $\iidq$ querying.  $\iidq$ is a subfamily of $\indq$.
    \item The \textbf{Deterministic Class Mix} ($\detq$)\label{detq} family consists of those querying rules in $\genq$ that always query the same class mix for some fixed class mix $\dvec\in\Dcal$.  $\detq$ is a subfamily of $\indq$.
    \item The \textbf{Single Random Class} \label{srcq}($\srcq$) family consists of those querying rules in $\genq$ that select one of the $s$ server types according to some probability distribution over the set of classes $\Scal$ and then queries $d$ servers all of that class.
    \item The \textbf{Single Fixed Class} \label{sfc}($\sfcq$) family consists of those querying rules that always query $d$ class-$i$ servers for some fixed class $i\in\Scal$.  Such rules essentially discard all servers except those of the chosen class, rendering the system homogeneous.  The $\sfcq$ family consists of only $s$ querying rules and is precisely the intersection of the $\iidq$ and $\detq$ families as well as the intersection of $\srcq$ and any (nonzero) number of the $\indq$, $\iidq$ and $\detq$ families.
    \item The \textsf{Uniform Querying} ($\uniq$)\label{uniq} rule is equally likely to query any combination of $d$ servers. To elaborate, the $\uniq$ querying rule is a member of the $\iidq$ family where each of the $d$ servers queried is a class-$i$ server with a probability equal to the fraction of servers that belong to class $i$ (i.e., with probability $q_i$).
    \item The \textsf{Balanced Routing} ($\brq$)\label{br} rule queries $d$ servers independently, with the probability that any given server is queried being proportional to its speed.  To elaborate, the $\brq$ querying rule is a member of the $\iidq$ family where each of the $d$ servers queried is a class-$i$ server with a probability equal to the fraction of the total system-wide service capacity provided by class-$i$ servers (i.e., with probability $\mu_iq_i$).  
\end{itemize}

    \begin{remark}
    In \cite{chen2012asymptotic}, \textsf{Balanced Routing}  referred to what would be understood in our framework as the \emph{dispatching policy} constructed from (i) what we call the \textsf{Balanced Routing} \emph{querying} rule and (ii) the \textsf{Join the Shortest Queue} \emph{assignment} rule.  From this point forward, in our paper we use the acronym $\brq$ to refer to the \textsf{Balanced Routing} querying rule and not the dispatching policy.
    \end{remark}

Figure~\ref{fig:venn-querying} depicts the set inclusion relationships between querying rule families and individual querying rules described above.

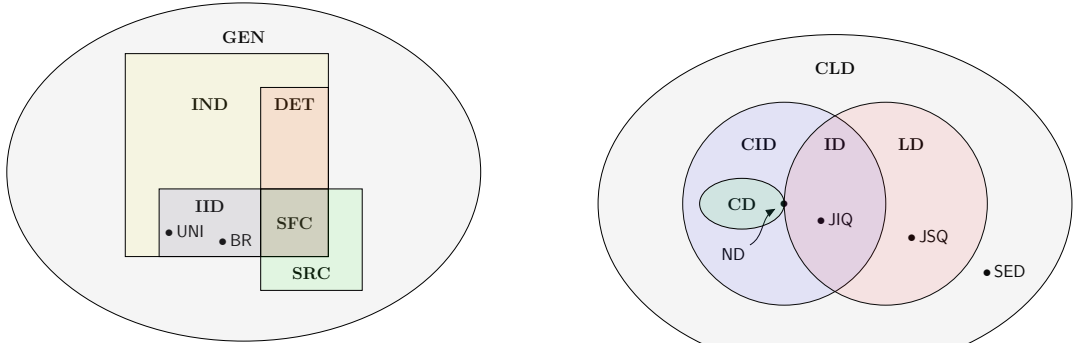
\begin{figure}
\begin{center}
\begin{subfigure}[b]{0.4\textwidth}
    \scalebox{0.45}{
    \begin{tikzpicture}
\node at (4,7) {{\Large $\genq$}};
\node at (3,5) {{\Large $\indq$}};
\node at (5.5,5) {{\Large $\detq$}};
\node at (6,0) {{\Large $\srcq$}};
\node at (3,2) {{\Large $\iidq$}};
\node at (5.5,1.5) {{\Large $\sfcq$}};
\node at (2.25,1.25) {{\Large $\bullet\, \uniq$}};
\node at (3.75,1) {{\Large $\bullet\, \brq$}};
\draw[fill=gray,  thick,fill opacity=0.08] (4,3) ellipse(7 and 5);
\draw[fill=yellow,thick,fill opacity=0.08] (0.5,0.5) rectangle (6.5,6.5);
\draw[fill=blue,  thick,fill opacity=0.08] (1.5,0.5) rectangle (6.5,2.5);
\draw[fill=red,   thick,fill opacity=0.08] (6.5,0.5) rectangle (4.5,5.5);
\draw[fill=green, thick,fill opacity=0.08] (4.5,2.5) rectangle (7.5,-0.5);
\end{tikzpicture}}
    \caption{Querying rule families.  Note that $\sfcq$ is the intersection of any two of the $\iidq$, $\detq$, and $\srcq$ families.  Moreover, we have  $\sfcq=\indq\cap\srcq$.}
\label{fig:venn-querying}
\end{subfigure}
\hspace{1cm}
\begin{subfigure}[b]{0.4\textwidth}
    \scalebox{0.45}{
    \begin{tikzpicture}
\node at (0,4) {{\Large $\clda$}};
\node at (-2.25,1.75) {{\Large $\cida$}};
\node at (2.25,1.75) {{\Large $\lda$}};
\node at (0,1.75) {{\Large $\ida$}};
\node at (-2.75,0) {{\Large $\cda$}};
\node at (0,-0.5) {{\Large $\bullet\, \jiq$}};
\node at (2.75,-1) {{\Large $\bullet\, \jsq$}};
\node at (5,-2) {{\Large $\bullet\, \sed$}};
\node[name=my_label] at (-3,-1.5) {{\Large $\nda$}};
\node[name=my_point,minimum size=0.2] at (-1.5,0) {{\Large $\bullet$}};
\draw[thick,arrows={-triangle 45}] (my_label) to[out=30,in=210] (my_point);
\draw[fill=blue,thick,fill opacity=0.08] (-1.5,0) circle(3);
\draw[fill=red,thick,fill opacity=0.08] (1.5,0) circle(3);
\draw[fill=gray,thick,fill opacity=0.08] (0,0) ellipse(7 and 5);
\draw[fill=green,thick,fill opacity=0.08] (-2.75,0) ellipse(1.25 and 0.75);
\end{tikzpicture}}
    \caption{Assignment rule families. Note that we have $\cda\cap\ida=\{\nda\}$.\ \\ }
\label{fig:venn-assignment}
\end{subfigure}
\end{center}
\caption{Set inclusion diagrams for the querying rule families (a) and assignment rule families (b) discussed in this paper.  In both diagrams rule families are shown as regions and individual rules are shown as points.}
\end{figure}


\subsection{Overview of assignment rules}\label{sec:lbia}

Once a set of servers has been queried, the job is assigned to one of these servers according to an  \emph{assignment rule}, which specifies a distribution over the servers queried. 
Our assignment rules are allowed to depend on \emph{state} information, consisting of knowledge of each queried server's class (and hence, their associated $\mu_i$ and $q_i$ values) and knowledge of the queue length---including the job or jobs in service, if any---at each queried server.
We restrict attention to assignment rules that satisfy restrictions analogous to those adopted for our querying rules.

\begin{definition}
 An assignment rule is \emph{static} if each assignment decision is made without \emph{direct} regard to past querying or assignment decisions (although such decisions can impact the state at a server, which assignment rules may use). 
\end{definition}

\begin{remark}
More formally, let $\boldsymbol{X}_t$ denote the state of the \emph{entire} system at the time of the $t$-th assignment (including the queue length at and class of each of the servers in the system) and let $\vec{\Avec}_t$ denote the result of the $t$-th query (by analogy with the notation $\Avecv$, which we introduce in Section~\ref{sec:clda-formal}).  Let $\mathscr{F}_t$ denote the natural filtration of $\{ \boldsymbol{X}_t, \vec{\Avec}_t \}$.  An assignment policy is \emph{static} if the (potentially random) assignment choice given $\vec{\Avec}_t$ is the same as the assignment choice given $\mathscr{F}_t$.
\end{remark}

\begin{definition}
A static assignment rule is \emph{symmetric} if it does not use information about the specific \emph{identities} of the queried servers and can only use their \emph{state} information. That is, given a set of queried servers with identical states, the job is equally likely to be assigned to any one of those servers and the probability with which that job is assigned to one of those servers depends only on the state (and not the identities) of those servers and the states (and not the identities) of the other queried servers.
\end{definition}

We consider six families of static and symmetric assignment rules.  We proceed to describe these families, which differ from one another in the ways they can differentiate the states of the queried servers for the the purpose of making assignment decisions:
\begin{itemize}
    \item The \textsf{Non-Differentiated} ($\nda$)\label{nda} assignment rule cannot differentiate between server states. This is equivalent to uniform assignment among the servers in the query.  We note that using the $\nda$ assignment rule is antithetical to the purpose of the power-of-$d$ paradigm, as an equivalent dispatching policy can always implemented with $d=1$.
    \item Assignment rules in the \textbf{Class Differentiated} ($\cda$) \label{cda} family may differentiate between server states only on the basis of class information.
    \item Assignment rules in the \textbf{Idleness Differentiated} ($\ida$) \label{ida} family may differentiate between server states only on the basis of idleness information, e.g., \textsf{Join the Idle Queue} ($\jiq$).
    \item Assignment rules in the \textbf{Length Differentiated} ($\lda$) \label{lda} family may differentiate between server states only on the basis of queue-length information, e.g., \textsf{Join the Shortest Queue} ($\jsq$).
    \item Assignment rules in the \textbf{Class and Idleness Differentiated} ($\cida$) \label{cida} family may differentiate between server states only on the basis of class and idleness information.
    \item Assignment rules in the \textbf{Class and Length Differentiated} ($\clda$) \label{clda} family may differentiate between server states on the basis of both class and queue-length information, e.g., \textsf{Shortest Expected Delay} ($\sed$).
\end{itemize}


As shown in Figure~\ref{fig:venn-assignment}, the $\clda$ family includes all of the other assignment rule families under consideration.  Naturally, among the dispatching policies that we consider, those that achieve the best performance (i.e., the lowest mean response time) necessarily make use of the querying rules in the $\clda$ family.
Specific policies that belong to only the $\clda$ family (among the six mentioned above) may be amenable to numerical response time approximation.
However, the curse of dimensionality frequently obstructs the use of optimization techniques for the systematic discovery of strong-performing policies within this family.
Meanwhile, the study of the $\lda$ family can exhibit complications similar to those exhibited by $\clda$, while lacking the advantage of exploiting heterogeneity to obtain low response times.
Therefore, $\cida$---which subsumes $\cda$ and $\ida$---emerges as the richest family under consideration that is amenable to analysis, so we devote Sections~\ref{sec:analysis}--\ref{sec:numerical} to exploring this family of assignment rules (in conjunction with the various families of querying rules introduced in Section~\ref{sec:querying}).  We explore the wider $\clda$ family of assignment rules in Section~\ref{sec:construction}, where we leverage our extensive study of $\cida$-driven dispatching policies (presented in the aforementioned sections) to find superior policies with assignment rules in $\clda$.

\section{Analysis of $\DP{\qrf}{\cida}$}
\label{sec:analysis}
In this section, we examine the $\cida$ family of assignment rules in detail.  We provide a formal presentation of this family (Section~\ref{sec:cida-formal}), prove stability results (Section~\ref{sec:stability}), and present an analysis of the mean response time of the $\DP{\genq}{\cida}$ dispatching policies (Section~\ref{sec:cida-analysis}).

\subsection{Formal presentation of the $\cida$ family of assignment rules}\label{sec:cida-formal}

Assignment rules in the $\cida$ family are---as the family's name clearly suggests---length-blind but idle-aware, i.e., such an assignment rule can observe and make assignment decisions based on the idle/busy status of each of the queried servers, but it cannot observe the queue length at each busy server (of course, the queue length at each idle server must be 0).  By eschewing examining detailed queue length information, we facilitate tractable analysis.  Meanwhile, idle-awareness motivates the introduction of some new notation: we encode the idle/busy statuses of the queried servers by $\avec\equiv(a_1,\ldots,a_s)$\label{avec}, where $a_i$ \label{a_i}is the number of \emph{idle} class-$i$ servers among the $d_i$ queried.  The set of all possible $\avec$ vectors is given by $\Acal\equiv\{\avec\colon a_1+\cdots+a_s\le d\}$.\label{Acal}  Note that $a_i$ and $\avec$ are realizations of the random variable $A_i$\label{A_i} and the random vector $\Avec$\label{Avec} (which are defined analogously to $D_i$ and $\Dvec$), respectively.

Formally, this assignment rule is given by a family of functions $\alpha_i\colon\mathcal A\times\mathcal D\to[0,1]$\label{alpha_i(a,d)} parameterized by $i\in\mathcal S$.  For all $\avec\in\mathcal A$ and $\dvec\in\mathcal D$ such that $\avec\le\dvec$ (element-wise) these families must satisfy $\sum_{i\in\Scal}\alpha_i(\avec,\dvec)=1$ and $\alpha_i(\avec, \dvec)=0$ if $d_i=0$. Given such a family of functions (together with a query resulting in vectors $\avec\in\mathcal A$ and $\dvec\in\mathcal D$) the dispatcher sends the job to a class-$i$ server with probability $\alpha_i(\avec,\dvec)$.  
At this point we assign to an idle class-$i$ server (if possible) or a busy class-$i$ server (otherwise), chosen uniformly at random.

We prune the set of assignment rules by avoiding rules that allow assignment to a slower server when a \emph{faster idle} server has been queried.  That is, $\alpha_i(\avec,\dvec)=0$ whenever there is a class $j<i$ such that \mbox{$a_j\ge1$}.  Moreover, whenever $\avec\neq\zerovec$, the value of $\alpha_i(\avec,\dvec)$ depends only on the realized value of the random variable $J\equiv\min\{j\in\Scal\colon A_{j}>0\}$\label{J}---the class of the \emph{fastest idle queried server}---and on $\dvec$ (specifically, on the realization of the random set $\{j<J\colon d_j>0\}$).  For notational convenience, we take $\min\emptyset\equiv s+1$, so that $J$ is defined on $\Scalb\equiv\Scal\cup\{s+1\}=\{0,1,2,\ldots,s,s+1\}$\label{sbar} and $J=s+1$ when all queried servers are busy, in which case there is no idle server and we can consider the (non-existent) fastest idle queried server as belonging to (the non-existent) class $s+1$.  This structure allows us to introduce the following abuse of notation that will facilitate the discussion of our analysis: $\alpha_i(j,\dvec)\equiv\alpha_i(\avec,\dvec)$\label{alphaijd} for all $j\in\mathcal \Scalb$ and $\avec\in\mathcal A$ such that $J=j$ whenever $\Avec=\avec$, i.e., such that $j=\min\{j'
\in\mathcal S\colon a_{j'}>0\}$.  Note that as a consequence of this notation, we have  $\alpha_i(s+1,\dvec)=\alpha_i(\zerovec,\dvec)$.  Further note that we must have $\alpha_i(j,\dvec)=0$ whenever $d_i=0$ (we cannot send the job to a server that was not queried) and moreover we set $\alpha_i(j,\dvec)=0$ whenever $d_j=0$ and $j\neq s+1$ (the fastest queried idle server must of course be queried).




\subsection{Stability}\label{sec:stability}

In this section, we identify necessary and sufficient conditions for the existence of a stable dispatching policy within the $\DP{\qrf}{\cida}$ family for the various families of querying rules, $\qrf$, presented in Section~\ref{sec:querying}.
We say that the system is stable if the underlying Markov chain is positive recurrent. 
This is a necessary condition for achieving finite mean response time.
In order to establish stability, it is sufficient to show that, when busy, each server experiences an average arrival rate that is less than its service rate. This implies that the mean time between visits to the idle state is finite, and hence that the underlying Markov chain is positive recurrent as required.
Let $\lb$\label{lib} denote the average arrival rate to a busy class-$i$ server.

\begin{definition}
The system is stable if, for all server classes $i\in\Scal$, we have $\lb < \mu_i$.
\end{definition}

\begin{proposition}
\label{prop:stability}
Recalling that $\lambda$ is the average arrival rate per server (i.e., $\lambda k$ is the total arrival rate to the system), the following necessary and sufficient conditions for stability hold:
\begin{enumerate}
    \item There exists a policy in the $\DP{\srcq}{\cida}$ family such that the system is stable if and only if $\lambda < 1$.
    \item There exists a policy in the $\DP{\sfcq}{\cida}$ family such that the system is stable if and only if $\lambda < \max_j \mu_j q_j$.
    \item\label{item:det} Consider a dispatching policy in the $\DP{\detq}{\cida}$ family, where the query mix is always $\dvec$ (note that each individual policy within $\DP{\detq}{\cida}$ has only one query mix). The system is stable if and only if $\displaystyle{\lambda < \sum_{i\colon d_i>0} \mu_i q_i}$.
    \item\label{item:br} Under $\DP{\brq}{\cida}$, the system is stable if and only if $\lambda < 1$.
    \item There exists a policy in $\DP{\iidq}{\cida}$ such that the system is stable if and only if $\lambda < 1$.
    \item There exists a policy in each of $\DP{\indq}{\cida}$ and $\DP{\genq}{\cida}$ such that the system is stable if and only if $\lambda < 1$.
\end{enumerate}
\end{proposition}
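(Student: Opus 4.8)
The plan is to establish all six parts with two reusable arguments: a throughput/capacity bound for the ``only if'' directions, and an explicit load-balancing construction for the ``if'' directions, using throughout the criterion (justified above) that the system is stable precisely when $\lb<\mu_i$ for every $i\in\Scal$. In each part I read the statement as the assertion that a stable policy in the indicated family exists if and only if the stated inequality on $\lambda$ holds.

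For necessity, the key observation is that under any $\cida$ assignment rule a job is only ever sent to a queried server, so a class queried with probability zero never receives work and remains idle forever. Writing $\mathcal{C}$ for the set of classes a given policy queries with positive probability, the aggregate service (departure) rate is therefore at most $\sum_{i\in\mathcal{C}}k_i\mu_i=k\sum_{i\in\mathcal{C}}\mu_i q_i$, while arrivals occur at rate $\lambda k$; if $\lambda\ge\sum_{i\in\mathcal{C}}\mu_i q_i$ the total-jobs process has nonnegative drift and the chain cannot be positive recurrent. Specializing $\mathcal{C}$ to each family gives every necessity half at once: $\mathcal{C}\subseteq\Scal$ (so the bound is $\sum_i\mu_i q_i=1$) for $\srcq$, $\brq$, $\iidq$, $\indq$, and $\genq$; a single class (bound $\max_j\mu_j q_j$) for $\sfcq$; and exactly $\mathcal{C}=\{i:d_i>0\}$ for the fixed-mix $\detq$ rule.

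For sufficiency I would spread the offered load across the usable classes in proportion to capacity and pair this with an idleness-favoring $\cida$ assignment (send to the fastest idle queried server, which the pruning in any case mandates). Two cases are immediate because the per-class subsystems decouple into independent homogeneous power-of-$d$ systems: under $\srcq$, choosing class $i$ with probability $p_i=\mu_i q_i$ makes each class-$i$ server see an independent Poisson stream of rate $\mu_i q_i\lambda k$ spread over $k_i$ servers, i.e.\ per-server rate $\mu_i\lambda<\mu_i$ iff $\lambda<1$; and $\sfcq$ is the degenerate case that places all mass on $j^\star=\arg\max_j\mu_j q_j$, stable iff $\lambda<\mu_{j^\star}q_{j^\star}$. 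Parts 5 and 6 then require no new work: since $\brq\in\iidq\subseteq\indq\subseteq\genq$, a single stabilizing $\DP{\brq}{\cida}$ policy simultaneously witnesses them, and the $\detq$ sufficiency follows from the same balanced-routing idea applied to the fixed mix (note that, because $d_i\ge1$ for every $i\in\mathcal{C}$ under $\detq$, each usable class is present in every query, so routing fractions proportional to $\mu_i q_i$ are implementable).

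The main obstacle is making sufficiency rigorous in exactly the cases where a single arrival queries several classes---$\detq$ and $\brq$---since there the assignment reacts to idleness, the subsystems do not decouple, and $\lb$ is not a simple thinned Poisson rate. My approach is to argue at the level of per-server rates: by symmetry each class-$i$ server has a common unconditional arrival rate $\bar\lambda_i$, and in the stationary regime flow balance gives $\bar\lambda_i=\mu_i\rho_i=\rho_i\lb+(1-\rho_i)\li$, where $\rho_i$ is the busy probability; since idle servers are favored we have $\li\ge\lb$, whence $\lb\le\bar\lambda_i$, and balancing the load so that $\bar\lambda_i=\lambda\mu_i/\sum_{j\in\mathcal{C}}\mu_j q_j<\mu_i$ closes the argument exactly when $\lambda<\sum_{j\in\mathcal{C}}\mu_j q_j$. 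The delicate step---which I expect to lean on the asymptotic-independence/mean-field machinery of Section~\ref{sec:analysis}---is to justify these stationary identities noncircularly, i.e.\ to verify $\lb<\mu_i$ as a property of the mean-field fixed point rather than presupposing the positive recurrence one is trying to prove.
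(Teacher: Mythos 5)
Your necessity direction (the capacity bound over the set of classes queried with positive probability) is sound, and it usefully makes explicit what the paper leaves implicit; likewise your constructions for $\srcq$ and $\sfcq$ and the inclusion argument disposing of items 5 and 6 coincide with the paper's proof. The genuine gap is in the two cases you yourself flag as the obstacle, $\detq$ and $\brq$, and it originates in a misreading of the family: you assume a $\cida$ rule must favor idle servers because ``the pruning in any case mandates'' it. The pruning is a restriction the paper imposes later for optimization purposes; the proposition quantifies over all of $\cida$, and $\cda\subseteq\cida$. The paper exploits exactly this. For $\detq$ it assigns the job to a queried class-$i$ server, chosen uniformly at random \emph{ignoring idle/busy statuses}, with fixed probability $\mu_i q_i/\sum_{j\colon d_j>0}\mu_j q_j$; with state-blind probabilistic routing, Poisson splitting decouples the servers into independent M/M/1 queues, and the per-server rate bound you want is exact, with no appeal to stationarity whatsoever.

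Your substitute for this decoupling---flow-balance identities $\bar\lambda_i=\mu_i\rho_i=\rho_i\lb+(1-\rho_i)\li$ at a mean-field fixed point---is circular in precisely the way you acknowledge: those identities presuppose the stationary regime whose existence is to be proved, and the asymptotic-independence machinery of Section~\ref{sec:analysis} is a modeling assumption, not a foundation for a positive-recurrence argument. The problem is worst for $\brq$, where the querying rule is fixed, so you cannot simply ``balance the load'': you give no argument that any idleness-favoring rule actually realizes $\bar\lambda_i=\lambda\mu_i$. The paper resolves this by importing the external result of \cite{chen2012asymptotic} that $\DP{\brq}{\jsq}$ is stable iff $\lambda<1$, and reusing its class-level routing probabilities $\beta_i(\dvec)$ as fixed $\cda$ assignment probabilities; your proposal has no counterpart to this step. (A more elementary repair does exist: under $\brq$ querying, the state-blind rule $\nda\in\cida$ already sends each job to any given server with probability proportional to its speed, giving each server an independent Poisson stream of rate $\lambda\mu_i$, hence stability iff $\lambda<1$, without invoking \cite{chen2012asymptotic}.) Finally, note that the paper obtains the idleness-favoring, pruning-compliant variants for free: a remark following the proof observes that redirecting jobs toward the fastest idle queried server only shifts arrivals from busy to idle servers, which cannot destabilize the system---this monotonicity observation, layered on top of the $\cda$ construction, is the rigorous version of the $\li\ge\lb$ intuition you were reaching for.
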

\begin{proof} We prove each time separately:
\begin{enumerate} 
\item Consider a querying rule in $\srcq$ where the probability that all queried servers are of class $i$ is given by $\mu_i q_i$.
Then, by Poisson splitting, the class-$i$ servers act like a homogeneous system, independent of all other server classes, with a total arrival rate $\lambda k \mu_i q_i = \lambda k_i \mu_i$. Given that only class-$i$ servers are present in the query, the $\cida$ assignment rule will assign the arriving job to an idle server, if one is present in the query, and a busy server chosen uniformly at random (among the servers in the query) if not. This assignment rule is symmetric among class-$i$ servers, and so the arrival rate to an individual class-$i$ server is $\lambda \mu_i$, which is less than $\mu_i$, ensuring the stability of the system, provided that $\lambda < 1$.

\item $\sfcq$ effectively throws out all server classes except one, which we will call class $i$; by a similar argument as in the proof for $\srcq$, the class-$i$ subsystem will remain stable provided that $\lambda < \mu_i q_i$. Then the largest stability region is achieved by selecting the server class with the largest total capacity.

\item Given that we always query according to some fixed query mix $\dvec\in\Dcal$,  construct an assignment rule in $\cida$ (yielding a dispatching policy in $\DP{\detq}{\cida}$) under which, for all $i\in\Scal$ such that $d_i > 0$, the job is dispatched to a queried class-$i$ server (chosen uniformly at random without considering any idle/busy statuses) with probability $\mu_i q_i \left/ \displaystyle{\sum_{j:d_j > 0} \mu_j q_j}\right.$  (note that this assignment rule is a member of $\cda\subseteq\cida$ as it ignores idle/busy statuses, and therefore does not adhere to our pruning of the space of assignment rules). Then the total arrival rate to class-$i$ servers is \[\lambda k\cdot \frac{\mu_i q_i}{\displaystyle{\sum_{j:d_j>0} \mu_j q_j}} = \mu_i k_i \cdot\frac{\lambda}{\displaystyle{\sum_{j:d_j>0} \mu_j q_j}},\] which is less than $\mu_i k_i$, ensuring stability of the class-$i$ servers, provided that $\lambda < \displaystyle{\sum_{j:d_j>0} \mu_j q_j.}$

\item From~\cite{chen2012asymptotic}, we have that $\DP{\brq}{\jsq}$ is stable if and only if $\lambda < 1$.
For all $(i,\dvec)\in\Scal\times\Dcal$ let $\beta_i(\dvec)$ denote the probability that an arriving job is sent to a class-$i$ server under $\DP{\brq}{\jsq}$, given that the query mix is $\dvec$ (i.e., $\beta_i(\dvec)$ is the probability that the shortest queue is at a class-$i$ server, given query mix $\dvec$).
Now form a policy in the family $\DP{\brq}{\cida}$ by sending the job to a queried class-$i$ server (chosen uniformly at random without considering any idle/busy statuses) with probability $\beta_i(\dvec)$ for all $(i,\dvec)\in\Scal\times\Dcal$ (note that this assignment rule is a member of $\cda\subseteq\cida$ as it ignores idle/busy statuses, and therefore does not adhere to our pruning of the space of assignment rules).
The probability that an arriving job is dispatched to a class-$i$ server is the same under this newly defined policy in $\DP{\brq}{\cida}$ as under $\DP{\brq}{\jsq}$; the only difference is that now all jobs can be viewed as being routed entirely probabilistically.
This will not change the stability region, as $\lb$ remains unchanged for all $i\in\Scal$.

\item This follows from the stability condition for $\DP{\brq}{\cida}$, which is a member of $\DP\iidq\cida$.

\item This follows from item 5 above and the fact that $\DP\iidq\cida\subseteq\DP\indq\cida\subseteq\DP\genq\cida$.
\end{enumerate}
\end{proof}

Note that in proving the existence of a stable dispatching policy in the $\DP{\detq}{\cida}$ and $\DP{\brq}{\cida}$ families (items \ref{item:det} and \ref{item:br} of Proposition~\ref{prop:stability}, respectively), we constructed stable dispatching policies where the assignment rules were members of $\cda\subseteq\cida$, and hence, did not adhere to our pruning rules.  It is not hard to modify these policies to also prove the existence of stable dispatching policies within these families that make use of idle/busy statuses and adhere to our pruning rules.  Consider the simple modification where whenever the original policy would assign the job to a server that is slower than the fastest idle server (or to a busy server of the same speed), instead assign the job to the fastest idle server (if there is more than one fastest idle server, assign the job to one of them chosen uniformly at random).
This modification decreases the arrival rate to busy servers and increases the arrival rate to idle servers, which cannot destabilize the system.

We also present the following result, which amounts to a necessary condition for stability under the $\uniq$ querying rule and any assignment rule:
\begin{proposition}
For any dispatching policy using the $\uniq$ querying rule, the system is unstable if there exists a server class $i\in\Scal$ such that $\lambda > \mu_i/q_i^{d-1}$.
\end{proposition}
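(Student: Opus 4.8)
The plan is to isolate the class-$i$ subsystem and show that, \emph{regardless} of the assignment rule, it is forced to receive an arrival stream whose rate already exceeds the total service capacity of the class-$i$ servers whenever $\lambda > \mu_i/q_i^{d-1}$. The argument is a flow/conservation argument, and the key point is that instability can be forced through the querying rule alone.

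First, I would identify the crucial event. Since $\uniq$ is the member of $\iidq$ under which each of the $d$ queried servers is independently a class-$i$ server with probability $q_i$, the probability that \emph{all} $d$ queried servers belong to class $i$ is $q_i^d$. Call this event $E_i$. The decisive observation is that on $E_i$ every queried server is a class-$i$ server, so any assignment rule whatsoever has no option but to dispatch the arriving job to a class-$i$ server. Hence the jobs arriving on the event $E_i$ form a stream that is \emph{forced} onto the class-$i$ servers, independently of which assignment rule is in force. Because arrivals form a Poisson process of rate $\lambda k$ and each arrival independently induces $E_i$ with probability $q_i^d$, Poisson thinning shows that these forced arrivals constitute a Poisson process of rate $\lambda k q_i^d$, all of whose jobs join class-$i$ servers.

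Second, I would compare the forced input against capacity. Let $N_i(t)$ denote the total number of jobs present at the $k_i = q_i k$ class-$i$ servers at time $t$. Every forced arrival increments $N_i$, so the upward jump rate of $N_i$ is at least $\lambda k q_i^d$, while departures occur only through service completions at busy class-$i$ servers, each at rate $\mu_i$; since there are $k_i$ such servers, the total departure rate is at most $\mu_i k_i = \mu_i q_i k$, with equality only when all class-$i$ servers are busy. Therefore, once $N_i$ is large enough that all class-$i$ servers are busy, the net drift of $N_i$ is at least
\[
\lambda k q_i^d - \mu_i q_i k = q_i k\left(\lambda q_i^{d-1} - \mu_i\right) > 0,
\]
where the strict inequality is exactly the hypothesis $\lambda > \mu_i/q_i^{d-1}$. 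To conclude, I would couple $N_i(t)$ from below with a continuous-time birth--death process having up-rate $\lambda k q_i^d$ and down-rate $\mu_i q_i k$; as the up-rate strictly exceeds the down-rate, this process is transient and drifts to $+\infty$, so $N_i(t)\to\infty$ and the underlying Markov chain cannot be positive recurrent, i.e., the system is unstable.

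The one delicate point---and the main obstacle to a fully rigorous write-up---is the coupling step, since the true departure rate from the class-$i$ servers depends on how many of them are busy, which in turn depends on the assignment rule and the full system state. The clean remedy is to bound the class-$i$ departure rate by $\mu_i k_i$ \emph{uniformly} in the state and to use \emph{only} the forced arrivals as a lower bound on the input: the forced arrivals alone form a Poisson stream of rate exceeding this uniform capacity bound, so $N_i$ stochastically dominates a transient random walk irrespective of the non-forced arrivals or of any scheduling details at the servers. This makes the conclusion hold for \emph{every} dispatching policy that uses $\uniq$ querying, as claimed.
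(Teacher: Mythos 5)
Your proposal is correct and follows essentially the same argument as the paper: the key observation that an all-class-$i$ query (probability $q_i^d$) forces assignment to a class-$i$ server under any assignment rule, Poisson thinning to get a forced input rate of $\lambda k q_i^d$, and comparison against the capacity $\mu_i k_i$. The paper states this rate comparison directly and stops, while you additionally spell out the stochastic-domination/coupling step that makes the drift argument rigorous; this is a refinement, not a different route.
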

\begin{proof}
Under $\uniq$, a query mix consists of only class-$i$ servers---and hence, the arriving job \emph{must} be dispatched to a class-$i$ server under any assignment rule---with probability $q_i^d$. The total arrival rate to the class-$i$ subsystem is then greater than or equal to $\lambda k q_i^d$. The system is unstable if this total arrival rate is greater than the capacity of the class-$i$ subsystem, i.e., if $\lambda k q_i^d > \mu_i k_i$, or, equivalently, if $\lambda > \mu_i / q_i^{d-1}$.
\end{proof}

\subsection{Mean response time analysis}
\label{sec:cida-analysis}

We proceed to present a procedure for determining the mean response time $\ep[T]$ under ${\DP{\qr}{\ar}}$ for any static symmetric querying rule $\qr$ (i.e., any $\qr\in\genq$) and any $\ar\in\cida$ that yield a stable system.

We carry out all analysis in steady-state and rely on mean-field theory.
We let $k\to\infty$, holding $q_i$ fixed for all $i\in\Scal$; consequently, we also have $k_i\to\infty$ for all $i\in\Scal$. We further assume that \emph{asymptotic independence} holds in this limiting regime, meaning that (i)~the states of (i.e., the number of jobs at) all servers are independent, and (ii)~all servers of the same class behave stochastically identically (see Appendix B for simulation evidence in support of this assumption).  With the asymptotic independence assumption in place, we now find the overall mean response time as follows:

\begin{proposition}
\label{prop:ET}
Let $\li$ and $\lb$ denote respectively the arrival rates to idle and busy class-$i$ servers.
Then the overall system mean response time is 
\begin{equation}
\label{eq:ET}
    \ep[T] = \frac{1}{\lambda}\sum_{i=1}^s q_i\left(\frac{(1-\rho_i)\li+\rho_i\lb}{\mu_i-\lb}\right),
\end{equation}
where $\rho_i$ is the fraction of time that a class-$i$ server is busy, given by
\begin{align}\label{eq:rhoi}
\rho_i&\equiv\frac{\li}{\mu_i-\lb + \li}.
\end{align}
\end{proposition}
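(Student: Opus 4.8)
The plan is to exploit the asymptotic independence assumption to reduce the whole system to a single tagged server, and then stitch the per-server results together via Little's law. First I would fix a class $i$ and track the queue length $N_i$ at a single tagged class-$i$ server. Under asymptotic independence the tagged server evolves independently of all others, and ---crucially--- because the assignment rule lies in $\cida$ and is therefore length-blind beyond idleness, the rate at which this server is selected to receive a job depends only on whether it is idle or busy, not on its exact queue length (nor, after averaging, on the states of the other servers). Combined with the exponential ($\mu_i$) service times, $N_i$ is then a birth--death chain with birth rate $\li$ out of state $0$, birth rate $\lb$ out of every state $n\ge1$, and death rate $\mu_i$ out of every busy state.

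Second, I would solve this chain by detailed balance. Writing $\pi_n$ for the stationary probabilities, the cut balance equations give $\pi_1=(\li/\mu_i)\pi_0$ and $\pi_{n+1}=(\lb/\mu_i)\pi_n$ for $n\ge1$, so the busy-state probabilities decay geometrically with ratio $\lb/\mu_i$. Stability ($\lb<\mu_i$, established in Section~\ref{sec:stability}) makes this ratio strictly less than $1$, so the normalization series converges; carrying out the normalization yields $\pi_0=(\mu_i-\lb)/(\mu_i-\lb+\li)$, and hence $\rho_i=1-\pi_0=\li/(\mu_i-\lb+\li)$, which is exactly~\eqref{eq:rhoi}.

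Third, I would compute the mean queue length $\ep[N_i]=\sum_{n\ge1} n\pi_n$ by summing the resulting geometric series, and then verify the algebraic identity $\ep[N_i]=\big((1-\rho_i)\li+\rho_i\lb\big)/(\mu_i-\lb)$; here the numerator $(1-\rho_i)\li+\rho_i\lb$ is simply the unconditional arrival rate to the tagged server, weighting the two arrival rates by the idle and busy probabilities. Finally, I would apply Little's law to the entire system: the time-average number of jobs present is $\ep[N]=\sum_{i\in\Scal} k_i\,\ep[N_i]$, every arriving job is eventually served so the effective throughput equals the total arrival rate $\lambda k$, and therefore $\ep[T]=\ep[N]/(\lambda k)=\tfrac{1}{\lambda}\sum_{i} q_i\,\ep[N_i]$. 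Substituting the expression for $\ep[N_i]$ gives~\eqref{eq:ET}.

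The main obstacle is conceptual rather than computational: justifying that, under a $\cida$ assignment rule together with asymptotic independence, the arrival stream seen by a tagged class-$i$ server really is a state-modulated Poisson process taking only the two rates $\li$ and $\lb$, depending on the server solely through its idle/busy status. This is precisely where length-blindness of $\cida$ and the mean-field decoupling of server states do the work. Once this reduction to a two-rate birth--death chain is in place, the remaining steps ---solving the chain and invoking Little's law--- are routine, and the two displayed formulas drop out after elementary algebra.
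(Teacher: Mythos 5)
Your proof is correct, and it takes a genuinely different route from the paper's. Both arguments share the same crucial first step: Poisson splitting together with asymptotic independence and the length-blindness of $\cida$ reduce the problem to a single tagged class-$i$ server seeing state-dependent Poisson arrivals at rate $\li$ when idle and $\lb$ when busy. From there, the paper stays with response times: it observes that a busy class-$i$ server evolves exactly as an M/M/1 with arrival rate $\lb$ and service rate $\mu_i$, deduces $\ep[T_i]=1/(\mu_i-\lb)$ from the fact that a job, by virtue of its own presence, only ever experiences a busy system, obtains $\rho_i$ from the mean busy-period length via the Renewal Reward Theorem, and finally averages $\ep[T_i]$ over classes weighted by the fraction $q_i\lambda_i/\lambda$ of jobs routed to each class. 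You instead work with queue lengths: you solve the two-rate birth--death chain for its stationary distribution (geometric tail with ratio $\lb/\mu_i$), read off $\rho_i=1-\pi_0$, sum the series for $\ep[N_i]$, and convert to $\ep[T]$ with a single application of Little's law to the whole system (your algebraic identity for $\ep[N_i]$ is itself Little's law at the per-server level, since $(1-\rho_i)\li+\rho_i\lb$ is the unconditional arrival rate $\lambda_i$). Your route is more elementary---it needs no busy-period or renewal-reward machinery, avoids the correct but slightly delicate ``jobs only see busy systems'' argument, and makes the stability requirement $\lb<\mu_i$ appear transparently as convergence of the normalizing series. What it gives up is distributional information: the paper's identification of $T_i$ as an M/M/1 response time is precisely what allows its subsequent Remark to extract the Laplace transform of response time under $\fcfs$, whereas Little's law delivers only the mean.
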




\begin{proof}
First observe that under our querying and assignment rules, servers of the same class are equally likely to be queried and, within a class, servers with the same idle/busy status are equally likely to be assigned a job.  Hence, by Poisson splitting, it follows that (for any $i\in\Scal$) each class-$i$ server experiences status-dependent Poisson arrivals with rate $\li$\label{lii} when idle and rate $\lb$ when busy.  

Now observe that each class-$i$ server, when busy, operates exactly like a standard M/M/1 system (under the chosen work-conserving scheduling rule) with arrival rate $\lb$ and service rate $\mu_i$.  Since, by virtue of their own presence, jobs experience only busy systems, the mean response time experienced by jobs at a class-$i$ server---which we denote by $\mathbb E[T_i]$---is $1/(\mu_i-\lb)$.\label{T_i} Furthermore, standard M/M/1 busy period analysis gives the expected time of the busy period duration at a class-$i$ server as $\mathbb E[B_i]\equiv1/(\mu_i-\lb)$\label{B_i}; we note that the standard analysis of the M/M/1 queueing system also tells us that while $\mathbb E[B_i]=\mathbb E[T_i]$, $B_i$ and $T_i$ are \emph{not} identically distributed.

Applying the Renewal Reward Theorem immediately yields that $\rho_i$\label{rho_i} (the fraction of time that a class-$i$ server is busy) is as given in Equation~\ref{eq:rhoi} as claimed:
\begin{align}\label{eq:rhoi}
\rho_i&=\frac{\mathbb E[B_i]}{1/\li+\mathbb E[B_i]}=\frac{\li}{\mu_i-\lb + \li}.
\end{align}

Finally, we find the system's overall mean response time by taking a weighted average of the mean response times at each server class.  Let $\lambda_i\equiv(1-\rho_i)\li+\rho_i\lb$\label{li} denote the average arrival rate experienced by a class-$i$ server.  Recalling that $q_i=k_i/k$, it follows that the proportion of jobs that are sent to a class-$i$ server is $k_i\lambda_i/(k\lambda)=q_i\lambda_i/\lambda$, and hence 
\begin{equation}
\label{eq:ET}
    \ep[T] = \sum_{i=1}^s \left(\frac{q_i\lambda_i}{\lambda}\right)\ep[T_i]
    =\frac{1}{\lambda}\sum_{i=1}^s q_i\left(\frac{(1-\rho_i)\li+\rho_i\lb}{\mu_i-\lb}\right),
\end{equation}
which completes the proof.\hfill\qedsymbol

\end{proof}

\begin{remark}
Note that while \emph{mean} response times are insensitive to the choice of (size-blind) scheduling rule, the distribution (and higher moments) of response time do not exhibit this insensitivity.  The same method presented in this section can also allow one to readily obtain the Laplace Transform of response time under many work-conserving scheduling rules.  For example, under First Come First Served ($\mathsf{FCFS}$) scheduling one could use the result that $\widetilde{T_i}(w)=(\mu_i-\lambda_i)/(\mu_i-\lambda_i+w)$ 
for an M/M/1/$\mathsf{FCFS}$ with arrival and service rates $\lambda_i$ and $\mu_i$, respectively, to obtain the overall transform of response time $\widetilde{T}(w)$.  
\end{remark}

In order to use Proposition~\ref{prop:ET} to determine $\mathbb E[T]$ values, we must be able to compute the arrival rates $\li$ and $\lb$ for each $i\in\Scal$.  The following notation will prove useful in expressing these rates: for all  $i\in\Scalb\equiv\{1,2,\ldots,s+1\}$ and $\dvec\in\Dcal$, we let $\bi$ denote the probability that all queried servers that are faster than those in class $i$ are busy (i.e., all queried servers with classes in $\{1,2,\ldots,i-1\}$ are busy). It immediately follows that
\begin{align}\label{eq:b_i(d)}
\bi\equiv\mathbb P(A_1=\cdots=A_{i-1}=0|\Dvec=\dvec)=\prod_{\ell=1}^{i-1}\rho_{\ell}^{d_{\ell}}.
\end{align}
\begin{remark}
Note that for all $\dvec\in\Dcal$, we have $b_1(\dvec)=1$ as it is vacuously true that all queried servers faster than server 1 are busy as no such servers exist. Moreover, $\bsp$ denotes the probability that all queried servers are busy given that $\Dvec=\dvec$.
\end{remark}

In the following theorem, we present a pair of equations (parameterized by $i\in\Scal$) for $\li$ and $\lb$.
\begin{theorem}
\label{thm:lambdaIB}
For all $i\in\Scal$, the arrival rates to idle and busy class-$i$ servers (i.e., $\li$ and $\lb$, respectively), satisfy
\begin{align}
    \li&=\frac{\lambda}{q_i}\sum_{\dvec\in\Dcal}\left\{d_i\bi p(\dvec)\alpha_i(i,\dvec)\sum_{a_i=1}^{d_i}\binom{d_i-1}{a_i-1}\dfrac{(1-\rho_i)^{a_i-1}\rho_i^{d_i-a_i}}{a_i}\right\}\label{eq:li-new}\\
    \lb&=\frac{\lambda}{q_i\rho_i}\sum_{\dvec\in\Dcal}\left\{p(\dvec)\sum_{j=i+1}^{s+1}\bj\left(1-\rho_j^{d_j}\right)\alpha_i(j,\dvec)\right\},\label{eq:lb-new}
\end{align}
where we use the abuse of notation $\rsd\equiv0$.
\end{theorem}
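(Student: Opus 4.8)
The plan is to derive each equation by a tagged-server argument: fix a single marked class-$i$ server, condition on its idle/busy status, and compute the rate at which jobs are dispatched to it, exploiting the symmetry of the querying rule together with the asymptotic independence assumption (servers independent, same-class servers stochastically identical).

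First I would fix a class-$i$ server $v$ and condition on $v$ being idle. Under a symmetric querying rule, given that the realized class mix is $\dvec$ (probability $p(\dvec)$), the $d_i$ queried class-$i$ servers form a uniformly random size-$d_i$ subset of the $k_i$ class-$i$ servers; hence $v$ is queried with probability $d_i/k_i = d_i q_i^{-1} k^{-1}$, and conditional on $v$ being queried the other $d_i-1$ queried class-$i$ servers are a uniform sample of the remaining ones. A job (arriving at rate $\lambda k$) is dispatched to the idle server $v$ precisely when (i) every queried server faster than class $i$ is busy, (ii) the assignment rule selects class $i$, and (iii) $v$ is the idle class-$i$ server chosen to receive the job. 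By asymptotic independence event (i) has probability $\bi=\prod_{\ell<i}\rho_\ell^{d_\ell}$ and is independent of the class-$i$ servers; conditioned on (i) and on $v$ idle we have $J=i$, so (ii) holds with probability $\alpha_i(i,\dvec)$. For (iii), if $m$ of the other $d_i-1$ queried class-$i$ servers are idle---an event of probability $\binom{d_i-1}{m}(1-\rho_i)^m\rho_i^{d_i-1-m}$---then $a_i=m+1$ idle class-$i$ servers are queried and $v$ is chosen with probability $1/(m+1)$. Multiplying these factors by $\lambda k$ and re-indexing via $a_i=m+1$ yields Equation~\eqref{eq:li-new}.

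Next I would repeat the argument conditioning on $v$ being busy to obtain $\lb$. The structural difference is that a job reaches a busy class-$i$ server only when the fastest idle queried server lies in some class $j$ strictly slower than $i$, or when all queried servers are busy (the case $j=s+1$), since the pruning rule forbids assigning to class $i$ whenever a faster server is idle. Conditioning on $v$ busy and queried, the event $J=j$ requires all queried servers in classes $1,\dots,j-1$ to be busy---including all $d_i$ queried class-$i$ servers---and at least one idle server in class $j$. Because $v$ is already known to be busy, only $d_i-1$ additional class-$i$ servers must be busy, which converts the naive factor $\bj$ into $\bj/\rho_i$; the class-$j$ idleness event contributes $1-\rho_j^{d_j}$, and the assignment rule selects class $i$ with probability $\alpha_i(j,\dvec)$. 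Since $J=j>i$ forces all $d_i$ queried class-$i$ servers to be busy, $v$ is chosen with probability exactly $1/d_i$, cancelling the $d_i$ from the querying probability. Summing over $j\in\{i+1,\dots,s+1\}$ and multiplying by $\lambda k\,(d_i/k_i)$ gives Equation~\eqref{eq:lb-new}, where the convention $\rsd\equiv0$ correctly renders the all-busy term as $\bsp\cdot 1\cdot\alpha_i(s+1,\dvec)$.

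I expect the main obstacle to be the bookkeeping in step (iii) of the $\li$ derivation: one must average the uniform split $1/a_i$ over the random number of co-idle class-$i$ servers, conditioned on the tagged server itself being idle, so that the binomial runs over the remaining $d_i-1$ servers rather than all $d_i$. This conditioning is exactly what distinguishes the per-server rate $\li$ from the coarser total dispatch rate into the idle class-$i$ pool, and aligning the $\binom{d_i-1}{a_i-1}$ with the $1/a_i$ weighting is the delicate point. The $\lb$ computation is comparatively clean, since the $J>i$ condition deterministically forces all $d_i$ queried class-$i$ servers to be busy and collapses the split to a single $1/d_i$; there the only subtlety is tracking that conditioning on $v$ busy removes one power of $\rho_i$ from $\bj$.
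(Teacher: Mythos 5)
Your proposal is correct and follows essentially the same route as the paper, which organizes the identical tagged-server argument into three lemmas: one reducing $\li$ and $\lb$ to per-query assignment probabilities via the $d_i/k_i$ querying probability and Poisson splitting, one computing the idle-case probability with the same $\operatorname{Binomial}(d_i-1,1-\rho_i)+1$ distribution for $A_i$ and the $1/a_i$ tie-break, and one computing the busy-case probability with the same $\bj(1-\rho_j^{d_j})/\rho_i$ conditioning correction and $1/d_i$ split. All the delicate points you flag (the $\binom{d_i-1}{a_i-1}$ indexing against the $1/a_i$ weight, and the removal of one factor of $\rho_i$ when conditioning on the tagged server being busy) are handled in the paper exactly as you describe.
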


Theorem~\ref{thm:lambdaIB} yields $2s$ equations, which we can solve as a system for the $2s$ unknowns $\{\li\}_{i\in\Scal}$ and $\{\lb\}_{i\in\Scal}$, where we take $\{\rho_i\}_{i=1}^{s}$ and $\{\bi\}_{i=1}^{s+1}$ to be as defined by Equations~\eqref{eq:rhoi} and \eqref{eq:b_i(d)}, respectively.  With the $\li$ and $\lb$ (and consequently, the $\rho_i$) values determined for all $i\in\Scal$, we can then compute $\ep[T]$ directly from Equation~\eqref{eq:ET}, completing our analysis.

The rest of this section is devoted to proving Theorem~\ref{thm:lambdaIB}, by way of two lemmas.  Both lemmas will be concerned with the quantities $\ri$ and $\rb$, defined for all $i\in\Scal$ as follows: for all $\dvec\in\Dcal$ for which $d_i>0$, $\ri$ (respectively, $\rb$)\label{ribd}\label{riid} denotes the probability that the job is assigned to the tagged (class-$i$) server under query mix $\dvec$ given that the tagged server is queried and idle (respectively, busy).  Meanwhile, for all $\dvec\in\Dcal$ for which $d_i=0$, we adopt the convention where $\ri\equiv0$ and $\rb\equiv0$.

\begin{lemma}
\label{lemma:lambdaIB}
The arrival rates $\li$ and $\lb$ are given by:
\begin{align}
    \label{eq:li}
    \li&=\frac{\lambda}{q_i}\sum_{\mathbf \dvec\in\Dcal}d_ip(\dvec)\ri\\ 
    \label{eq:lb}
    \lb&=\frac{\lambda}{q_i}\sum_{\mathbf \dvec\in\Dcal}d_ip(\dvec)\rb.
\end{align}
\end{lemma}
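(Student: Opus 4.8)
The plan is to compute $\li$ (and, by an identical argument, $\lb$) through a tagged-server decomposition of the Poisson arrival stream. From the proof of Proposition~\ref{prop:ET} we already know that each class-$i$ server experiences status-dependent Poisson arrivals, so it suffices to fix a single tagged class-$i$ server and compute the rate at which arriving jobs are routed to it \emph{conditional on its being idle}. Jobs arrive to the whole system at rate $\lambda k$; I would condition each such arrival successively on the realized query mix $\dvec$, on whether the tagged server is among the queried class-$i$ servers, and on whether the assignment rule then sends the job to it.

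First I would invoke the fact that the querying rule is static—hence independent of the state of any server—to assert that, conditional on the tagged server being idle, the query mix is still $\dvec$ with probability $p(\dvec)$. Next, because the querying rule is symmetric, given query mix $\dvec$ the $d_i$ queried class-$i$ servers form a uniformly random $d_i$-subset of the $k_i$ class-$i$ servers, so the tagged server is among them with probability $d_i/k_i$ (in the mean-field limit $k_i\to\infty$ we ignore the vanishing chance of querying one server twice). Conditional on the tagged server being queried and idle, the job is assigned to it with probability $\ri$ by the definition of that quantity. Multiplying these factors and summing over $\dvec\in\Dcal$ gives
\begin{equation*}
\li=\lambda k\sum_{\dvec\in\Dcal}p(\dvec)\,\frac{d_i}{k_i}\,\ri,
\end{equation*}
and substituting $k_i=q_i k$ yields Equation~\eqref{eq:li}. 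Replacing ``idle'' by ``busy'' and $\ri$ by $\rb$ throughout reproduces Equation~\eqref{eq:lb}; the convention $\ri=\rb=0$ when $d_i=0$ makes the corresponding terms vanish automatically, so the sums may be taken over all of $\Dcal$.

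The main obstacle is justifying that the three conditioning steps genuinely factor—i.e., that conditioning on the tagged server being idle distorts neither the query-mix distribution nor the distribution of the \emph{other} queried servers' idle/busy statuses that is implicit in $\ri$. The first is immediate from staticness and symmetry of the querying rule, which make the query entirely independent of server states. The second is precisely where the asymptotic-independence assumption of Section~\ref{sec:cida-analysis} enters: it guarantees that the tagged server's status is independent of the remaining queried servers' statuses, so that $\ri$—defined as the assignment probability given the tagged server is queried and idle—already uses the correct joint distribution of the other servers. I would flag this reliance explicitly rather than belabor the limiting argument, since the heavier mean-field justification is deferred to the appendix.
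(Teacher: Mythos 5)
Your proposal is correct and follows essentially the same tagged-server argument as the paper: condition on the query mix $\dvec$ (probability $p(\dvec)$, unaffected by the tagged server's status since querying is static), use symmetry to get the inclusion probability $d_i/k_i$, multiply by $\ri$ (resp.\ $\rb$), and sum over $\Dcal$ with $k_i=q_ik$. Your added remark that the deferral of all state-dependence into $\ri$ and $\rb$ implicitly rests on asymptotic independence is a fair point of emphasis, but it is elaboration on, not a departure from, the paper's proof.
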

\begin{proof}
Recall that the rate at which the tagged server is queried does not depend on its idle/busy status.  
Given query mix $\dvec$, the probability that the query includes the tagged server is $d_i/k_i$ (by symmetry).  
Because a query is of mix $\dvec$ with probability $p(\dvec)=\pr(\Dvec=\dvec)$, the tagged server is queried at rate \[\lambda k\sum_{\dvec\in\Dcal}\left(\frac{d_i}{k_i}\right)p(\dvec)=\frac{\lambda}{q_i}\sum_{\dvec\in\Dcal}d_i p(\dvec).\]  Of course, the tagged server's presence in the query does not guarantee that the job will be assigned to it.
The arrival rate \emph{from queries with mix $\dvec$} observed by the tagged server when it is idle is \[\lambda k \left(\frac{d_i}{k_i}\right)p(\dvec)\ri=\left(\frac{\lambda}{q_i}\right)d_i p(\dvec)\ri,\] with the analogous expression holding when the tagged server is busy. It follows that the overall arrival rates to an idle and busy class-$i$ server (i.e., $\li$ and $\lb$, respectively) are as claimed.\hfill\qedsymbol
\end{proof}

\begin{lemma}
\label{lemma:ri}
For all $i\in\Scal$ and all $\dvec\in\Dcal$ such that $d_i>0$, the probability that the job is assigned to the tagged class-$i$ server under query mix $\dvec$ given that the tagged server is queried and idle is
\begin{align}
\label{eq:ri}
\ri&=\bi\alpha_i(i,\dvec)\sum_{a_i=1}^{d_i}\binom{d_i-1}{a_i-1}\dfrac{(1-\rho_i)^{a_i-1}\rho_i^{d_i-a_i}}{a_i}.
\end{align}
\end{lemma}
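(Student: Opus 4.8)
The plan is to condition on the query mix $\dvec$ and on the tagged class-$i$ server being both queried and idle, and then to decompose the event ``the job is assigned to the tagged server'' into a short chain of sub-events whose probabilities can be read off directly from the $\cida$ structure and the asymptotic independence assumption. First I would invoke the pruning rule: the job can be assigned to \emph{any} class-$i$ server only if no faster queried server is idle, i.e., only if $A_j=0$ for all $j<i$. Since the tagged server is itself idle, this condition is exactly the statement that the fastest idle queried server is of class $i$, i.e., $J=i$. By asymptotic independence, each of the $d_\ell$ queried class-$\ell$ servers (for $\ell<i$) is busy independently with probability $\rho_\ell$, so the probability that all faster queried servers are busy is $\prod_{\ell=1}^{i-1}\rho_\ell^{d_\ell}=\bi$, as recorded in Equation~\eqref{eq:b_i(d)}.

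Next, conditioned on $J=i$, the $\cida$ assignment rule sends the job to a class-$i$ server with probability $\alpha_i(i,\dvec)$ (using the abuse of notation introduced in Section~\ref{sec:cida-formal}; note this value is fully determined by $J=i$ and $\dvec$, and in particular does not depend on how many class-$i$ servers are idle). Given that class $i$ is selected, the rule assigns the job uniformly at random among the idle class-$i$ servers; since the tagged server is idle, it is one of $a_i$ equally likely candidates and thus receives the job with probability $1/a_i$. The last ingredient is the conditional law of $A_i$ given that the tagged server is idle: the remaining $d_i-1$ queried class-$i$ servers are each idle independently with probability $1-\rho_i$, so $A_i-1\sim\mathrm{Binomial}(d_i-1,1-\rho_i)$, giving $\pr(A_i=a_i\mid\text{tagged idle})=\binom{d_i-1}{a_i-1}(1-\rho_i)^{a_i-1}\rho_i^{d_i-a_i}$.

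Finally I would assemble these pieces by the law of total probability over $a_i\in\{1,\ldots,d_i\}$. Because the event that all faster queried servers are busy and the factor $\alpha_i(i,\dvec)$ are independent of $a_i$, they factor out of the sum, yielding
\[
\ri=\bi\,\alpha_i(i,\dvec)\sum_{a_i=1}^{d_i}\binom{d_i-1}{a_i-1}\frac{(1-\rho_i)^{a_i-1}\rho_i^{d_i-a_i}}{a_i},
\]
which is the claimed identity. The main obstacle is getting the conditioning exactly right: one must recognize that conditioning on the tagged server being idle shifts the binomial count onto the \emph{other} $d_i-1$ queried class-$i$ servers (hence the $\binom{d_i-1}{a_i-1}$ rather than $\binom{d_i}{a_i}$, together with the uniform tie-breaking factor $1/a_i$), and one must carefully combine the pruning rule (to force $J=i$, so that $\alpha_i(i,\dvec)$ is the relevant assignment probability) with the asymptotic independence assumption (to treat the busy/idle statuses of distinct queried servers as independent and to factor $\bi$ and $\alpha_i(i,\dvec)$ out of the sum). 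Once these are in place, the remaining manipulation is routine.
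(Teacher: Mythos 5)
Your proposal is correct and follows essentially the same route as the paper's proof: the same decomposition into the factor $\bi$ (all faster queried servers busy, by asymptotic independence), the assignment probability $\alpha_i(i,\dvec)$, the uniform tie-breaking factor $1/a_i$, and the law of total probability over $a_i$ with $A_i-1\sim\operatorname{Binomial}(d_i-1,1-\rho_i)$ given the tagged server is idle. Your direct derivation of this shifted-binomial law from the independence of the other $d_i-1$ queried class-$i$ servers is exactly the conditioning the paper relies on.
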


\begin{proof}
 Observe that since we are assuming that the assignment policy $\ar\in\cida$, the job can be assigned to the tagged server only if all faster servers in the query are busy, which occurs with probability $\bi$ (see Equation~\ref{B_i} for details) for a given query mix $\dvec\in\Dcal$.  
If this is the case, then with probability $\alpha_i(i,\dvec)$ the job is assigned to an idle class-$i$ server chosen uniformly at random; hence, the tagged server is selected among the $a_i$ idle class-$i$ servers with probability $1/a_i$.
Enumerating over all possible cases of $A_i=a_i$ when the tagged class-$i$ server is idle, we find the probability that the tagged server is assigned the job when queried with mix $\dvec$:
\begin{align}
\ri&=\bi\alpha_i(i,\dvec)\sum_{a_i=1}^{d_i}\frac{\mathbb P(A_i=a_i|\Dvec=\dvec,\mbox{tagged class-$i$ server is idle})}{a_i}\notag\\
&=\bi\alpha_i(i,\dvec)\sum_{a_i=1}^{d_i}\binom{d_i-1}{a_i-1}\dfrac{(1-\rho_i)^{a_i-1}\rho_i^{d_i-a_i}}{a_i},
\end{align}
where the latter equality follows from the fact that $A_i\ge1$ when the tagged idle server is busy, and so \[(A_i|\Dvec=\dvec,\mbox{tagged class-$i$ server is idle})\sim(A_i|\Dvec=\dvec,A_i\ge1)\sim\operatorname{Binomial}(d_i-1,1-\rho_i)+1,\] which is in turn a consequence of our asymptotic independence assumption.\hfill\qedsymbol
\end{proof}



\begin{lemma}
\label{lemma:rb}
For all $i\in\Scal$ and all $\dvec\in\Dcal$ such that $d_i>0$, the probability that the job is assigned to the tagged class-$i$ server under query mix $\dvec$ given that the tagged server is queried and busy is
\begin{align}\label{eq:rb}
\rb= &\frac{1}{d_i\rho_i}\sum_{j=i+1}^{s+1}\bj\left(1-\rho_j^{d_j}\right)\alpha_i(j,\dvec),
\end{align}
where (as in the statement of Theorem~\ref{thm:lambdaIB}) we use the abuse of notation $\rsd\equiv0$.
\end{lemma}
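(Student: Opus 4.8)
The plan is to mirror the structure of the proof of Lemma~\ref{lemma:ri}, conditioning on the realized value of $J$ (the class of the fastest idle queried server) and exploiting the asymptotic independence of server states. The crucial structural observation is that for the \emph{busy} tagged class-$i$ server to be assigned the job, we must have $J>i$: the pruning rule forces all faster queried servers to be busy (so $J\ge i$), while the within-class tie-breaking rule---which favors idle servers---forces $A_i=0$ (otherwise an idle class-$i$ server would capture the job instead of the busy tagged one), and hence $J\neq i$. Thus only the values $j\in\{i+1,\ldots,s+1\}$ can contribute, which already explains the range of summation in the claimed expression.

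First I would condition on $J=j$ for a fixed $j\in\{i+1,\ldots,s+1\}$. Given this event together with the tagged server being busy, every queried class-$i$ server is busy, so there are exactly $d_i$ eligible busy class-$i$ servers; class $i$ is selected with probability $\alpha_i(j,\dvec)$, after which the tagged server is chosen uniformly among the $d_i$ busy class-$i$ servers, i.e.\ with probability $1/d_i$. The conditional assignment probability is therefore $\alpha_i(j,\dvec)/d_i$.

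Next I would compute $\mathbb P(J=j\mid \Dvec=\dvec,\ \text{tagged class-$i$ server busy})$ using asymptotic independence. For $j\le s$, the event $J=j$ requires all queried servers of classes $1,\ldots,j-1$ to be busy and at least one idle queried server of class $j$. Conditioning on the tagged server being busy affects only the class-$i$ factor: the remaining $d_i-1$ class-$i$ servers must be busy, contributing $\rho_i^{d_i-1}$ rather than $\rho_i^{d_i}$. The key algebraic step is the identity $\bj=\bi\,\rho_i^{d_i}\prod_{\ell=i+1}^{j-1}\rho_\ell^{d_\ell}$, which lets me rewrite this conditional probability as $(\bj/\rho_i)\,(1-\rho_j^{d_j})$; the factor $1/\rho_i$ is precisely the effect of removing one busy class-$i$ server from the product. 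Summing $\tfrac{1}{d_i}\alpha_i(j,\dvec)\cdot(\bj/\rho_i)(1-\rho_j^{d_j})$ over $j\in\{i+1,\ldots,s+1\}$ then yields the claimed formula. The boundary case $j=s+1$ (all queried servers busy) is absorbed uniformly by the stated convention $\rho_{s+1}^{d_{s+1}}\equiv 0$, which makes the factor $1-\rho_{s+1}^{d_{s+1}}$ equal to $1$, matching $\mathbb P(\text{all queried busy}\mid\text{tagged busy})=\bsp/\rho_i$.

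I expect the main obstacle to be the bookkeeping in $\mathbb P(J=j\mid \text{tagged busy})$---specifically, correctly accounting for the shift $\rho_i^{d_i}\mapsto\rho_i^{d_i-1}$ induced by conditioning on the tagged server's busy status, and verifying that this produces exactly the $1/\rho_i$ prefactor in the statement. Once the identity for $\bj$ is invoked, the remainder is a routine application of independence and the definition of $\bj$.
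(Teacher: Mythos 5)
Your proposal is correct and follows essentially the same route as the paper's proof: conditioning on $J$, obtaining the conditional assignment probability $\alpha_i(j,\dvec)/d_i$, and computing $\pr(J=j\mid\Dvec=\dvec,\ \text{tagged busy})=\bj\bigl(1-\rho_j^{d_j}\bigr)/\rho_i$ via asymptotic independence, with the $1/\rho_i$ factor arising exactly as you describe from replacing $\rho_i^{d_i}$ by $\rho_i^{d_i-1}$. Your treatment of the $j=i$ case (via the idle-preference tie-breaking rule) is in fact slightly more careful than the paper's, which simply asserts $\alpha_i(j,\dvec)=0$ for all $j\le i$, but the substance is identical.
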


\begin{proof}
We determine $\rb$ by conditioning on the random variable $J$, denoting the class of the fastest idle queried server (see Section~\ref{sec:cida-formal} for details).  Recall that $J\equiv\min\{j\in\Scal\colon A_j>0\}$, where we take $\min\emptyset\equiv s+1$, so that $J=s+1$ whenever all servers are busy. Letting $\rbj{j}$ denote the probability that the job is assigned to the tagged (class-$i$) server under query mix $\dvec$ given that $J=j$ and the tagged server is queried and busy, the law of total probability yields
\begin{align}\label{eq:rb_total_prob}
\rb&=\sum_{j=1}^{s+1} \rbj{j}\cdot\pr(J=j|\Dvec=\dvec,\mbox{tagged class-$i$ server is busy}).
\end{align}

In order to compute $\rb$, we first observe that, for all $j\in\Scalb$, the job is assigned to some class-$i$ server with probability $\alpha_i(j,\dvec)$ (recall that $\alpha_i(s+1,\dvec)\equiv\alpha_i(\zerovec,\dvec)$ in our abuse of notation), and hence the probability that the job is assigned to \emph{some} class-$i$ server given that $J=j$ is
\begin{align}
\rbj{j}&=\frac{\alpha_i(j,\dvec)}{d_i}\label{eq:rb-conditional}.
\end{align}
It now remains to determine $\pr(J=j|\Dvec=\dvec,\mbox{tagged class-$i$ server is busy})$.  First, we address the case where $J=j$ for some $j\in\Scal$.  Since $\ar\in\cida$,  whenever $j\in\{1,2,\ldots,i\}$, we must have $\alpha_i(j,\dvec)=0$ as the query contains an idle server at least as fast as the tagged (class-$i$) server (which happens to be busy).  Hence, we may restrict attention to $j>i$, in which case---recalling that $\bj$ denotes the probability that all queried servers faster than the tagged (class-$i$ server) are busy, as given by Equation~\eqref{eq:b_i(d)}---we have \begin{align}
\pr(J=j|\Dvec=\dvec,\mbox{tagged class-$i$ server is busy})=\mathbb P(J=j|\Dvec=\dvec,A_i<D_i)&=\frac{\bj\left(1-\rho_j^{d_j}\right)}{\rho_i},\label{eq:pJ=j}
\end{align}
where we recall that $\rsd\equiv0$ and note that the $1/\rho_i$ factor is introduced due to the fact that $A_i<D_i$ (because the server is known to be busy).

We can now combine Equations~\eqref{eq:rb_total_prob}, \eqref{eq:rb-conditional}, and \eqref{eq:pJ=j} together with the fact that $\alpha_i(j,\dvec)=0$ whenever $j\in\{1,2,\ldots,i\}$ in order to obtain the claimed formula for $\rb$.\hfill\qedsymbol
\end{proof}

The proof of Theorem~\ref{thm:lambdaIB} follows from Lemmas~\ref{lemma:lambdaIB},~\ref{lemma:ri},~and~\ref{lemma:rb}, together with the convention that $\ri\equiv0$ and $\rb\equiv0$ whenever $d_i=0$.

\section{Finding optimal dispatching under $\cida$ assignment}
\label{sec:optimization}
Based on the analysis in the previous section, we can now write a nonlinear program for determining optimal dispatching policies in the $\DP{\qrf}{\arf}$ family for various choices of $\qrf$.  This amounts to jointly determining an optimal probability distribution $p$ over query mixes and an optimal family of functions constituting the assignment rule $\alpha_i$ (for $i\in\Scal$).

Each choice of querying rule family $\qrf$ yields a different optimization problem. All of these optimization problems can be formulated to share a common objective function.  Meanwhile, the set of permissible querying rules (i.e., the chosen querying rule family $\qrf$) restricts the set of feasible decision variables.  Naturally, formulating problems in this way, if $\qrf'\subseteq\qrf$,  then the feasibility region of the optimization problem associated with $\DP{\qrf'}{\cida}$ is contained within that associated with $\DP{\qrf}{\cida}$, and hence, all such optimization problems have feasibility regions contained within that of $\DP{\genq}{\cida}$.  Consequently, if we can solve the problem associated with $\DP{\genq}{\cida}$, then solving a  problem associated with $\DP{\qrf}{\cida}$ for another querying rule family $\qrf$ will never yield a policy that results in a strictly lower mean response time than the one we have already found.  In fact, the problem associated with $\DP{\genq}{\cida}$ can be viewed as a ``relaxation'' of the others.

While the above discussion seems to suggest that one need only study the optimization problem associated with $\DP{\genq}{\cida}$, there are several reasons for studying problems associated with $\DP{\qrf}{\cida}$ for other querying rule families, $\qrf\subseteq\genq$.  First, as discussed in Appendix D, many of the feasibility regions associated with the other optimization problems can be expressed as polytopes in a space with far fewer dimensions than those studied under $\genq$, suggesting that these other problems might be solved more efficiently.  Numerical evidence that we will present in Section~\ref{sec:graphs} corroborates this suggestion.  Second, as we shall discuss in detail throughout Section~\ref{sec:numerical}, these problems are often prohibitively difficult to solve, so we rely on heuristics to find strong performing (although not necessarily optimal) solutions within each family of dispatching policies.  Therefore, it will sometimes be the case that even though $\qrf'\subseteq\qrf$, a heuristic (rather than truly optimal) ``solution'' to a problem associated with $\DP{\qrf'}{\cida}$ may outperform those obtained from $\DP{\qrf}{\cida}$.  Finally, some families of rules with simpler structures may be more desirable for practical implementation purposes.

Before presenting our optimization problems, we note that we have not consistently formulated each problem as a restriction on the problem associated with $\DP{\genq}{\cida}$.  While for any $\DP{\qrf}{\cida}$ there exists at least one formulation of the optimization problem that resembles that of $\DP{\genq}{\cida}$ with additional constraints, we have opted for a more ``natural'' approach where we tailor the optimization problem for each dispatching policy $\DP{\qrf}{\cida}$ to the structure of the choice of querying rule family $\qrf$.

\begin{remark}
The optimization problems that we study are of the form where we minimize $f\colon\mathcal X\to\mathbb R$ on the feasible set $\mathcal X$ such that for each $x\in\mathcal X$, $x$ corresponds to a dispatching policy that yields an overall mean response time $\mathbb E[T]=f(x)$.  We say that two optimization problems with feasible regions $\mathcal X_1$ and $\mathcal X_2$, respectively are \emph{equivalent formulations} of one another if both (i) for each $x_1\in\mathcal X_1$, there exists an $x_2\in\mathcal X_2$ such that the policies corresponding to $x_1$ in the first problem and $x_2$ in the second yield stochastically identical systems, and (ii) the analogous statement holds for each $x_2\in\mathcal X_2$.  While all formulations of a given problem have solutions that yield identical system behavior, some formulations may be more tractable (or more amenable to heuristic analysis) than others.
\end{remark}


\subsection{Finding optimal $\DP{\genq}{\cida}$ dispatching policies}

We begin by considering the case where $\qrf=\genq$, i.e., the case where we allow for all possible (static symmetric) querying rules, where all functions $p\colon\mathcal D\to[0,1]$ are valid so long as $\sum_{\dvec\in\Dcal} p(\dvec)=1$. 

Since both $p$ and all of the $\alpha_i$ functions take arguments from a domain with finitely many elements, we would like to treat each \emph{evaluation} of these functions as a decision variable, i.e., we would like to treat $p(\dvec)$ for each $\dvec\in\Dcal$ and $\alpha_i(\avec,\dvec)$ for each triple $(i,\avec,\dvec)\in\Scal\times\Acal\times\Dcal$ (or $\alpha_i(j,\dvec)$ for each triple $(i,j,\dvec)\in\Scal\times\Scalb\times\Dcal$ when using our abuse of notation) as decision variables, with appropriate constraints.  However, as we have discussed earlier, we have pruned the decision space so that $\alpha_i(\avec,\dvec)$ depends only on the class of the fastest idle queried server $J\equiv\min\{j\in\Scal\colon A_j>0\}$ realized under the event $(\Avec,\Dvec)=(\avec,\dvec)$ and on the (realized value of the) set of classes of queried servers that are faster than class-$J$ servers $\{j< J\colon D_j>0\}$ under the same event.  For example consider a setting where $s=4$ and $d=6$, where $\dvec_1=(4,0,1,1)$, $\avec_1=(0,0,1,1)$, $\dvec_2=(2,0,3,1)$, and $\avec_2=(0,0,3,0)$.  Under both the events $(\Avec,\Dvec)=(\avec_1,\dvec_1)$ and $(\Avec,\Dvec)=(\avec_2,\dvec_2)$, we have $J=3$, while $\{j< J\colon D_j>0\}=\{1\}$, and so we must have $\alpha_i(\avec_1,\dvec_1)=\alpha_i(\avec_2,\dvec_2)$---and equivalently, using our abuse of notation, we must have $\alpha_i(3,\dvec_1)=\alpha_i(3,\dvec_2)$---for all $i\in\{1,2,3,4\}$.  

The pruning described above could be enforced in our optimization problem through the introduction of constraints, but we may also approach pruning more directly by reducing the set of decision variables.  We opt for the latter, to which end we introduce the map  $\gamma\colon\Scalb\times\Dcal\to\Dcal$.  In order to define $\gamma$, let $I\{\cdot\}$ denote the indicator function, let $\evec_i$ denote the $i$-th $s$-dimensional unit vector (so that, e.g., when $s=4$, we have $\evec_3\equiv(0,0,1,0)$) and let $h(\dvec)\equiv\min\{\ell\in\Scal\colon d_\ell>0\}$\label{h(d)} denote the class of the fastest queried server (regardless of whether this server is idle or busy).  The map $\gamma$ is defined as follows: \[\gamma(j,\dvec)\mapsto\sum_{i=1}^sI\{d_i>0\mbox{ and } i\le j\}\evec_i+\left(d-\sum_{i=1}^s I\{d_i>0\mbox{ and }i\le j\}\right)\evec_{h(\dvec)},\]\label{gamma}  (so that, e.g., when $s=d=8$, we have $\gamma(5,(0,2,1,0,3,0,2,0))=(0,6,1,0,1,0,0,0))$). Given some $j\in\Scalb$ and $\dvec\in\Dcal$, $\gamma(j,\dvec)$ is the unique query mix with the maximum possible number of queried class-$h(\dvec)$ servers such that the realized value of the set $\{j< J\colon D_j>0\}$ is the same under events $(J,\Dvec)=(j,\dvec)$ and $(J,\Dvec)=(j,\gamma(j,\dvec))$---thus guaranteeing that $\alpha_i(j,\dvec)$ and $\alpha_i(j,\gamma(j,\dvec))$ are identical due to the pruning.  We note that the fact that the number of class-$h(\dvec)$ queried servers is maximized is not of any particular significance; rather, the map $\gamma$ allows us to specify a unique query mix to act as a ``representative'' for all query mixes that would be treated in the same way by the assignment rule under a given realization of the random variable $J$.  Returning to our optimization problem, observe that we can reduce the dimensionality of the feasible region by assigning values only to those $\alpha_i(j,\dvec)$ when $(i,j,\dvec)\in\Tcal$ where the set $\Tcal$ represents a pruned set of triples $(i,j,\dvec)$, for which each $\alpha_i(j, \dvec)$ can be assigned a distinct nonzero value in formulating an assignment rule:
\begin{align}\label{eq:tcal}
\Tcal\equiv\left\{(i,j,\dvec)\in\Scal\times\Scalb\times\Dcal\colon i\le j,\,d_i>0,\,(j\le s)\implies d_j>0,\,\gamma(j,\dvec)=\dvec\right\}.
\end{align}
Meanwhile, wherever the optimization problem would make reference to $\alpha_i(j,\dvec)$, we instead write the decision variable $\alpha_i(j,\gamma(j,\dvec))$ as both values are the same.  Furthermore, as defined in Equation~\eqref{eq:tcal}, $\Tcal$ excludes triples $(i,j,\dvec)\in\Scal\times\Scalb\times\Dcal$ where (i) $j<i$, (ii) $d_i=0$,  or (iii) $d_j=0$ and $j\neq s+1$.  Defining $\Tcal$ in such a way allows us to omit $\alpha_i(j,\dvec)$ for such triples, as all of these values must be $0$ (see Section~\ref{sec:lbia} for details).  In order to write the $\sum_{i=1}^s\alpha_i(j,\dvec)=1$ constraints concisely, without reference to $\alpha_i(j,\dvec)$ for triples $(i,j,\dvec)\not\in\Tcal$, we need a way to specify those $(j,\dvec)$ pairs that can form a triple $(i,j,\dvec)\in\Tcal$ with one or more classes $i\in\Scal$, so we also introduce the notation $\Pcal$ to denote such pairs:
\begin{align}\label{eq:pcal}
\Pcal\equiv\left\{(j,\dvec)\in\Scalb\times\Dcal\colon(\exists i\in\Scal\colon (i,j,\dvec)\in\Tcal)\right\}.
\end{align}
Similarly, in expressing the inner sum in Equation~\eqref{eq:lb-new}, we avoid making reference to the same forbidden triples and ensure that $j\ge i+1$, by introducing the following notation for any fixed $(i,\dvec)\in\Scal\times\Dcal$: 
\begin{align}\label{eq:Jid}
\Jcal_i(\dvec)&\equiv\{j\in\{i+1,i+2,\ldots,s+1\}\colon (i,j,\gamma(j,\dvec))\in\Tcal\}.
\end{align}

Finally, building upon our analysis in Section~\ref{sec:analysis} (including requiring $\lb <\mu_i$ for all $i\in\Scal$ in order to guarantee stability), we have the following optimization problem:

\begin{tcolorbox}
{\small \textbf{Optimization Problem for the $\DP{\genq}{\cida}$ family}

Given values of $s$, $d$, $\lambda$, and $\mu_i$ and $q_i$ (both given for all $i\in\Scal$),  determine \textbf{nonnegative} values of the decision variables $\li$ and $\lb$ (both for all $i\in\Scal$), $p(\dvec)$ (for $\dvec\in\Dcal$), and $\alpha_i(j,\dvec)$ (for all $(i,j,\dvec)\in\Tcal$) that solve the following nonlinear program:
\begin{align*}
    \min\quad&\frac{1}{\lambda}\sum_{i=1}^s q_i\left(\frac{(1-\rho_i)\li+\rho_i\lb}{\mu_i-\lb}\right)\\
    \mathrm{s.t.}\quad&  \li=\frac{\lambda}{q_i}\sum_{\dvec\in\Dcal}\left\{d_i\bi p(\dvec)\alpha_i(i,\gamma(i,\dvec))\sum_{a_i=1}^{d_i}\binom{d_i-1}{a_i-1}\dfrac{(1-\rho_i)^{a_i-1}\rho_i^{d_i-a_i}}{a_i}\right\}&(\forall i\in\Scal)\\
    &\lb=\frac{\lambda}{q_i\rho_i}\sum_{\dvec\in\Dcal}\left\{p(\dvec)\sum_{\mathclap{j\in\Jcal_{i}(\dvec)}}\bj\left(1-\rho_j^{d_j}\right)\alpha_i(j,\gamma(i,\dvec))\right\}&(\forall i\in\Scal)\\
    &\lb <\mu_i&(\forall i\in\Scal)\\
    &\sum_{\dvec\in\Dcal} p(\dvec)=1\\
    &\sum_{\mathclap{\substack{i\in\Scal:\\ (i,j,\dvec)\in\Tcal}}}\alpha_i(j,\dvec)=1&(\forall (j,\dvec)\in\Pcal)
\end{align*}
where for all $i\in\Scal$ in writing $\rho_i$ we are denoting the expression $\li/\left(\mu_i-\lb+\li\right)$ with $\rsd\equiv1$ for all $\dvec\in\Dvec$, and for all $j\in\Scalb$ in writing $\bj$ we are denoting the expression $\prod_{\ell=1}^{j-1}\left(\lil/\left(\mu_\ell-\lbl+\lil\right)\right)^{d_\ell}$.

The $p(\dvec)$ values (for all $\dvec\in\Dcal$) and the $\alpha_i(j,\dvec)$ values (for all $(i,j,\dvec)\in\Tcal$ together with $\alpha_i(j,\dvec)=0$ for all $(i,j,\dvec)\not\in\Tcal$) from an optimal solution specify the querying and assignment rules associated with an optimal $\DP{\genq}{\cida}$ dispatching policy, respectively.}
\end{tcolorbox}

\subsection{Finding optimal $\DP{\iidq}{\cida}$ dispatching policies}\label{sec:opt-iid}

We now turn our attention to the case where $\qrf=\iidq$ as it is simpler to address $\iidq$ before the more general (but less general than $\genq$) $\indq$ family.  Under the $\iidq$ querying rule, the $d$ servers are queried independently according to an identical probability distribution over the set of server classes $\Scal$.  For any querying rule $\qr\in\iidq$ we can express the probability distribution $p$ over query mixes $\Dvec$ in terms of an auxiliary distribution $\indp$ over the \emph{set of classes} $\Scal$.  Specifically, we express $\indp$ as a function $\indp\colon\Scal\to[0,1]$ that is subject to the constraint $\sum_{i=1}^s\indp(i)=1$, where $\indp(i)$ is the probability that an arbitrary queried server is of class $i$.  In particular, due to the structure of querying rules in $\iidq$, we query $d$ servers independently according to $\iidq$ according to $\indp$ upon each arrival, yielding \[p(\dvec)=\binom{d}{d_1,d_2,\ldots,d_s}\prod_{i=1}^s\indp(i)^{d_i}=d!\prod_{i=1}^s\frac{\indp(i)^{d_i}}{d_i!},\]  that is, any $\qr\in\iidq$ makes independent queries with querying mixes $\Dvec$ that are \emph{multinomially distributed} random vectors.  Moreover, such a querying rule $\qr$ is uniquely identified by $\indp$.  The above observations allow us to express the optimization problem for $\DP{\iidq}{\cida}$ as a modification of the optimization problem for $\DP{\genq}{\cida}$ (see Appendix C).

\subsection{Finding optimal $\DP{\indq}{\cida}$ dispatching policies}\label{sec:ind}

When $\qr\in\iidq$, the function $\indp\colon\Scal\to[0,1]$ governed the probability distribution by which servers were queried: specifically, $\indp(i)$ denoted the probability that any individual queried server is of class $i$.  We extend this notion to the case where we can query $d$ servers according to potentially different distributions (i.e., when $\qr\in\indq$) as follows: let $\indp_1,\indp_2,\ldots,\indp_d\colon\Scal\to[0,1]$ denote a family of functions such that $\indp_\ell(i)$ denotes the probability that upon any job's arrival, the $\ell$-th server queried is of class $i$. 

\begin{remark}
All servers are queried simultaneously (under all querying rules, including those contained $\indq$ in particular), and so the order of the queries is irrelevant (i.e., a querying rule specified by $\indp_1,\indp_2,\ldots,\indp_d$ performs indistinguishably from one specified by $\indp_1'=\indp_{\sigma(1)},\indp_2'=\indp_{\sigma(2)},\ldots,\indp_d'=\indp_{\sigma(d)}$, for any permutation $\sigma\colon\{1,2,\ldots,d\}\to\{1,2,\ldots,d\}$).
\end{remark}

We would like to define $p(\dvec)$ in terms of $\indp_1,\indp_2,\ldots,\indp_d$.  With this end in mind, we introduce some additional notation: let \label{Qcal}
\begin{align}
    \Qcal&\equiv\{1,2,\ldots,d\}\qquad\mbox{and}\qquad\vec{\Qcal}\equiv(\Qcal_1,\Qcal_2,\ldots,\Qcal_s),\label{eq:qcal}
\end{align}
    where each $\Qcal_\ell$ is a subset of $\Qcal$ (i.e., $\vec{\Qcal}$ is an $s$-tuple of subsets of $\Qcal$), and let  $\Bcal(\dvec)$ denote the set of all $s$-tuples $\vec{\Qcal}$ that form a partition of $\Qcal$ such that the $\ell$-th entry of $\vec{\Qcal}$  contains exactly $\ell_i$ elements.  That is,
\[\Bcal(\dvec)=\left\{\vec{\Qcal}\colon (\forall \ell\in\mathcal S\colon \Qcal_\ell\subseteq\mathcal \Qcal,\,|\Qcal_\ell|=d_\ell),\bigcup_{\ell=1}^s\Qcal_\ell=\Qcal\right\}.\] Crucially, $\Bcal(\dvec)$ corresponds to the ways that $d$ queries can result in the query mix $\dvec$. With the above notation defined, we can now write the following:
\[p(\dvec)=\sum_{\vec{\Qcal}\in\Bcal(\dvec)}\prod_{i=1}^s\prod_{u\in\Qcal_i}\indp_u(i).\] The optimization problem for the $\DP{\indq}{\cida}$ family of dispatching policies---presented in Appendix C---then follows readily from that of $\DP{\genq}{\cida}$ family.

\subsection{Finding optimal $\DP{\detq}{\cida}$ dispatching policies}

We now address the case where $\qrf=\detq$.  For any $\qr\in\detq$, $p$ is such that there exists some specifically designated $\dvec\in\Dcal$ where $p(\dvec)=1$ and $p(\dvec')=0$ for all $\dvec'\in\Dcal\backslash\{\dvec\}$.  That is, we always query deterministically, so that $\Dvec=\dvec$ upon every job arrival.  When attempting to find optimal $\DP{\detq}{\cida}$ dispatching policies, we need to evaluate the mean response time under each $\DP{\mathsf{DQR}_{\dvec}}{\cida}$ dispatching policy, where $\mathsf{DQR}_{\dvec}$\label{dqrsub} is an \emph{individual querying rule} in $\detq$ that always queries a set of servers with query mix $\dvec$ (further note that all querying rules in $\detq$ are of this form; hence, we have  $\detq=\{\mathsf{DQR}_{\dvec}\colon \dvec\in\Dcal\}$).  Then we choose the value of $\dvec$ (and the corresponding policy $\ar\in\cida$) that yields the best mean response time.  Hence, the optimization problem for the $\DP{\detq}{\cida}$ family of dispatching policies consists of solving $|\Dcal|$ optimization \emph{subproblems} and then comparing the objective values of these subproblems.  While our approach can be seen as a disjunctive nonlinear program composed of a single objective function on the union of several nonlinear feasibility regions, one could also approach this optimization problem as a single mixed integer nonlinear program (MINLP).

Since we need only consider $\Dvec=\dvec$ in each subproblem, we can dispense with the need for the map $\gamma$ in this setting, however it will be useful to introduce the following analogues of $\Scal$, $\Tcal$, $\Pcal$, and $\Jcal_i(\dvec)$:
\begin{align}
    \Scal(\dvec)&\equiv\{i\in\Scal\colon d_i>0\}\label{eq:scal-d}\\
    \Tcal(\dvec)&\equiv\left\{(i,j)\in\Scal\times\Scalb\colon i\le j,\,d_i>0,\,(j\le s)\implies d_j>0\right\}\label{eq:tcal-d}\\
    \Pcal(\dvec)&\equiv\left\{j\in\Scalb\colon(\exists i\in\Scal\colon(i,j)\in\Tcal(\dvec))\right\}=\{j\in\Scal\colon d_j>0\}\cup\{s+1\}\label{eq:pcal-d}\\
    \Jcal_i(\dvec)&\equiv\left\{j\in\{i+1,i+2,\ldots,s+1\}\colon (i,j)\in\Tcal(\dvec)\right\}.\label{eq:Jcal-d}
\end{align}
While Equation~\eqref{eq:Jcal-d} may appear to be a \emph{redefinition} of $\Jcal_i(\dvec)$ it is actually consistent with the earlier definition provided in Equation~\eqref{eq:Jid}.

With the above notation defined we can formulate the optimization problem for the $\DP{\detq}{\cida}$ family of dispatching policies, which we present in Appendix C.

\subsection{Finding optimal $\DP{\srcq}{\cida}$ dispatching policies}\label{sec:opt-src}

We first observe that when a dispatching policy's querying rule $\qr\in\srcq$, then assignment decisions under that policy are always among servers of the same class, so if we further impose that the dispatching policy's assignment rule $\ar\in\cida$, then the assignment decision amounts to sending the job to an idle queried server (chosen uniformly at random) whenever the query includes such a server and to any (busy) server (chosen uniformly at random) otherwise. Just as the querying rules in the $\iidq$ family were uniquely specified by some probability distribution over the serer classes, $\indp$, querying rules in the $\srcq$ family are also specified by such a probability distribution, which we denote by $\srcp$. Specifically, let $\srcp\colon\Dcal\to[0,1]$ be a distribution that satisfies $\mathbb P(\Dvec=d \evec_i)=\srcp(i)$ for all $i\in\mathcal S$ and $\mathbb P(\Dvec=\dvec)=0$ for all $\dvec\in\mathcal D\backslash\{d\evec_1,d\evec_2,\ldots,d\evec_s\}$, where $d\evec_i$ corresponds to the query mix where $d$ class-$i$ servers have been queried (i.e., it denotes the vector of length $s$ with all zero entries, except for an entry of $d$ at position $i$). In particular, the $\alpha_i(j,\dvec)$ values are immaterial, and we do not need to optimize over them.  This yields the associated optimization problem provided in Appendix C.

\subsection{Finding optimal $\DP{\sfcq}{\cida}$ dispatching policies}

As we have remarked earlier there are exactly $i$ policies $\sfcq$ contains exactly $s$ querying rules.  Specifically, we always query $d$ class-$i$ servers for some fixed $i\in\Scal$.  As the querying rule is specified by the choice of this fixed value of $i$ alone, we can disregard querying probabilities.  Moreover, as all queried servers are of the same class, we can also disregard assignment probabilities.  Hence, optimization amounts to choosing the fixed value of $i\in\Scal$ that minimizes the mean response time.  To this end, as in the case where $\qrf=\detq$, we make use of subproblems (and alternatively could have made uses of integer variables), although this time we only have $s$ such subproblems.  The associated optimization problem is presented in Appendix C.  

We note that the solution to each subproblem does not depend on the objective function (although objective function values must be computed to find $i^{*}$, the index of the subproblem with the lowest objective function value).  Moreover, each subproblem will have at most one feasible solution.  Essentially, solving each subproblem merely requires one to solve a system of two nonlinear equations in two constrained unknowns: $\li\in[0,\infty)$ and $\lb\in[0,\mu_i)$.

\subsection{Finding optimal $\DP{\qr}{\cida}$ dispatching policies}

We also formulate an optimization rule in order to determine optimal assignment rules $\ar\in\cida$ given any individual querying rule $\qr\in\qrf$ as specified by some $p\colon\mathcal D\to[0,1]$, as long as that querying rule yields a stable system under some assignment rule.  Since the probability distribution over $\Dcal$ is specified, we need only determine the assignment probabilities.  This optimization problem (i.e., the one associated with $\DP{\qr}{\cida}$) is presented in Appendix C.

We are particularly interested in the $\uniq$ and $\brq$ rules, defined by  \[p(\dvec)=\binom{d}{d_1,d_2,\ldots,d_s}\left(\frac1{s^d}\right)=d!\left/\left(s^d\prod_{i=1}^{s}d_i!\right)\right.\] and \[p(\dvec)=\binom{d}{d_1,d_2,\ldots,d_s}\prod_{i=1}^s(\mu_i q_i)^{d_i}=d!\prod_{i=1}^s\frac{\left(\mu_iq_i\right)^{d_i}}{d_i!},\] respectively, in accordance with the fact that both are members of the $\iidq$ family (see Section~\ref{sec:opt-iid}).  Note that some other specific querying rules---especially those that never query servers of one or more classes---allow for significant pruning of the space of assignment rule.


\section{Numerical results for $\cida$ assignment}
\label{sec:numerical}
In this section we compare the performance of dispatching policies found by numerically solving our optimization problems associated with $\DP{\qrf}{\cida}$ for various querying rule families $\qrf$ (including the single-rule family $\qrf=\{\brq\}$).  In these comparisons we will examine both performance (i.e., mean response time), and the computation time associated with determining the optimal parameters for the querying and/or assignment rules.

\subsection{Parameter settings}
\label{sec:parameters}

We provide numerical results for a variety of parameter settings (i.e., problem instances), where each parameter setting consists of a choice of $s$, $d$, $\lambda$, $\mu_1,\ldots,\mu_s$, and $q_1,\ldots,q_s$.  Our choices of $\lambda$, $s$, and $d$ resemble a full factorial design, while our choices of $\mu_1,\ldots,\mu_s$ and $q_1,\ldots,q_s$ clearly depend on $s$ and are subject to the normalization constraint $\sum_{i=1}^s\mu_iq_i=1$.  \label{r_i}

Specifically, we examine all combinations of $s$, $d$, and $\lambda$, where $s,d\in\{2,3,4\}$ and $\lambda\in\{0.05,0.10,\ldots,0.95\}$ (although when plotting curves, we instead consider $\lambda\in\{0.02,0.04,\ldots,0.98\}$).  For each $(s,d,\lambda)$ setting, we then consider one set of $\mu_1,\ldots,\mu_s$ corresponding to each subset of $s-1$ elements of $\{1.25,1.50,2,3,5\}$: for each $\{R_1,R_2,\ldots,R_{s-1}\}\subseteq\{1.25,1.50,2,3,5\}$, ordered so that $R_1>R_2>\ldots>R_{s-1}$, we let $\mu_i=R_i\mu_s$ for each $i\in\Scal\backslash\{s\}$ (i.e., $R_i\equiv\mu_i/\mu_s$).\label{Ri}.  That is, in each parameter setting each server that does not belong to the slowest class runs at a speed that is $25\%$, $50\%$, $100\%$, $200\%$, or $400\%$ faster than the speed of the slowest server, with each parameter setting accommodating $s-1$ such speedup factors.  The speed of the slowest server depends on the values of $q_1,q_2,\ldots,q_s$ (see below), as follows: \[\mu_s=\left(q_s+\sum_{i=1}^{s-1}R_iq_i\right)^{-1}.\]  Meanwhile, for each $(s,d,\lambda,R_1,R_2,\ldots,R_{s-1})$ setting, we consider the following $(q_1,q_2,\ldots,q_{s})$ combinations:
\[\left\{(q_1,q_2,\ldots,q_{s})\in\mathbb Q^
s\colon (\forall i\in\Scal\colon 6q_i\in\mathbb Z,\,q_i>0),\, \sum_{i=1}^s q_i=1 \right\}.\] That is, we consider all (and only those) combinations $(q_1,q_2,\ldots,q_{s})$ where we can view each server class as holding a (nonzero integer) number of ``shares''---out of a total of 6 such shares---with each class being allocated a number of servers proportional to the number of shares it holds.
This methodology for selecting $\mu_i$ and $q_i$ values was chosen to allow for a wide variety of parameter settings while ensuring that in each setting no class is particularly under- or over-represented nor so much slower or faster than others.  In this way, we avoid extreme parameter settings that render certain classes (and hence, certain aspects of querying and assignment rules) inconsequential.

Note that there are 3 choices for $s$, 3 choices for $d$, 19 choices for $\lambda$, $\binom{5}{s}$ choices of speed configurations for each choice of $s$, and also $\binom{5}{s}$ ``share'' configurations for each choice of $s$ (that is, $5$ choices of each configuration when $s=4$ and $10$ choices of each configuration when $s=2$ or $s=3$).  Hence, we consider a total of $(3)(19)\left(5^2+10^2+10^2\right)=12\,825$ parameter settings. These parameter settings can be broken down into $1\,875$ settings for each of the $19$ $\lambda$ values.  Alternatively, they can broken down by the choice of $s$: $1\,425$ settings where $s=2$, and $5\,700$ settings each when $s=3$ and $s=4$.

\subsection{Numerical optimization methodology and notation}
\label{sec:opt-methods}

All of the numerical results we present throughout this section were obtained using code written in the programming language Julia.  We used the JuMP package \cite{dunning2017jump} in Julia to define our optimization models (see Section~\ref{sec:optimization}), and we solved these problems using the Interior Point Optimizer (IPOPT) optimization package \cite{Lubin2015,Wachter2006}.  Note that due to the presence of nonconvexity in our optimization problems,  IPOPT does not consistently yield globally optimal solutions.  Hence, for each policy family, we should view the associated ``optimal'' solution yielded by IPOPT as being the parameters of a heuristically chosen policy belonging to that family.  For further implementation details and small caveats to the results presented in this section, see Appendix E.  

Now consider an arbitrary querying rule family $\qrf$ and an arbitrary assignment rule family $\arf$.  Let the (admittedly cumbersome) notation \label{ipoptd}\label{ipoptq}\label{ipopta} \[\ipoptd{\DP{\qrf}{\arf}}\equiv\DP{\ipoptq{\DP{\qrf}{\arf}}}{\ipopta{\DP{\qrf}{\arf}}}\] denote the dispatching policy specified by the IPOPT solution to the optimization problem associated with the $\DP\qrf\arf$ family of dispatching policies (assuming such an optimization problem exists, has been identified, and can be implemented and given to IPOPT).  That is, for a querying rule family $\qrf$ (e.g., $\genq$), we use IPOPT to ``solve'' an optimization problem that involves jointly selecting querying and assignment probabilities, resulting in a querying rule (belonging to $\qrf$), which we denote by $\ipoptq{\DP{\qrf}{\arf}}$, and an assignment rule (belong to $\arf$), which we denote by $\ipopta{\DP{\qrf}{\arf}}$. Recall that in all of the optimization problems that we have proposed in Section~\ref{sec:optimization}, we have always considered $\arf=\cida$, so to alleviate the burden imposed by this cumbersome notation, we can omit the reference to the assignment rule family whenever we take it to be $\cida$.  That is, we take $\cida$ as the ``default'' assignment rule family and use the notation $\ipoptd{\qrf}\equiv\DP{\ipoptq{\qrf}}{\ipopta{\qrf}}$, where we let $\ipoptq{\qrf}\equiv\ipoptq{\DP{\qrf}{\cida}}\in\qrf$ and $\ipopta{\qrf}\equiv\ipopta{\DP{\qrf}{\cida}}\in\cida$, from which it follows that $\ipoptd{\qrf}=\ipoptd{\DP{\qrf}{\cida}}$.

\begin{remark}
We abuse this notation by adapting it for use with specific policies, rather than only families, so that, e.g.,  $\ipoptd{\brq}\equiv\ipoptd{\{\brq\}}$, and  $\ipoptd{\DP{\srcq}{\jsq}}\equiv\ipoptd{\DP{\srcq}{\{\jsq\}}}$.
\end{remark}

\subsection{Comparison of querying rule families with respect to $\mathbb E[T]$ and optimization runtime}
\label{sec:graphs}

We proceed to evaluate the performance of the $\ipoptd\genq$, $\ipoptd\indq$, $\ipoptd\iidq$, $\ipoptd\srcq$, and $\ipoptd\brq$ dispatching policies.  We omit examination of the $\detq$ and $\sfcq$ querying rule families as well as the $\uniq$ querying rule, as under many of our parameter settings, any dispatching policy constructed from such querying rules yields an unstable system (see Section~\ref{sec:stability} on stability and Section~\ref{sec:parameters} on our parameter settings). We examine the performance of $\ipoptd\detq$ across a small set of parameters (taken from our earlier work in \cite{gardner2020scalable}) at the end of this section.

We evaluate the $\mathbb E[T]$ values yielded by each of the dispatching policies under consideration, for each of the $12\,825$ parameter settings described in Section~\ref{sec:parameters}. For each policy, we then compute the mean and median value of both $\mathbb E[T]$ and the optimization runtime (measured in seconds) across all of our parameter settings.
Figure~\ref{fig:mean} illustrates the tradeoff between $\mathbb E[T]$ and optimization runtime as aggregated across our parameter settings.
In Figure~\ref{fig:mean}~(left) we plot the (mean $\mathbb E[T]$, mean runtime) pairs associated with each policy, while in Figure~\ref{fig:mean}~(right) we plot the analogous pairs for median values.
Before describing Figure~\ref{fig:mean} in detail, we introduce one additional policy, motivated by the surprising observation that both $\ipoptd\indq$ and $\ipoptd\iidq$ outperform $\ipoptd\genq$ with respect to the mean value of $\mathbb E[T]$ across the parameter set, with $\ipoptd\indq$ outperforming $\ipoptd\genq$ with respect to the analogous median value as well. We observe this despite the fact that $\iidq\subseteq\indq\subseteq\genq$, which means that the best $\genq$-driven dispatching policy \emph{must} perform at least as well as the best $\indq$- and $\iidq$-driven policies; unfortunately, as IPOPT does not consistently find true optimal solutions, the solution found by IPOPT for a particular family can occasionally outperform the solution it finds for a more general family. We can construct a new policy to remedy the somewhat lackluster performance of $\ipoptd\genq$ by exploiting the fact that we can seed IPOPT with a feasible solution before running it to solve an optimization problem.  Thus far, we have only discussed results which were obtained by running IPOPT without seeding it with an initial solution, however, IPOPT frequently yields noticeably better solutions to the optimization problem associated with the $\DP\genq\cida$ family when seeded with the IPOPT solution associated with the $\DP\indq\cida$ (as compared to the solution yielded by the ``unseeded'' problem associated with the $\DP\genq\cida$).  We use the notation $\genseed\equiv\DP\genseedq\genseeda$ \label{genseedd} \label{genseedq} \label{genseeda} to refer to this new heuristic dispatching policy (see Appendix E for details).  One can similarly construct other heuristics for choosing initial values (e.g., seeding the $\iidq$ optimization problem with the solution IPOPT found for the $\brq$ optimization problem, and even using the solution to the aforementioned seeded problem as a seed for the $\indq$ optimization problem, etc.); we extensively explored different heuristics for choosing initial values (e.g., seeding the $\iidq$ optimization problem with the solution IPOPT found for the $\brq$ optimization problem); we found that---unlike $\genseeda$---other alternative heuristics yielded negligible benefits in comparison to their ``unseeded' counterparts.

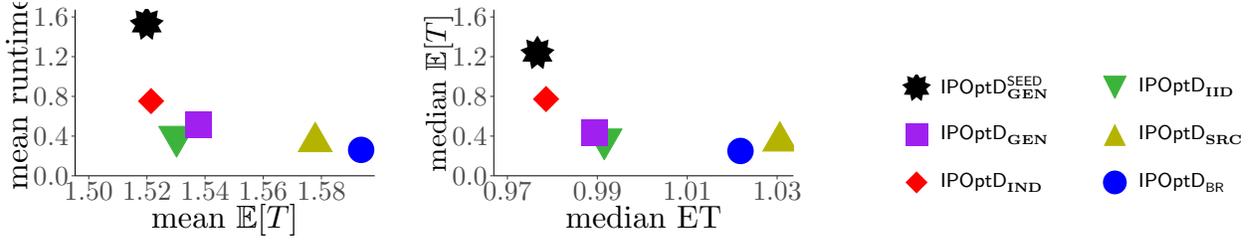
\begin{figure}
    \centering
    \begin{tabular}{ccc}
    \scalebox{.5}{
\begin{tikzpicture}[x=1pt,y=1pt]
\definecolor{fillColor}{RGB}{255,255,255}
\path[use as bounding box,fill=fillColor,fill opacity=0.00] (0,0) rectangle (289.08,180.67);
\begin{scope}
\path[clip] (  0.00,  0.00) rectangle (289.08,180.67);
\definecolor{drawColor}{RGB}{255,255,255}
\definecolor{fillColor}{RGB}{255,255,255}

\path[draw=drawColor,line width= 0.6pt,line join=round,line cap=round,fill=fillColor] (  0.00,  0.00) rectangle (289.08,180.68);
\end{scope}
\begin{scope}
\path[clip] ( 56.90, 49.41) rectangle (283.58,175.17);
\definecolor{drawColor}{RGB}{255,255,255}

\path[draw=drawColor,line width= 0.6pt,line join=round] ( 56.90, 49.41) --
	(283.58, 49.41);

\path[draw=drawColor,line width= 0.6pt,line join=round] ( 56.90, 79.47) --
	(283.58, 79.47);

\path[draw=drawColor,line width= 0.6pt,line join=round] ( 56.90,109.53) --
	(283.58,109.53);

\path[draw=drawColor,line width= 0.6pt,line join=round] ( 56.90,139.60) --
	(283.58,139.60);

\path[draw=drawColor,line width= 0.6pt,line join=round] ( 56.90,169.66) --
	(283.58,169.66);

\path[draw=drawColor,line width= 0.6pt,line join=round] ( 67.20, 49.41) --
	( 67.20,175.17);

\path[draw=drawColor,line width= 0.6pt,line join=round] (111.20, 49.41) --
	(111.20,175.17);

\path[draw=drawColor,line width= 0.6pt,line join=round] (155.20, 49.41) --
	(155.20,175.17);

\path[draw=drawColor,line width= 0.6pt,line join=round] (199.20, 49.41) --
	(199.20,175.17);

\path[draw=drawColor,line width= 0.6pt,line join=round] (243.20, 49.41) --
	(243.20,175.17);
\definecolor{fillColor}{RGB}{0,0,255}

\path[fill=fillColor] (273.28, 69.03) circle (  9.99);
\definecolor{drawColor}{RGB}{0,255,0}
\definecolor{iidjiq}{RGB}{60,180,60}

\path[draw=iidjiq,line width= 0.4pt,line join=round,line cap=round,fill=iidjiq] (133.64, 63.53) --
	(147.10, 86.84) --
	(120.18, 86.84) --
	cycle;
\definecolor{fillColor}{RGB}{255,0,0}

\path[fill=fillColor] (104.36,105.96) --
	(114.35,115.96) --
	(124.35,105.96) --
	(114.35, 95.97) --
	cycle;
\definecolor{srcjiq}{RGB}{180,180,0}

\path[fill=srcjiq] (238.52, 90.62) --
	(251.98, 67.30) --
	(225.06, 67.30) --
	cycle;
\definecolor{genjiqwoind}{RGB}{160,32,240}

\path[fill=genjiqwoind] (139.86, 77.95) --
	(159.85, 77.95) --
	(159.85, 97.94) --
	(139.86, 97.94) --
	cycle;
\definecolor{genjiqind}{RGB}{0,0,0}

\node[star,star points=9, fill=genjiqind, scale=  1.70] at (111.04, 164.61){};

\end{scope}
\begin{scope}
\path[clip] (  0.00,  0.00) rectangle (289.08,180.67);
\definecolor{drawColor}{RGB}{0,0,0}

\path[draw=drawColor,line width= 0.6pt,line join=round] ( 56.90, 49.41) --
	( 56.90,175.17);
\end{scope}
\begin{scope}
\path[clip] (  0.00,  0.00) rectangle (289.08,180.67);
\definecolor{drawColor}{gray}{0.30}

\node[text=drawColor,anchor=base east,inner sep=0pt, outer sep=0pt, scale=  1.90] at ( 51.95, 42.86) {0.0};

\node[text=drawColor,anchor=base east,inner sep=0pt, outer sep=0pt, scale=  1.90] at ( 51.95, 72.93) {0.4};

\node[text=drawColor,anchor=base east,inner sep=0pt, outer sep=0pt, scale=  1.90] at ( 51.95,102.99) {0.8};

\node[text=drawColor,anchor=base east,inner sep=0pt, outer sep=0pt, scale=  1.90] at ( 51.95,133.06) {1.2};

\node[text=drawColor,anchor=base east,inner sep=0pt, outer sep=0pt, scale=  1.90] at ( 51.95,163.12) {1.6};
\end{scope}
\begin{scope}
\path[clip] (  0.00,  0.00) rectangle (289.08,180.67);
\definecolor{drawColor}{gray}{0.20}

\path[draw=drawColor,line width= 0.6pt,line join=round] ( 54.15, 49.41) --
	( 56.90, 49.41);

\path[draw=drawColor,line width= 0.6pt,line join=round] ( 54.15, 79.47) --
	( 56.90, 79.47);

\path[draw=drawColor,line width= 0.6pt,line join=round] ( 54.15,109.53) --
	( 56.90,109.53);

\path[draw=drawColor,line width= 0.6pt,line join=round] ( 54.15,139.60) --
	( 56.90,139.60);

\path[draw=drawColor,line width= 0.6pt,line join=round] ( 54.15,169.66) --
	( 56.90,169.66);
\end{scope}
\begin{scope}
\path[clip] (  0.00,  0.00) rectangle (289.08,180.67);
\definecolor{drawColor}{RGB}{0,0,0}

\path[draw=drawColor,line width= 0.6pt,line join=round] ( 56.90, 49.41) --
	(283.58, 49.41);
\end{scope}
\begin{scope}
\path[clip] (  0.00,  0.00) rectangle (289.08,180.67);
\definecolor{drawColor}{gray}{0.20}

\path[draw=drawColor,line width= 0.6pt,line join=round] ( 67.20, 46.66) --
	( 67.20, 49.41);

\path[draw=drawColor,line width= 0.6pt,line join=round] (111.20, 46.66) --
	(111.20, 49.41);

\path[draw=drawColor,line width= 0.6pt,line join=round] (155.20, 46.66) --
	(155.20, 49.41);

\path[draw=drawColor,line width= 0.6pt,line join=round] (199.20, 46.66) --
	(199.20, 49.41);

\path[draw=drawColor,line width= 0.6pt,line join=round] (243.20, 46.66) --
	(243.20, 49.41);
\end{scope}
\begin{scope}
\path[clip] (  0.00,  0.00) rectangle (289.08,180.67);
\definecolor{drawColor}{gray}{0.30}

\node[text=drawColor,anchor=base,inner sep=0pt, outer sep=0pt, scale=  1.90] at ( 67.20, 31.37) {1.50};

\node[text=drawColor,anchor=base,inner sep=0pt, outer sep=0pt, scale=  1.90] at (111.20, 31.37) {1.52};

\node[text=drawColor,anchor=base,inner sep=0pt, outer sep=0pt, scale=  1.90] at (155.20, 31.37) {1.54};

\node[text=drawColor,anchor=base,inner sep=0pt, outer sep=0pt, scale=  1.90] at (199.20, 31.37) {1.56};

\node[text=drawColor,anchor=base,inner sep=0pt, outer sep=0pt, scale=  1.90] at (243.20, 31.37) {1.58};
\end{scope}
\begin{scope}
\path[clip] (  0.00,  0.00) rectangle (289.08,180.67);
\definecolor{drawColor}{RGB}{0,0,0}

\node[text=drawColor,anchor=base,inner sep=0pt, outer sep=0pt, scale=  2.20] at (170.24,  9.78) {mean $\mathbb E[T]$};
\end{scope}
\begin{scope}
\path[clip] (  0.00,  0.00) rectangle (289.08,180.67);
\definecolor{drawColor}{RGB}{0,0,0}

\node[text=drawColor,rotate= 90.00,anchor=base,inner sep=0pt, outer sep=0pt, scale=  2.20] at ( 20.65,112.29) {mean runtime};
\end{scope}
\end{tikzpicture}}     &
    \scalebox{.5}{
\begin{tikzpicture}[x=1pt,y=1pt]
\definecolor{fillColor}{RGB}{255,255,255}
\path[use as bounding box,fill=fillColor,fill opacity=0.00] (0,0) rectangle (289.08,180.67);
\begin{scope}
\path[clip] (  0.00,  0.00) rectangle (289.08,180.67);
\definecolor{drawColor}{RGB}{255,255,255}
\definecolor{fillColor}{RGB}{255,255,255}

\path[draw=drawColor,line width= 0.6pt,line join=round,line cap=round,fill=fillColor] (  0.00,  0.00) rectangle (289.08,180.68);
\end{scope}
\begin{scope}
\path[clip] ( 56.90, 49.41) rectangle (283.58,175.17);
\definecolor{drawColor}{RGB}{255,255,255}

\path[draw=drawColor,line width= 0.6pt,line join=round] ( 56.90, 49.41) --
	(283.58, 49.41);

\path[draw=drawColor,line width= 0.6pt,line join=round] ( 56.90, 79.47) --
	(283.58, 79.47);

\path[draw=drawColor,line width= 0.6pt,line join=round] ( 56.90,109.53) --
	(283.58,109.53);

\path[draw=drawColor,line width= 0.6pt,line join=round] ( 56.90,139.60) --
	(283.58,139.60);

\path[draw=drawColor,line width= 0.6pt,line join=round] ( 56.90,169.66) --
	(283.58,169.66);

\path[draw=drawColor,line width= 0.6pt,line join=round] ( 67.20, 49.41) --
	( 67.20,175.17);

\path[draw=drawColor,line width= 0.6pt,line join=round] (135.22, 49.41) --
	(135.22,175.17);

\path[draw=drawColor,line width= 0.6pt,line join=round] (203.23, 49.41) --
	(203.23,175.17);

\path[draw=drawColor,line width= 0.6pt,line join=round] (271.25, 49.41) --
	(271.25,175.17);
\definecolor{fillColor}{RGB}{0,0,255}

\path[fill=fillColor] (243.53, 68.20) circle (  9.99);
\definecolor{iidjiq}{RGB}{60,180,60}

\path[draw=iidjiq,line width= 0.4pt,line join=round,line cap=round,fill=iidjiq] (140.47, 61.52) --
	(153.93, 84.84) --
	(127.01, 84.84) --
	cycle;
\definecolor{fillColor}{RGB}{255,0,0}

\path[fill=fillColor] ( 86.45,107.51) --
	( 96.44,117.50) --
	(106.44,107.51) --
	( 96.44, 97.51) --
	cycle;
\definecolor{srcjiq}{RGB}{180,180,0}

\path[fill=srcjiq] (273.28, 91.11) --
	(286.74, 67.79) --
	(259.82, 67.79) --
	cycle;
\definecolor{fillColor}{RGB}{160,32,240}

\path[fill=fillColor] (123.49, 72.11) --
	(143.48, 72.11) --
	(143.48, 92.10) --
	(123.49, 92.10) --
	cycle;
\definecolor{genjiqind}{RGB}{0,0,0}
\node[star,star points=9, fill=genjiqind, scale=  1.70] at (89.93, 142.45){};

\end{scope}
\begin{scope}
\path[clip] (  0.00,  0.00) rectangle (289.08,180.67);
\definecolor{drawColor}{RGB}{0,0,0}

\path[draw=drawColor,line width= 0.6pt,line join=round] ( 56.90, 49.41) --
	( 56.90,175.17);
\end{scope}
\begin{scope}
\path[clip] (  0.00,  0.00) rectangle (289.08,180.67);
\definecolor{drawColor}{gray}{0.30}

\node[text=drawColor,anchor=base east,inner sep=0pt, outer sep=0pt, scale=  1.90] at ( 51.95, 42.86) {0.0};

\node[text=drawColor,anchor=base east,inner sep=0pt, outer sep=0pt, scale=  1.90] at ( 51.95, 72.93) {0.4};

\node[text=drawColor,anchor=base east,inner sep=0pt, outer sep=0pt, scale=  1.90] at ( 51.95,102.99) {0.8};

\node[text=drawColor,anchor=base east,inner sep=0pt, outer sep=0pt, scale=  1.90] at ( 51.95,133.06) {1.2};

\node[text=drawColor,anchor=base east,inner sep=0pt, outer sep=0pt, scale=  1.90] at ( 51.95,163.12) {1.6};
\end{scope}
\begin{scope}
\path[clip] (  0.00,  0.00) rectangle (289.08,180.67);
\definecolor{drawColor}{gray}{0.20}

\path[draw=drawColor,line width= 0.6pt,line join=round] ( 54.15, 49.41) --
	( 56.90, 49.41);

\path[draw=drawColor,line width= 0.6pt,line join=round] ( 54.15, 79.47) --
	( 56.90, 79.47);

\path[draw=drawColor,line width= 0.6pt,line join=round] ( 54.15,109.53) --
	( 56.90,109.53);

\path[draw=drawColor,line width= 0.6pt,line join=round] ( 54.15,139.60) --
	( 56.90,139.60);

\path[draw=drawColor,line width= 0.6pt,line join=round] ( 54.15,169.66) --
	( 56.90,169.66);
\end{scope}
\begin{scope}
\path[clip] (  0.00,  0.00) rectangle (289.08,180.67);
\definecolor{drawColor}{RGB}{0,0,0}

\path[draw=drawColor,line width= 0.6pt,line join=round] ( 56.90, 49.41) --
	(283.58, 49.41);
\end{scope}
\begin{scope}
\path[clip] (  0.00,  0.00) rectangle (289.08,180.67);
\definecolor{drawColor}{gray}{0.20}

\path[draw=drawColor,line width= 0.6pt,line join=round] ( 67.20, 46.66) --
	( 67.20, 49.41);

\path[draw=drawColor,line width= 0.6pt,line join=round] (135.22, 46.66) --
	(135.22, 49.41);

\path[draw=drawColor,line width= 0.6pt,line join=round] (203.23, 46.66) --
	(203.23, 49.41);

\path[draw=drawColor,line width= 0.6pt,line join=round] (271.25, 46.66) --
	(271.25, 49.41);
\end{scope}
\begin{scope}
\path[clip] (  0.00,  0.00) rectangle (289.08,180.67);
\definecolor{drawColor}{gray}{0.30}

\node[text=drawColor,anchor=base,inner sep=0pt, outer sep=0pt, scale=  1.90] at ( 67.20, 31.37) {0.97};

\node[text=drawColor,anchor=base,inner sep=0pt, outer sep=0pt, scale=  1.90] at (135.22, 31.37) {0.99};

\node[text=drawColor,anchor=base,inner sep=0pt, outer sep=0pt, scale=  1.90] at (203.23, 31.37) {1.01};

\node[text=drawColor,anchor=base,inner sep=0pt, outer sep=0pt, scale=  1.90] at (271.25, 31.37) {1.03};
\end{scope}
\begin{scope}
\path[clip] (  0.00,  0.00) rectangle (289.08,180.67);
\definecolor{drawColor}{RGB}{0,0,0}

\node[text=drawColor,anchor=base,inner sep=0pt, outer sep=0pt, scale=  2.20] at (170.24,  9.78) {median ET};
\end{scope}
\begin{scope}
\path[clip] (  0.00,  0.00) rectangle (289.08,180.67);
\definecolor{drawColor}{RGB}{0,0,0}

\node[text=drawColor,rotate= 90.00,anchor=base,inner sep=0pt, outer sep=0pt, scale=  2.00] at ( 20.65,112.29) {median $\mathbb E[T]$};
\end{scope}
\end{tikzpicture}}     &
    \scalebox{.6}{
\begin{tikzpicture}[x=1pt,y=1pt]


\begin{scope}
\path[clip] (  0.00,  0.00) rectangle (300,105);

\definecolor{indjiq}{RGB}{255,0,0}
\path[fill=indjiq] (50,10+20) --
	(57,17+20) --
	(50,24+20) --
	(43,17+20) --
	cycle;
\node[anchor=west,inner sep=0pt, outer sep=0pt, scale=  1.40] at (65,17+20) {{\footnotesize  $\ipoptd{\indq}$}};

\definecolor{brjiq}{RGB}{0,0,255}
\path[fill=brjiq] (175,17+20) circle (  7.5);
\node[anchor=west,inner sep=0pt, outer sep=0pt, scale=  1.40] at (190,17+20) {{\footnotesize $\ipoptd{\brq}$}};

\definecolor{genjiqwoind}{RGB}{160,32,240}
\path[fill=genjiqwoind] (43,40+20) --
	(57,40+20) --
	(57,54+20) --
	(43,54+20) --
	cycle;
\node[anchor=west,inner sep=0pt, outer sep=0pt, scale=  1.40] at (65,47+20) {{\footnotesize $\ipoptd{\genq}$}};

\definecolor{srcjiq}{RGB}{180,180,0}
\path[fill=srcjiq] (175,54+20) --
	(182,40+20) --
	(168,40+20) --
	cycle;
\node[anchor=west,inner sep=0pt, outer sep=0pt, scale=  1.40] at (190,47+20) {{\footnotesize  $\ipoptd{\srcq}$}};

\definecolor{genjiqind}{RGB}{0,0,0}
\node[star,star points=9, fill=genjiqind, scale=  1.20] at (50, 77+20){};
\node[anchor=west,inner sep=0pt, outer sep=0pt, scale=  1.40] at (65,77+20) {{\footnotesize $\genseed$}};

\definecolor{iidjiq}{RGB}{60,180,60}
\path[draw=iidjiq,line width= 0.4pt,line join=round,line cap=round,fill=iidjiq] (168,84+20) --
	(182,84+20) --
	(175,70+20) --
	cycle;
\node[anchor=west,inner sep=0pt, outer sep=0pt, scale=  1.40] at (190,77+20) {{\footnotesize $\ipoptd{\iidq}$}};

\end{scope}
\end{tikzpicture}

    \end{tabular}\\
    \caption{Plots of the (mean $\mathbb E[T]$, mean runtime) pairs (left) and (median $\mathbb E[T]$, median runtime) pairs (right) calculated across all parameter settings defined in Section~\ref{sec:parameters} for six dispatching policies.}
    \label{fig:mean}
\end{figure}

Both the mean and the median results indicate that there is a tradeoff between $\mathbb E[T]$ and runtime: families that require a longer runtime to solve the optimization problem tend to yield lower $\mathbb E[T]$ values.
Note, however, that the trends exhibited in Figure~\ref{fig:mean} do not imply that the families have the same ordering with respect to $\mathbb E[T]$ and runtime for any \emph{specific} parameter setting; indeed, some trends suggested by Figure~\ref{fig:mean}(left) are reversed in Figure~\ref{fig:mean}(right).
For example, while $\ipoptd\iidq$ appears to dominate $\ipoptd\genq$ with respect to both mean measures, $\ipoptd\iidq$ has a higher (i.e., worse) median $\mathbb E[T]$ value than $\ipoptd\genq$.

Overall $\ipoptd\brq$ and $\ipoptd\srcq$ feature the lowest runtimes but at the expense of the worst performance (i.e., they have the highest $\mathbb E[T]$ values).  The fast runtime of $\ipoptd\brq$ can be attributed to the fact that it need only optimize over assignment; despite arising from the ``smallest'' of the optimization problems in many respects (see Table~2), $\ipoptd\srcq$ features a higher runtime than $\ipoptd\brq$.  Meanwhile, $\ipoptd\iidq$ and $\ipoptd\genq$ offer an improvement in performance at the cost of additional runtime.  Surprisingly, $\ipoptd\indq$ has a longer runtime (and as previously discussed, better performance than) $\ipoptd\genq$ despite arising from solving a problem of a smaller (nominal) size.  An examination of the optimization problem associated with $\DP\indq\cida$, as presented in Appendix C, provides a potential explanation for the exceptionally long runtimes associated with $\ipoptd\indq$: the constraints in this optimization problem with $\lb$ on the left-hand side are very complicated.  As previously noted, $\genseed$ achieves the best $\mathbb E[T]$ by building off of the strong performance of $\ipoptd\indq$.  Of course, this comes at a significant runtime expense, as one must now solve two optimization problems.

Throughout the entire parameter set, all of the optimization problems ran in well under one minute; the mean and median runtime associated with each of the dispatching policies examined was under 2 seconds.
In practice, these differences in runtimes are small enough that they would likely not be a significant factor, as this optimization would only need to be performed once to configure the system.
Thus, while the tradeoff between $\mathbb E[T]$ and runtime is of theoretical interest, $\genseed$ represents the best practical choice of dispatching policy among those studied here when achieving low $\mathbb E[T]$ is the foremost goal.
On the other hand, if the simplicity or interpretability of the policy is of value to the system designer, $\ipoptd\iidq$ provides a reasonable alternative to $\genseed$.

\begin{figure}
    \centering
    \scalebox{.7}{
\begin{tikzpicture}[x=1pt,y=1pt]
\definecolor{fillColor}{RGB}{255,255,255}
\path[use as bounding box,fill=fillColor,fill opacity=0.00] (0,0) rectangle (578.16,216.81);
\begin{scope}
\path[clip] (  0.00,  0.00) rectangle (578.16,216.81);
\definecolor{drawColor}{RGB}{255,255,255}
\definecolor{fillColor}{RGB}{255,255,255}

\path[draw=drawColor,line width= 0.6pt,line join=round,line cap=round,fill=fillColor] (  0.00,  0.00) rectangle (578.16,216.81);
\end{scope}
\begin{scope}
\path[clip] ( 83.08, 42.34) rectangle (444.21,211.31);
\definecolor{drawColor}{RGB}{255,255,255}

\path[draw=drawColor,line width= 0.6pt,line join=round] ( 83.08, 42.34) --
	(444.21, 42.34);

\path[draw=drawColor,line width= 0.6pt,line join=round] ( 83.08, 85.47) --
	(444.21, 85.47);

\path[draw=drawColor,line width= 0.6pt,line join=round] ( 83.08,128.60) --
	(444.21,128.60);

\path[draw=drawColor,line width= 0.6pt,line join=round] ( 83.08,171.73) --
	(444.21,171.73);

\path[draw=drawColor,line width= 0.6pt,line join=round] ( 83.08, 42.34) --
	( 83.08,211.31);

\path[draw=drawColor,line width= 0.6pt,line join=round] (175.20, 42.34) --
	(175.20,211.31);

\path[draw=drawColor,line width= 0.6pt,line join=round] (267.33, 42.34) --
	(267.33,211.31);

\path[draw=drawColor,line width= 0.6pt,line join=round] (359.46, 42.34) --
	(359.46,211.31);
\definecolor{indjiq}{RGB}{255,0,0}

\path[draw=indjiq,line width= 2.0pt,dash pattern=on 4pt off 4pt ,line join=round] ( 90.45, 42.34) --
	( 97.82, 42.34) --
	(105.19, 42.34) --
	(112.56, 42.34) --
	(119.93, 42.34) --
	(142.04, 42.34) --
	(149.41, 42.34) --
	(164.15, 42.34) --
	(171.52, 42.34) --
	(178.89, 42.34) --
	(186.26, 42.34) --
	(201.00, 42.34) --
	(208.37, 42.34) --
	(223.11, 42.34) --
	(230.48, 42.89) --
	(245.22, 44.29) --
	(252.59, 44.98) --
	(267.33, 46.11) --
	(274.70, 46.55) --
	(282.07, 46.92) --
	(289.44, 42.34) --
	(304.18, 42.34) --
	(311.55, 42.34) --
	(318.92, 42.34) --
	(326.29, 42.34) --
	(333.66, 42.34) --
	(355.77, 42.34) --
	(363.14, 42.34) --
	(370.51, 42.34) --
	(377.88, 42.34) --
	(399.99, 42.34) --
	(407.36, 42.34) --
	(414.73, 42.34) --
	(422.10, 42.34) --
	(444.21, 43.10);
\definecolor{iidjiq}{RGB}{60,180,60}

\path[draw=iidjiq,line width= 2.0pt,dash pattern=on 2pt off 2pt ,line join=round] ( 90.45, 42.34) --
	( 97.82, 42.34) --
	(105.19, 42.34) --
	(112.56, 42.34) --
	(119.93, 42.34) --
	(142.04, 42.34) --
	(149.41, 42.34) --
	(164.15, 42.34) --
	(171.52, 42.34) --
	(178.89, 42.34) --
	(186.26, 42.34) --
	(201.00, 42.34) --
	(208.37, 42.34) --
	(223.11, 42.42) --
	(230.48, 42.89) --
	(245.22, 44.29) --
	(252.59, 44.98) --
	(267.33, 46.11) --
	(274.70, 46.55) --
	(282.07, 46.92) --
	(289.44, 47.32) --
	(304.18, 48.91) --
	(311.55, 50.04) --
	(318.92, 51.33) --
	(326.29, 52.77) --
	(333.66, 54.33) --
	(355.77, 55.93) --
	(363.14, 55.22) --
	(370.51, 54.68) --
	(377.88, 54.05) --
	(399.99, 50.23) --
	(407.36, 48.91) --
	(414.73, 47.65) --
	(422.10, 46.45) --
	(444.21, 43.10);
\definecolor{brjiq}{RGB}{0,0,255}

\path[draw=brjiq,line width= 2.0pt,dash pattern=on 1pt off 3pt on 4pt off 3pt ,line join=round] ( 90.45,113.55) --
	( 97.82,120.20) --
	(105.19,127.41) --
	(112.56,135.08) --
	(119.93,143.14) --
	(142.04,168.84) --
	(149.41,177.66) --
	(164.15,195.23) --
	(171.52,203.06) --
	(178.89,208.54) --
	(186.26,211.31) --
	(201.00,209.92) --
	(208.37,205.83) --
	(223.11,190.30) --
	(230.48,183.19) --
	(245.22,174.53) --
	(252.59,172.41) --
	(267.33,171.23) --
	(274.70,167.25) --
	(282.07,155.97) --
	(289.44,138.61) --
	(304.18,107.14) --
	(311.55, 92.81) --
	(318.92, 80.71) --
	(326.29, 71.70) --
	(333.66, 65.50) --
	(355.77, 58.06) --
	(363.14, 58.10) --
	(370.51, 60.41) --
	(377.88, 65.31) --
	(399.99, 71.86) --
	(407.36, 73.09) --
	(414.73, 74.27) --
	(422.10, 75.37) --
	(444.21, 77.98);
\definecolor{srcjiq}{RGB}{180,180,0}

\path[draw=srcjiq,line width= 2.0pt,line join=round] ( 90.45, 42.34) --
	( 97.82, 42.34) --
	(105.19, 42.34) --
	(112.56, 42.34) --
	(119.93, 42.34) --
	(142.04, 42.34) --
	(149.41, 42.34) --
	(164.15, 42.34) --
	(171.52, 42.34) --
	(178.89, 42.34) --
	(186.26, 42.34) --
	(201.00, 42.34) --
	(208.37, 42.34) --
	(223.11, 43.05) --
	(230.48, 48.08) --
	(245.22, 62.88) --
	(252.59, 66.37) --
	(267.33, 71.17) --
	(274.70, 73.24) --
	(282.07, 75.10) --
	(289.44, 78.23) --
	(304.18, 92.73) --
	(311.55, 97.23) --
	(318.92,101.41) --
	(326.29,105.91) --
	(333.66,110.94) --
	(355.77,123.65) --
	(363.14,126.09) --
	(370.51,127.83) --
	(377.88,127.91) --
	(399.99,113.86) --
	(407.36,108.38) --
	(414.73,103.04) --
	(422.10, 97.85) --
	(444.21, 83.16);
\end{scope}
\begin{scope}
\path[clip] (  0.00,  0.00) rectangle (578.16,216.81);
\definecolor{drawColor}{RGB}{0,0,0}

\path[draw=drawColor,line width= 0.6pt,line join=round] ( 83.08, 42.34) --
	( 83.08,211.31);
\end{scope}
\begin{scope}
\path[clip] (  0.00,  0.00) rectangle (578.16,216.81);
\definecolor{drawColor}{gray}{0.30}

\node[text=drawColor,anchor=base east,inner sep=0pt, outer sep=0pt, scale=  1.70] at ( 78.13, 36.49) {1.00};

\node[text=drawColor,anchor=base east,inner sep=0pt, outer sep=0pt, scale=  1.70] at ( 78.13, 79.62) {1.05};

\node[text=drawColor,anchor=base east,inner sep=0pt, outer sep=0pt, scale=  1.70] at ( 78.13,122.75) {1.10};

\node[text=drawColor,anchor=base east,inner sep=0pt, outer sep=0pt, scale=  1.70] at ( 78.13,165.87) {1.15};
\end{scope}
\begin{scope}
\path[clip] (  0.00,  0.00) rectangle (578.16,216.81);
\definecolor{drawColor}{gray}{0.20}

\path[draw=drawColor,line width= 0.6pt,line join=round] ( 80.33, 42.34) --
	( 83.08, 42.34);

\path[draw=drawColor,line width= 0.6pt,line join=round] ( 80.33, 85.47) --
	( 83.08, 85.47);

\path[draw=drawColor,line width= 0.6pt,line join=round] ( 80.33,128.60) --
	( 83.08,128.60);

\path[draw=drawColor,line width= 0.6pt,line join=round] ( 80.33,171.73) --
	( 83.08,171.73);
\end{scope}
\begin{scope}
\path[clip] (  0.00,  0.00) rectangle (578.16,216.81);
\definecolor{drawColor}{RGB}{0,0,0}

\path[draw=drawColor,line width= 0.6pt,line join=round] ( 83.08, 42.34) --
	(444.21, 42.34);
\end{scope}
\begin{scope}
\path[clip] (  0.00,  0.00) rectangle (578.16,216.81);
\definecolor{drawColor}{gray}{0.20}

\path[draw=drawColor,line width= 0.6pt,line join=round] ( 83.08, 39.59) --
	( 83.08, 42.34);

\path[draw=drawColor,line width= 0.6pt,line join=round] (175.20, 39.59) --
	(175.20, 42.34);

\path[draw=drawColor,line width= 0.6pt,line join=round] (267.33, 39.59) --
	(267.33, 42.34);

\path[draw=drawColor,line width= 0.6pt,line join=round] (359.46, 39.59) --
	(359.46, 42.34);
\end{scope}
\begin{scope}
\path[clip] (  0.00,  0.00) rectangle (578.16,216.81);
\definecolor{drawColor}{gray}{0.30}

\node[text=drawColor,anchor=base,inner sep=0pt, outer sep=0pt, scale=  1.70] at ( 83.08, 25.68) {0};

\node[text=drawColor,anchor=base,inner sep=0pt, outer sep=0pt, scale=  1.70] at (175.20, 25.68) {0.25};

\node[text=drawColor,anchor=base,inner sep=0pt, outer sep=0pt, scale=  1.70] at (267.33, 25.68) {0.50};

\node[text=drawColor,anchor=base,inner sep=0pt, outer sep=0pt, scale=  1.70] at (359.46, 25.68) {0.75};
\end{scope}
\begin{scope}
\path[clip] (  0.00,  0.00) rectangle (578.16,216.81);
\definecolor{drawColor}{RGB}{0,0,0}

\node[text=drawColor,anchor=base,inner sep=0pt, outer sep=0pt, scale=  1.60] at (263.64,  8.61) {$\lambda$};
\end{scope}
\begin{scope}
\path[clip] (  0.00,  0.00) rectangle (578.16,216.81);
\definecolor{drawColor}{RGB}{0,0,0}

\node[text=drawColor,rotate= 90.00,anchor=base west,inner sep=0pt, outer sep=0pt, scale=  1.60] at ( 16.52, 50) {{\footnotesize $\mathbb E[T]$ normalized by that}};

\node[text=drawColor,rotate= 90.00,anchor=base west,inner sep=0pt, outer sep=0pt, scale=  1.60] at ( 38.94, 65) {{\footnotesize of $\genseed$}};
\end{scope}

\begin{scope}
\path[clip] (  0.00,  0.00) rectangle (650,400.67);
\definecolor{indjiq}{RGB}{255,0,0}
\path[draw=indjiq,line width= 2.0pt,dash pattern=on 4pt off 4pt ,line join=round] ( 250+200,260+3+10-70-20) -- ( 270+200+10,260+3+10-70-20);
\node[anchor=west,inner sep=0pt, outer sep=0pt, scale= 1.40] at (280+210,260+3+10-70-20) {{\footnotesize $\ipoptd{\indq}$}};

\definecolor{iidjiq}{RGB}{60,180,60}
\path[draw=iidjiq,line width= 2.0pt,dash pattern=on 2pt off 2pt ,line join=round] ( 250+200,240+3+10-75-20-5) -- ( 270+210,240+3+10-75-20-5);
\node[anchor=west,inner sep=0pt, outer sep=0pt, scale= 1.40] at (280+210,240+3+10-75-20-5) {{\footnotesize $\ipoptd{\iidq}$}};

\definecolor{srcjiq}{RGB}{180,180,0}
\path[draw=srcjiq,line width= 2.0pt,line join=round] ( ( 250+200,220+3+10-80-20-10) -- ( 270+210,220+3+10-80-20-10);
\node[anchor=west,inner sep=0pt, outer sep=0pt, scale= 1.40] at (280+210,220+3+10-80-20-10) {{\footnotesize $\ipoptd{\srcq}$}};

\definecolor{brjiq}{RGB}{0,0,255}
\path[draw=brjiq,line width= 2.0pt,dash pattern=on 1pt off 3pt on 4pt off 3pt ,line join=round] ( 250+200,200+3+10-85-20-15) -- ( 270+210,200+3+10-85-20-15);
\node[anchor=west,inner sep=0pt, outer sep=0pt, scale= 1.40] at (280+210,200+3+10-85-20-15) {{\footnotesize $\ipoptd{\brq}$}};
\end{scope}

\end{tikzpicture}}
    \caption{$\mathbb E[T]$ relative to that of $\genseed$ (i.e., $\mathbb E[T]^{\mathsf{DP}}/\mathbb E[T]^{\genseed}$) as a function of $\lambda$ for the parameter settings where $s=d=3$, $\lambda$ varies over $\{0.02,0.04,\ldots,0.98\}$,  $(q_1,q_2,q_3)=(1/3,1/6,1/2)$ and $(R_1,R_2)=(5,2)$, yielding $(\mu_1,\mu_2,\mu_3)=(2,4/5,2/5)$, for the dispatching policies $\mathsf{DP}\in\{\ipoptd{\indq},\ipoptd{\iidq},\ipoptd{\srcq},\ipoptd{\brq}\}$.}
    \label{fig:lambda-plot}
\end{figure}

The results in Figure~\ref{fig:mean} were aggregated across the entire space of parameter settings; in Figure~\ref{fig:lambda-plot} we instead present results for a collection of settings where all parameters are fixed except for $\lambda$. This allows us to provide a \emph{direct} comparison of how our dispatching policies perform with respect to their $\mathbb E[T]$ values across the spectrum of arrival rates.
We first observe that $\ipoptd\srcq$ and $\ipoptd\brq$ each exhibit their best performance in different ranges of $\lambda$ values, which provides some insight into the reversed relationship these policies exhibit when comparing their mean and median $\mathbb E[T]$ values over the entire parameter set.  That said, these two policies perform considerably worse than the other policies examined. Both $\ipoptd\iidq$ and $\ipoptd\indq$ achieve performance comparable to $\genseed$, with $\ipoptd\indq$ indistinguishable from $\genseed$ at all but a small range of $\lambda$ values.


\subsection{The ``optimal'' dispatching policies found by IPOPT}

In the previous subsection we have evaluated the \emph{performance} of the dispatching policies resulting from the optimal solutions found by IPOPT when given the optimization problems associated with the $\DP{\qrf}{\cida}$ family of dispatching rules under various querying rule families $\qrf$.  In this subsection, we turn to numerically studying the \emph{solutions} (as found by IPOPT) themselves; that is, we study the best policies found by IPOPT across our parameter settings, although to facilitate comprehensible visualizations, we restrict attention to the setting where $s=d=2$.  We also restrict attention to those (IPOPT-determined) dispatching policy families that performed best based on the study from the previous subsection: $\ipoptd\iidq$, $\ipoptd\indq$, $\ipoptd\genq$, and $\genseed$.  We caution that the results presented here may reveal more about the idiosyncrasies of IPOPT than they do about the ``true optimal'' policies belong to the dispatching policy families of interest.

\begin{figure}
    \begin{subfigure}[t]{.8\textwidth}
        \centering
        \scalebox{.5}{
\begin{tikzpicture}[x=1pt,y=1pt]
\definecolor{fillColor}{RGB}{255,255,255}
\path[use as bounding box,fill=fillColor,fill opacity=0.00] (0,0) rectangle (867.24, 72.27);
\begin{scope}
\path[clip] (  0.00,  0.00) rectangle (867.24, 72.27);
\definecolor{fillColor}{RGB}{255,255,255}

\path[fill=fillColor] (168.17,  4.71) rectangle (699.07, 67.56);
\end{scope}
\begin{scope}
\path[clip] (  0.00,  0.00) rectangle (867.24, 72.27);
\node[inner sep=0pt,outer sep=0pt,anchor=south west,rotate=  0.00] at (266.78,  33.60) {
	\pgfimage[width=426.79pt,height= 28.45pt,interpolate=true]{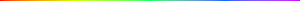}};
\end{scope}
\begin{scope}
\path[clip] (  0.00,  0.00) rectangle (867.24, 72.27);
\definecolor{drawColor}{RGB}{0,0,0}

\node[text=drawColor,anchor=base,inner sep=0pt, outer sep=0pt, scale=  1.90] at (267.49, 14.71) {0.00};

\node[text=drawColor,anchor=base,inner sep=0pt, outer sep=0pt, scale=  1.90] at (373.83, 14.71) {0.25};

\node[text=drawColor,anchor=base,inner sep=0pt, outer sep=0pt, scale=  1.90] at (480.17, 14.71) {0.50};

\node[text=drawColor,anchor=base,inner sep=0pt, outer sep=0pt, scale=  1.90] at (586.52, 14.71) {0.75};

\node[text=drawColor,anchor=base,inner sep=0pt, outer sep=0pt, scale=  1.90] at (692.86, 14.71) {1.00};
\end{scope}
\begin{scope}
\path[clip] (  0.00,  0.00) rectangle (867.24, 72.27);
\definecolor{drawColor}{RGB}{0,0,0}

\node[text=drawColor,anchor=base west,inner sep=0pt, outer sep=0pt, scale=  1.10] at (100, 40) {\Huge $\alpha_1\left(\zerovec,\left(1,1\right)\right)$};
\end{scope}
\begin{scope}
\path[clip] (  0.00,  0.00) rectangle (867.24, 72.27);
\definecolor{drawColor}{RGB}{255,255,255}

\path[draw=drawColor,line width= 0.2pt,line join=round] (267.49, 33.60) -- (267.49, 39.29);

\path[draw=drawColor,line width= 0.2pt,line join=round] (373.83, 33.60) -- (373.83, 39.29);

\path[draw=drawColor,line width= 0.2pt,line join=round] (480.17, 33.60) -- (480.17, 39.29);

\path[draw=drawColor,line width= 0.2pt,line join=round] (586.52, 33.60) -- (586.52, 39.29);

\path[draw=drawColor,line width= 0.2pt,line join=round] (692.86, 33.60) -- (692.86, 39.29);

\path[draw=drawColor,line width= 0.2pt,line join=round] (267.49, 56.37) -- (267.49, 62.06);

\path[draw=drawColor,line width= 0.2pt,line join=round] (373.83, 56.37) -- (373.83, 62.06);

\path[draw=drawColor,line width= 0.2pt,line join=round] (480.17, 56.37) -- (480.17, 62.06);

\path[draw=drawColor,line width= 0.2pt,line join=round] (586.52, 56.37) -- (586.52, 62.06);

\path[draw=drawColor,line width= 0.2pt,line join=round] (692.86, 56.37) -- (692.86, 62.06);
\end{scope}
\end{tikzpicture}}
    \end{subfigure}
    \smallskip
    \vspace{-2.5em}
    \begin{subfigure}[t]{.35\textwidth}
        \centering
        \scalebox{.5}{\input{figs/TernaryAndPolicies/query_IIDCID}}
    \vspace*{-10mm}\caption{$\ipoptd\iidq$}
    \end{subfigure}
    \hspace{5em}
    \begin{subfigure}[t]{.35\textwidth}
        \centering
        \scalebox{.5}{\input{figs/TernaryAndPolicies/query_INDCID}}
    \vspace*{-10mm}\caption{$\ipoptd\indq$}
    \end{subfigure}
    
    \medskip
    \vspace{-1cm}
    
    \begin{subfigure}[t]{.35\textwidth}
        \centering
        \scalebox{.5}{\input{figs/TernaryAndPolicies/query_GENCID}}
    \vspace*{-10mm}\caption{$\ipoptd\genq$}
    \end{subfigure}
    \hspace{5em}
    \begin{subfigure}[t]{.35\textwidth}
        \centering
        \scalebox{.5}{\input{figs/TernaryAndPolicies/query_GENCIDseeded}}
    \vspace*{-10mm}\caption{$\genseed$}
    \end{subfigure}
    \caption{IPOPT optimal dispatching policies $s=d=2$.  The parameter $\alpha_1(2,(1,1))=0$ in all cases shown.}
    \label{fig:query_results}
\end{figure}

Figure~\ref{fig:query_results} shows four plots---one for each of the aforementioned families of dispatching policies.  In each plot the optimal policy associated with each parameter setting is denoted by a single point.  Each point's position in the ternary plot gives the values of $p(2,0)$, $p(1,1)$, and $p(0,2)$, which collectively describe the policy's querying rule; note that $p(2,0)+p(1,1)+p(0,2)=1$.  Meanwhile, the color or shading of each point denotes the $\alpha_1(\mathbf0,(1,1))$ parameter associated with the policy---in all cases plotted, this single parameter uniquely identifies the assignment rule, as in all such cases IPOPT reported $\alpha_1(2,(1,1))=0$, and all other assignment rule parameters can be computed from these two.  That is, we find that in all of the optimal policies reported by IPOPT, jobs are never assigned to busy class-$1$ servers when an idle class-$2$ server has been queried.  

\begin{remark}
In Figure~\ref{fig:query_results}, we see that the lowest values of $\alpha_1(\mathbf0,(1,1))$ are associated with those policies where  $p(1,1)=0$; however, such policies are precisely those where the choice of $\alpha_1(\mathbf0,(1,1))$ is immaterial; When we fix $p(1,1)=0$, $\mathbb E[T]$ is entirely insensitive to $\alpha_1(\mathbf0,(1,1))$, as it doesn't matter how we assign jobs under the query mix $(1,1)$ if the probability of querying according to such a mix is set to zero.
\end{remark}

Upon looking at Figure~\ref{fig:query_results} we immediately observe the following: (i) all $\ipoptd\iidq$ policies lie on a ``curve'' on the ternary plot, (ii) all $\ipoptd\indq$ policies lie on either a curve or satisfy at least one of the $p(2,0)=0$ or $p(0,2)=0$ line segments, (iii) all $\ipoptd\genq$ policies line on at least one of the $p(2,0)=0$, $p(1,1)=0$, or $p(0,2)=0$ line segments, and (iv) the $\genseedq$ policies exhibit qualitatively similar behavior to that associated with $\ipoptd\indq$, although far fewer of the points lie on a curve.

Closer inspect reveals that all of the curves alluded to above are indeed the same.  In fact, this curve is defined by \[\left\{(p(2,0),p(1,1),p(0,2))=\left(x^2,2x(1-x),(1-x)^2\right)\colon x\in[0,1]\right\},\] which is precisely the set of querying rules comprising $\iidq$ when $s=d=2$ (i.e., this is the the feasible set of querying rules for the optimization problem associated with $\DP{\iidq}{\cida}$ when $s=d=2$).  Meanwhile, with some work, one can show that the set of querying rules comprised by $\indq$ when $s=d=2$ corresponds to the region bounded by the ``$\iidq$ curve'' and the lines $p(2,0)=0$ and $p(0,2)=0$ (inclusive). We find that \emph{every} querying rule reported as optimal by IPOPT that is contained within $\indq$---which includes all of the querying rules of the $\genseedq$ policies---is specifically contained within the \emph{boundary} of $\indq$.

The only dispatching policies reported by IPOPT that do not use $\indq$ querying are a subset of the $\ipoptq{\genq}$ policies where $p(1,1)=0$; in fact, such policies use $\srcq$ querying.  Meanwhile, the $\genseedq$ policies are always within $\indq$: it appears that seeding IPOPT with the ``$\indq$ solution'' when giving it a ``$\genq$ problem'' allows IPOPT to avoid finding dispatching policies using $\srcq$ querying in favor of those using $\indq$ querying, ultimately yielding better performance.  Moreover, we observe that very few $\genseed$ policies actually lie on the curve (i.e., very few such policies are in $\iidq$); the optimal policies tend to be those where either $p(2,0)=0$ or $p(0,2)=0$.  Such querying rules are precisely those that are either deterministic, or ``semi-deterministic'' in the sense that at least one server of a chosen (fixed) class $i\in\{1,2\}$, and determines the class for the remaining query randomly.

We leave it to future work to determine whether, how, and to what extent these observations can (i) yield results concerning ``true optimal'' policies and (ii) be generalized to the cases where $s>2$ and/or $d>2$.

\begin{remark}
While we have avoided plotting results for $\ipoptd{\brq}$ in the interest of brevity, we note here that we find two kinds of solutions associated with the optimization problem for $\ipoptd{\brq}$: those where $\alpha_1(2,(1,1))=0$ as in the case of the policies discussed above, and those where $\alpha_1(\mathbf0,(1,1))=1$, while $\alpha_1(2,(1,1))>0$.  The latter policies are precisely those where jobs are never assigned to a class-$2$ server when a class-$1$ server has been queried except---and only sometimes---when said class-$2$ server is idle and the class-$1$ server is busy.   We conjecture that the need to consider such policies under $\brq$ is a result of the fact that $\brq$ prohibits any optimization associated with the querying rule.
\end{remark}


\subsection{Performance under the $\detq$ querying rule family}

\begin{figure}
    \centering
    \scalebox{.5}{
\begin{tikzpicture}[x=1pt,y=1pt]
\definecolor{fillColor}{RGB}{255,255,255}
\path[use as bounding box,fill=fillColor,fill opacity=0.00] (0,0) rectangle (505.89,216.81);
\begin{scope}
\path[clip] ( 0.00, 0.00) rectangle (505.89,216.81);
\definecolor{drawColor}{RGB}{255,255,255}
\definecolor{fillColor}{RGB}{255,255,255}

\path[draw=drawColor,line width= 0.6pt,line join=round,line cap=round,fill=fillColor] ( 0.00, 0.00) rectangle (505.89,216.81);
\end{scope}
\begin{scope}
\path[clip] ( 78.58, 44.11) rectangle (500.39,211.31);
\definecolor{drawColor}{RGB}{255,255,255}

\path[draw=drawColor,line width= 0.6pt,line join=round] ( 78.58, 73.42) --
(500.39, 73.42);

\path[draw=drawColor,line width= 0.6pt,line join=round] ( 78.58,116.85) --
(500.39,116.85);

\path[draw=drawColor,line width= 0.6pt,line join=round] ( 78.58,160.28) --
(500.39,160.28);

\path[draw=drawColor,line width= 0.6pt,line join=round] ( 78.58,203.71) --
(500.39,203.71);

\path[draw=drawColor,line width= 0.6pt,line join=round] ( 78.58, 44.11) --
( 78.58,211.31);

\path[draw=drawColor,line width= 0.6pt,line join=round] (186.19, 44.11) --
(186.19,211.31);

\path[draw=drawColor,line width= 0.6pt,line join=round] (293.79, 44.11) --
(293.79,211.31);

\path[draw=drawColor,line width= 0.6pt,line join=round] (401.39, 44.11) --
(401.39,211.31);
\definecolor{drawColor}{RGB}{0,0,0}

\path[draw=drawColor,line width= 0.6pt,line join=round] ( 78.58, 73.42) -- (500.39, 73.42);
\definecolor{detjiq22}{RGB}{255,0,0}

\path[draw=detjiq22,line width= 2.3pt,dash pattern=on 7pt off 3pt ,line join=round] ( 87.19, 73.52) --
( 95.80, 73.81) --
(104.41, 74.29) --
(113.02, 74.96) --
(121.63, 75.82) --
(130.23, 76.87) --
(138.84, 78.12) --
(147.45, 79.55) --
(156.06, 81.18) --
(164.67, 83.00) --
(173.28, 85.01) --
(181.88, 87.21) --
(190.49, 89.61) --
(199.10, 92.19) --
(207.71, 94.97) --
(216.32, 97.94) --
(224.93,101.10) --
(233.53,104.45) --
(242.14,107.99) --
(250.75,111.73) --
(259.36,115.65) --
(267.97,119.77) --
(276.57,124.08) --
(285.18,128.58) --
(293.79,133.27) --
(302.40,138.16) --
(311.01,143.23) --
(319.62,148.50) --
(328.22,153.94) --
(336.83,158.42) --
(345.44,161.17) --
(354.05,162.12) --
(362.66,162.21) --
(371.27,162.01) --
(379.87,161.44) --
(388.48,160.61) --
(397.09,160.99) --
(405.70,162.59) --
(414.31,165.08) --
(422.92,168.22) --
(431.52,171.83) --
(440.13,175.74) --
(448.74,179.81) --
(457.35,183.83) --
(465.96,187.50) --
(474.57,191.00) --
(483.17,195.27) --
(491.78,200.86) --
(500.39,206.85);
\definecolor{detjsq22}{RGB}{0,0,255}

\path[draw=detjsq22,line width= 1.5pt,line join=round] ( 87.19, 73.52) --
( 95.80, 73.81) --
(104.41, 74.28) --
(113.02, 74.94) --
(121.63, 75.79) --
(130.23, 76.82) --
(138.84, 78.01) --
(147.45, 79.38) --
(156.06, 80.90) --
(164.67, 82.58) --
(173.28, 84.39) --
(181.88, 86.34) --
(190.49, 88.40) --
(199.10, 90.57) --
(207.71, 92.83) --
(216.32, 95.17) --
(224.93, 97.56) --
(233.53,100.00) --
(242.14,102.47) --
(250.75,104.93) --
(259.36,107.38) --
(267.97,109.79) --
(276.57,112.12) --
(285.18,114.35) --
(293.79,116.46) --
(302.40,118.39) --
(311.01,120.11) --
(319.62,121.58) --
(328.22,122.75) --
(336.83,123.55) --
(345.44,123.92) --
(354.05,123.79) --
(362.66,123.06) --
(371.27,121.66) --
(379.87,119.45) --
(388.48,116.14) --
(397.09,111.86) --
(405.70,106.48) --
(414.31,100.45) --
(422.92, 93.79) --
(431.52, 86.15) --
(440.13, 77.16) --
(448.74, 66.38) --
(457.35, 53.32) --
(465.96, 37.30) --
(474.57, 17.77) --
(481.02, 0.00);
\definecolor{detjiq}{RGB}{60,180,60}

\path[draw=detjiq,line width= 2.3pt,dash pattern=on 1pt off 3pt on 4pt off 3pt ,line join=round] ( 87.19, 73.42) --
( 95.80, 73.42) --
(104.41, 73.42) --
(113.02, 73.42) --
(121.63, 73.42) --
(130.23, 73.42) --
(138.84, 73.42) --
(147.45, 73.42) --
(156.06, 73.42) --
(164.67, 73.42) --
(173.28, 73.42) --
(181.88, 73.42) --
(190.49, 73.42) --
(199.10, 73.42) --
(207.71, 73.42) --
(216.32, 73.42) --
(224.93, 73.42) --
(233.53, 73.42) --
(242.14, 73.42) --
(250.75, 73.42) --
(259.36, 73.42) --
(267.97, 73.42) --
(276.57, 73.42) --
(285.18, 73.42) --
(293.79, 73.42) --
(302.40, 73.42) --
(311.01, 73.42) --
(319.62, 73.42) --
(328.22, 73.42) --
(336.83, 73.42) --
(345.44, 73.42) --
(354.05, 73.42) --
(362.66, 73.42) --
(371.27, 73.42) --
(379.87, 73.42) --
(388.48, 73.57) --
(397.09, 75.03) --
(405.70, 77.99) --
(414.31, 82.44) --
(422.92, 88.61) --
(431.52, 96.96) --
(440.13, 97.82) --
(448.74, 97.96) --
(457.35, 98.01) --
(465.96, 97.81) --
(474.57, 97.58) --
(483.17, 98.23) --
(491.78, 99.65) --
(500.39,101.12);
\end{scope}
\begin{scope}
\path[clip] ( 0.00, 0.00) rectangle (505.89,216.81);
\definecolor{drawColor}{RGB}{0,0,0}

\path[draw=drawColor,line width= 0.6pt,line join=round] ( 78.58, 44.11) --
( 78.58,211.31);
\end{scope}
\begin{scope}
\path[clip] ( 0.00, 0.00) rectangle (505.89,216.81);
\definecolor{drawColor}{gray}{0.30}

\node[text=drawColor,anchor=base east,inner sep=0pt, outer sep=0pt, scale= 1.70] at ( 73.63, 67.57) {1.0};

\node[text=drawColor,anchor=base east,inner sep=0pt, outer sep=0pt, scale= 1.70] at ( 73.63,111.00) {1.2};

\node[text=drawColor,anchor=base east,inner sep=0pt, outer sep=0pt, scale= 1.70] at ( 73.63,154.43) {1.4};

\node[text=drawColor,anchor=base east,inner sep=0pt, outer sep=0pt, scale= 1.70] at ( 73.63,197.86) {1.6};
\end{scope}
\begin{scope}
\path[clip] ( 0.00, 0.00) rectangle (505.89,216.81);
\definecolor{drawColor}{gray}{0.20}

\path[draw=drawColor,line width= 0.6pt,line join=round] ( 75.83, 73.42) --
( 78.58, 73.42);

\path[draw=drawColor,line width= 0.6pt,line join=round] ( 75.83,116.85) --
( 78.58,116.85);

\path[draw=drawColor,line width= 0.6pt,line join=round] ( 75.83,160.28) --
( 78.58,160.28);

\path[draw=drawColor,line width= 0.6pt,line join=round] ( 75.83,203.71) --
( 78.58,203.71);
\end{scope}
\begin{scope}
\path[clip] ( 0.00, 0.00) rectangle (505.89,216.81);
\definecolor{drawColor}{RGB}{0,0,0}

\path[draw=drawColor,line width= 0.6pt,line join=round] ( 78.58, 44.11) --
(500.39, 44.11);
\end{scope}
\begin{scope}
\path[clip] ( 0.00, 0.00) rectangle (505.89,216.81);
\definecolor{drawColor}{gray}{0.20}

\path[draw=drawColor,line width= 0.6pt,line join=round] ( 78.58, 41.36) --
( 78.58, 44.11);

\path[draw=drawColor,line width= 0.6pt,line join=round] (186.19, 41.36) --
(186.19, 44.11);

\path[draw=drawColor,line width= 0.6pt,line join=round] (293.79, 41.36) --
(293.79, 44.11);

\path[draw=drawColor,line width= 0.6pt,line join=round] (401.39, 41.36) --
(401.39, 44.11);
\end{scope}
\begin{scope}
\path[clip] ( 0.00, 0.00) rectangle (505.89,216.81);
\definecolor{drawColor}{gray}{0.30}

\node[text=drawColor,anchor=base,inner sep=0pt, outer sep=0pt, scale= 1.70] at ( 78.58, 27.45) {0.00};

\node[text=drawColor,anchor=base,inner sep=0pt, outer sep=0pt, scale= 1.70] at (186.19, 27.45) {0.25};

\node[text=drawColor,anchor=base,inner sep=0pt, outer sep=0pt, scale= 1.70] at (293.79, 27.45) {0.50};

\node[text=drawColor,anchor=base,inner sep=0pt, outer sep=0pt, scale= 1.70] at (401.39, 27.45) {0.75};
\end{scope}
\begin{scope}
\path[clip] ( 0.00, 0.00) rectangle (505.89,216.81);
\definecolor{drawColor}{RGB}{0,0,0}

\node[text=drawColor,anchor=base,inner sep=0pt, outer sep=0pt, scale= 1.80] at (289.49, 9.00) {$\lambda$};
\end{scope}
\begin{scope}
\path[clip] ( 0.00, 0.00) rectangle (505.89,216.81);
\definecolor{drawColor}{RGB}{0,0,0}

\node[text=drawColor,rotate= 90.00,anchor=base west,inner sep=0pt, outer sep=0pt, scale= 1.80] at ( 17.90, 60.00) {{\footnotesize $\mathbb E[T]$ normalized by that }};

\node[text=drawColor,rotate= 90.00,anchor=base west,inner sep=0pt, outer sep=0pt, scale= 1.80] at ( 42.17, 75.04) {{\footnotesize of $\genseed$}};
\end{scope}

\begin{scope}
\path[clip] (  0.00,  0.00) rectangle (361.35,400.67);

\definecolor{detjiq22}{RGB}{255,0,0}
\path[draw=detjiq22,line width= 2.3pt,dash pattern=on 7pt off 3pt ,line join=round] ( 70+25,170+23) -- ( 100+25,170+23);
\node[anchor=west,inner sep=0pt, outer sep=0pt, scale= 1.70] at (120+15,170+23) {{\footnotesize $\jiq(2,2)$}};

\definecolor{detjsq22}{RGB}{0,0,255}
\path[draw=detjsq22,line width= 1.5pt,line join=round] ( 70+25,150+23) -- ( 100+25,150+23);
\node[anchor=west,inner sep=0pt, outer sep=0pt, scale= 1.70] at (120+15,150+23) {{\footnotesize $\jsq(2,2)$}};

\definecolor{detjiq}{RGB}{60,180,60}
\path[draw=detjiq,line width= 2.3pt,dash pattern=on 1pt off 3pt on 4pt off 3pt ,line join=round] ( 70+25,130+23) -- ( 100+25,130+23);
\node[anchor=west,inner sep=0pt, outer sep=0pt, scale= 1.70] at (120+15,130+23) {{\footnotesize $\ipoptd{\detq}$}};

\end{scope}
\end{tikzpicture}}
    \caption{$\mathbb E[T]$ relative to that of $\genseed$ (i.e., $\mathbb E[T]^{\mathsf{DP}}/\mathbb E[T]^{\genseed}$) as a function of $\lambda$ for the parameter settings where $s=2$ and $d=4$, $\lambda$ varies over $\{0.02,0.04,\ldots,0.98\}$,  $(q_1,q_2)=(4/5,1/5)$, and $R_1=5$, yielding $(\mu_1,\mu_2)=(25/21,5/21)$, for the dispatching policies $\mathsf{DP}\in\{\jiq(2,2),\jsq(2,2),\ipoptd\detq\}$.}
    \label{fig:peva}
\end{figure}
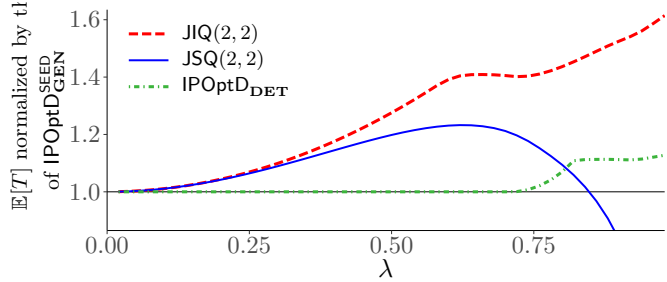

Finally, we study $\detq$ in the case where $s=2$ and $d=4$ by comparing the performance of $\ipoptd\detq$ to that of the $\mathsf{JIQ}(2,2)$\label{jiqdfds} and $\mathsf{JSQ}(2,2)$\label{jsqdfds} dispatching policies studied in~\cite{gardner2020scalable}.
We consider $\mathbb E[T]$ under these three policies, normalized to that under $\genseed$, for all 12 combinations of $R_1$ and $(q_1,q_2)$ that are studied in Section 5.2 of \cite{gardner2020scalable}.
All of these 12 combinations yield similar insights; Figure~\ref{fig:peva} shows results for the particular setting in which $R_1 = 5$ and $(q_1,q_2)=(4/5,1/5)$ (note that this parameterization is not a member of the space discussed in Section~\ref{sec:parameters}, rather it is taken from \cite{gardner2020scalable}).
The $\mathsf{JIQ}(2,2)$ dispatching policy often performs considerably worse than $\ipoptd\detq$, the performance of which is indistinguishable from $\genseed$ except at the highest load values.
However, despite its generally strong performance, there is no advantage to using the $\ipoptd\detq$ policy instead of $\genseed$, as $\ipoptd\detq$ features a substantially higher mean runtime: across the parameter settings shown in Figure~\ref{fig:peva} (respectively, all of the parameter settings considered in Section 5.2 of \cite{gardner2020scalable}), the mean runtime of $\ipoptd\detq$ is more than 95\% higher (respectively, more than 20\% higher) than that of $\genseed$.
This high runtime is likely due to the fact that $\ipoptd\detq$ must solve six smaller subproblems, whereas $\genseed$ solves only two (larger) optimization problems. Meanwhile, the queue length-aware $\mathsf{JSQ}(2,2)$ policy tracks $\mathsf{JIQ}(2,2)$ at low load, but considerably outperforms the other policies, including $\genseed$, as load approaches 1.  This suggests that there is considerable value in investigating $\clda$-driven dispatching policies, which we explore in the next section.

\begin{remark}
The $\mathsf{JIQ}(2,2)$ dispatching policy from~\cite{gardner2020scalable} would be denoted in our framework as $\ipoptd{\mathsf{DQR}_{(2,2)}}\in \DP{\detq}{\cida}$, recalling that $\mathsf{DQR}_{\dvec}$denotes the querying rule that always queries so that $\Dvec=\dvec$. Note that while  $\jsq(2,2)\in\DP{\mathsf{DQR}_{(2,2)}}{\clda\backslash(\cida\cup\lda)}$ fits within our framework, $\jsq(2,2)$ does \emph{not} use what our framework would describe as $\jsq$ assignment, i.e., $\jsq(2,2)\neq\DP{\mathsf{DQR}_{(2,2)}}{\jsq}$.  Specifically, $\jsq(2,2)$ is a variant of $\jiq(2,2)$ that, given a set of queried servers, assigns an incoming job to the same \emph{class} that the job would be assigned to under $\jiq(2,2)$.  However, while $\jiq(2,2)$ ultimately assigns the job to a server chosen uniformly at random among those queried from the selected class,  $\jsq(2,2)$ assigns the incoming job to a server chosen uniformly at random among those queried servers \emph{with the shortest queue(s)} from the selected class.
\end{remark}

\section{The $\clda$ family of assignment rules}
\label{sec:construction}
In this section, we discuss assignment rules in $\clda$ beyond those in the $\cida$ family.  After presenting a general structure for $\clda$ assignment rules (Section \ref{sec:clda-formal}) and discussing the difficulty of analyzing dispatching policies using such assignment rules, we turn our attention to the development of heuristic $\clda$-driven dispatching policies (Section \ref{sec:clda-heuristic}).  Simulations suggest that our heuristic policies perform favorably relative to existing dispatching policies presented in the literature (Section \ref{sec:clda-simulation}).  These heuristic policies allow for length-aware assignment while leveraging our analysis of querying rules under $\cida$ (length-blind) assignment, as presented in the preceding sections.

\subsection{Formal presentation of the $\clda$ family of assignment rules}
\label{sec:clda-formal}

In this section, we present a generalization of the $\cida$ family of assignment rules to account for queue lengths (rather than just idle/busy statuses), resulting in the $\clda$ family of assignment rules. This family encompasses all static and symmetric assignment rules that can observe both the class (i.e., speed) of, and queue length at, each of the queried servers in assigning a newly arrived job.  Throughout this subsection, we introduce a variety of new notation.  To aid the reader, this new notation is summarized in Table \ref{tbl:in-text-notation}.

{
\begin{table}[h]
\begin{tabular}{>{\raggedright}p{0.13\textwidth}@{}>{\centering}p{0.05\textwidth}@{}p{0.7\textwidth}>{\raggedleft\arraybackslash}p{0.05\textwidth}}
    \toprule
    $\Ain{i}{n}$ r.b. $\ain{i}{n}$ &$\equiv$& Number of queried class-$i$ servers with a queue length of $n$ \dotfill & p. \pageref{Avecn}\\
    $\Avecn{n}$ r.b. $\avecn{n}$ &$\equiv$& $\{\Ain{0}{n}, \Ain{1}{n}, \ldots, \Ain{s}{n}\}$ r.b. $\{\ain{0}{n}, \ain{1}{n}, \ldots, \ain{s}{n}\}$ \dotfill & p. \pageref{Avecn}\\
    $\Avecv$ r.b. $\avecv$ &$\equiv$& $\{\Avecn0,\Avecn1,\ldots\}$ r.b. $\{\avecn0,\avecn1,\ldots\}$  \dotfill & p. \pageref{Aarrow}\\
    $\Acalv$ &$\equiv$& $\left\{\avecv\colon\sum_{i=1}^s\sum_{n=0}^\infty \ain{i}n=d\right\}$; Set of all possible realizations $\Avecv$\dotfill & p. \pageref{Acalv}\\
    $\alpha_i^{(n)}(\avecv)$ &$\equiv$& Probability that the job is assigned to a class-$i$ server with $n$ jobs when $\Avecv\equiv\avecv$ \dotfill & p. \pageref{alphana}\\
    $\alpha_i(\avecv)$ &$\equiv$& Probability that the job is assigned to a class-$i$ server with $n_i^*$ jobs when $\Avecv\equiv\avecv$\dotfill  & p. \pageref{alphai}\\
    $n_i^*$ &$\equiv$& $\min\left\{m\in\mathbb N\colon\ain{i}{m}>0\right\}$; Queue length of the shortest queue among queried class-$i$ servers \dotfill & p. \pageref{nistar}\\
    \bottomrule
\end{tabular}\\
\caption{Table of Notation for Section~\ref{sec:clda-formal} (``r.b.'' stands for ``realized by'')\label{tbl:in-text-notation}}
\end{table}
}

\begin{remark}
When we refer to the ``queue length'' of/at a server, we mean the number of jobs occupying  that server's subsystem: this includes all jobs currently being served by the server and all those waiting for service.
\end{remark}

Recall that our study of the $\cida$ family of assignment rules motivated us to encode the idle/busy statuses of the queried servers by the random vector $\Avec$, which takes on realizations of the form $\avec\equiv(a_1,\ldots,a_s)\in\Acal\equiv\{\avec\colon a_1+\cdots+a_s\le d\}$, where $a_i$ is the number of \emph{idle} class-$i$ servers among the $d_i$ queried. Analogously, in studying the $\clda$ family it will be helpful to encode the number of class-$i$ servers of \emph{each possible queue length} among the $d_i$ queried.  To this end, for each $n\in\mathbb N\equiv\{0,1,\ldots\}$, let $\Avecn{n}\equiv\left(\Ain1n,\Ain2n,\ldots,\Ain{s}n\right)$\label{Avecn} be a random vector taking on realizations of the form $\avecn{n}\equiv\left(\ain1n,\ain2n,\ldots,\ain{s}n\right)\in\Acal$ where $\Ain{i}n$\label{Ain} (respectively $\ain{i}{n}$\label{ain}) is the random variable (respectively the realization of the random variable) giving the number of queried class-$i$ servers with a queue length of $n$.  Three observations follow immediately from these definitions: (i) $\Avecn0=\Avec$, (ii) $\Avecn{n}\le\Dvec$ (element-wise) for all $n\in\mathbb N$, and (iii) $\sum_{n=0}^\infty \Avecn{n}=\Dvec$.

Now let $\vec{\Avec}\equiv\{\Avecn0,\Avecn1,\ldots\}$\label{Aarrow}, denote the realizations of this random object by $\avecv\equiv\{\avecn0,\avecn1,\ldots\}$, and denote the set of all such realizations by $\Acalv\equiv\left\{\avecv\colon\sum_{i=1}^s\sum_{n=0}^\infty \ain{i}n=d\right\}$.\label{Acalv}  Each realized aggregate query state can now be fully described by some $\avecv\in\Avecv$, allowing us to treat $\dvec$ as a derived quantity: $\dvec=\sum_{n=0}^\infty\avecn{n}$.

Formally, a $\clda$ assignment rule is uniquely specified by a family of functions $\alpha_i^{(n)}\colon\Acalv\to[0,1]$\label{alphana} parameterized by $(i,n)\in\Scal\times\mathbb N$, where $\alpha_i^{(n)}(\avecv)$ denotes the probability that a job seeing a query with aggregate state $\Avecv=\avecv$ is assigned to a class-$i$ server with a queue length of $n$.  Clearly, we must have $\alpha_i^{(n)}(\avecv)=0$ whenever $\ain{i}n=0$ and $\sum_{i=1}^s\sum_{n=0}^\infty\alpha_i^{(n)}(\avecv)=1$ for all $\avecv\in\Acalv$.

As an illustrative example, let us see how an assignment rule that opts to \emph{ignore} queue length information (apart from idleness information) can be implemented via such a family of functions.  Specifically, let us consider an assignment rule from the $\cida$ family (noting that such an assignment rule is also a member of the $\clda$ family, as $\cida\subseteq\clda$), defined (as in Section~\ref{sec:cida-formal}) by some family of functions $\alpha_i\colon\Scalb\times\Dcal\to[0,1]$ parameterized by $i\in\Scal$, where $\alpha_i(j,\dvec)$ denotes the probability that a job is assigned to a queried class-$i$ server, given that it sees $J=j$ as the class of the fastest idle server and a query mix $\Dvec=\dvec$.  In this case, letting $j\equiv\min\{\ell\colon\ain{\ell}{0}>0\}$ (where we again use the convention that $\min\emptyset\equiv s+1$) we can define $\alpha_i^{(n)}(\avecv)$ in terms of $\alpha_i(j,\dvec)$ as follows:
\begin{align}
    \alpha_i^{(0)}(\avecv)&=\begin{cases}\alpha_i(j,\dvec)&\mbox{if }i=j\\ 0&\mbox{otherwise}\end{cases}\label{eq:alpha-0}\\
    \alpha_i^{(n)}(\avecv)&=\begin{cases}\displaystyle{\frac{\ain{i}{n}\alpha_i(j,d)}{\sum_{m=1}^\infty\ain{i}{m}}}&\mbox{if }i
    \le j\\0&\mbox{otherwise}\end{cases}&(\forall n\ge1).\label{eq:alpha-n}
\end{align}
Equation~\eqref{eq:alpha-0} gives the probability of assigning the job to an idle class-$i$ server, and Equation~\eqref{eq:alpha-n} gives the probability of assigning the job to a busy class-$i$ server with $n$ jobs in its queue.  While the first equation is straightforward, the latter becomes clear by making the following observation: if we ignore queue lengths beyond idle/busy statuses, once we have chosen to assign the job to a busy class-$i$ server, we choose one such server at random, and hence, the job is sent to a class-$i$ server with a queue length of $n$ with probability $\ain{i}{n}\left/\sum_{m=1}^\infty\ain{i}m\right.$.

Now observe that if one opts to make full use of queue length information, then whenever one assigns a job to a class-$i$ server it is naturally favorable to assign the job to the class-$i$ server with the shortest queue among those queried. If we prune the space of $\clda$ assignment rules in this fashion (which would leave out the $\cida$ assignment rules), then we can instead uniquely specify assignment rules by a family of functions $\alpha_i\colon\Acalv\to[0,1]$ that are parameterized only by $i\in\Scal$ (rather than also being parameterized by $n\in\mathbb N$).  In this case, letting $n_i^*\equiv\min\left\{m\in\mathbb N\colon\ain{i}{m}>0\right\}$\label{nistar} (i.e., letting $n_i^*$ be the queue length of the shortest queue among queried class-$i$ servers) for each $i\in\Scal$ (with $n_i^*\equiv\infty$ whenever $d_i=0$), we can express (the original) $\alpha_i^{(n)}(\avecv)$ in terms of (the new) $\alpha_i(\avecv)$ as follows:\label{alphai}
\[
\alpha_i^{(n)}(\avecv)=\begin{cases}\alpha_i(\avecv)&\mbox{if }n=n_i^*\\
0&\mbox{otherwise}
\end{cases}.
\]

\subsection{Examples of $\clda$ assignment rules}
\label{sec:clda-ex}

Two examples of assignment rules in $\clda\backslash\cida$ that we can specify using families of functions $\alpha_i\colon\Acalv\to[0,1]$ (where we again use $n_i^*$ to denote the queue length of the shortest queue among queried class-$i$ servers) include $\jsq$ and $\sed$.

\begin{remark}
When we refer to $\jsq$ and $\sed$, we are referring to just the \emph{assignment rules}, rather than the traditional $\jsq$ and $\sed$ \emph{dispatching policies} studied in the literature in small scale settings, or the $\jsq$-$d$ and $\sed$-$d$ \emph{dispatching policies}, which in our framework are referred to as the $\DP{\uniq}{\jsq}$ and $\DP{\uniq}{\sed}$ dispatching policies respectively.  Moreover, note that $\jsq$ is actually a member of the $\lda$ family---a subfamily of $\clda$ that allows for leveraging queue length information, but is blind to server classes (i.e., speeds).
\end{remark}

We discuss these two rules (i.e., $\jsq$ and $\sed$) in greater detail---together with a third assignment rule in $\clda\backslash\cida$, \textsf{Shortest Expected Wait} ($\sew$), which we introduce here---below:
\begin{itemize}
    \item \textsf{Join the Shortest Queue} ($\jsq$)\label{jsq} is an individual assignment rule that is a member of the $\lda$ family (and therefore also the $\clda$ family) that assigns the job to a queried server (chosen uniformly at random) among those with the shortest queue (regardless of their class).  It is specified by 
    \[\alpha_i(\avecv)=\frac{\ain{i}{n_i^*} \prod_{\ell=1}^{s} I\{n_{i}^* \leq n_{\ell}^*\}}{\sum_{i'=1}^s\ain{i'}{n_{i'}^*}\prod_{\ell=1}^{s}I\left\{ n_{i'}^* \leq n_{\ell}^* \right\}}.\]
    \item \textsf{Shortest Expected Delay} ($\sed$)\label{sed} is an individual assignment rule that is a member of the $\clda$ family that assigns the job to a queried server (chosen uniformly at random) among those on which the job would complete soonest in expectation \emph{under the assumption of \emph{\textsf{First Come First Serve} ($\fcfs$)} scheduling} (regardless of the actual scheduling rule being implemented).  By observing that the expected delay experienced by a job (under $\fcfs$ scheduling) that is assigned to a class-$i$ server with $n$ other jobs already in its queue is $(n+1)/\mu_i$, we find that the $\sed$ assignment rule is specified by
    \[\alpha_i(\avecv)=\frac{\ain{i}{n_i^*}\prod_{\ell=1}^{s}I\left\{ \frac{n_i^*+1}{\mu_i} \leq \frac{n_{\ell}^*+1}{\mu_{\ell}} \right\}}{\sum_{i'=1}^s \ain{i'}{n_{i'}^*} \prod_{\ell=1}^{s}I\left\{\frac{n_{i'}^*+1}{\mu_{i'}} \leq \frac{n_{\ell}^*+1}{\mu_{\ell}}\right\}}.\]
    \item \textsf{Shortest Expected Wait} ($\sew$)\label{sew} is an individual assignment rule that is a member of the $\clda$ family that assigns the job to a queried server (chosen uniformly at random) among those on which the job would \emph{enter service} soonest in expectation \emph{under the assumption of \emph{\textsf{First Come First Serve} ($\fcfs$)} scheduling} (regardless of the actual scheduling rule being implemented).  Unlike $\sed$, $\sew$ does not account for the expected size of the arriving job, $1/\mu_i$.  By observing that the expected waiting time until entering service experienced by a job (under $\fcfs$ scheduling) that is assigned to a class-$i$ server with $n$ other jobs already in its queue is $n/\mu_i$, we find that the $\sew$ assignment rule is specified by
    \[\alpha_i(\avecv)=\frac{\ain{i}{n_i^*}\prod_{\ell=1}^{s}I\left\{ \frac{n_i^*}{\mu_i} \leq \frac{n_{\ell}^*}{\mu_{\ell}} \right\}}{\sum_{i'=1}^s \ain{i'}{n_{i'}^*} \prod_{\ell=1}^{s}I\left\{\frac{n_{i'}^*}{\mu_{i'}} \leq \frac{n_{\ell}^*}{\mu_{\ell}}\right\}}.\]
\end{itemize}
\begin{remark}
Our nomenclature is perhaps imperfect, as \emph{delay} is sometimes used to refer to time in queue, but here we are using \emph{delay} (in the name of $\sed$---which we inherit from the literature) to refer to the \emph{time in system} and \emph{wait} (in the name of $\sew$) to refer to the \emph{time in queue}.
\end{remark}

We also introduce a variant of each of the above policy that breaks ``ties'' in favor of faster servers: (i) $\jsq^\star$, (ii) $\sed^\star$, and (iii) $\sew^\star$ act like the (i) $\jsq$, (ii) $\sed$, and (iii) $\sew$ assignment rules, except jobs are always assigned to one the \emph{fastest} servers (chosen uniformly at random) among those queried servers that have the (i) shortest queue, (ii) the shortest expected delay (i.e., at which the job would experience the shortest expected response time), and (iii) the shortest wait (i.e., at which the job would experience the shortest time in queue), respectively.  Note that while $\jsq\in\lda$, $\jsq^\star\in\clda\backslash\lda$, as $\jsq^\star$ makes use of class information in breaking ties between queried servers with the same queue length.  These rules are specified by the following:\label{jsqstar}\label{sedstar}\label{sewstar}
\begin{align*}
      \alpha_i(\avecv)
    &=\prod_{\ell=1}^{i-1}I\{n_i^*<n_\ell^*\}\prod_{\ell=i+1}^s I\{n_i^*\le n_\ell\}.&(\jsq^\star)\\
    \alpha_i(\avecv)&=\prod_{\ell=1}^{i-1}I\left\{ \frac{n_i^*+1}{\mu_i} < \frac{n_{\ell}^*+1}{\mu_{\ell}}\right\}\prod_{\ell=i+1}^{s}I\left\{ \frac{n_i^*+1}{\mu_i} \le \frac{n_{\ell}^*+1}{\mu_{\ell}}    \right\}&(\sed^\star)\\
    \alpha_i(\avec)&=\prod_{\ell=1}^{i-1}I\left\{ \frac{n_i^*}{\mu_i} < \frac{n_{\ell}^*}{\mu_{\ell}}\right\}\prod_{\ell=i+1}^{s}I\left\{ \frac{n_i^*}{\mu_i} \le \frac{n_{\ell}^*}{\mu_{\ell}} \right\} &(\sew^\star).
\end{align*}

\subsection{A heuristic for finding strong dispatching policies using $\clda$ assignment}
\label{sec:clda-heuristic}

The analysis of general assignment rules in the $\clda$ family introduces intractability issues that we were able to avoid in our analysis of the $\cida$ family of assignment rules.  There are two key challenges for identifying strong dispatching policies with assignment rules in $\clda\backslash\cida$.  First, while the $\alpha_i$ functions designating the $\cida$ policies had a finite domain ($\Acal\times\Dcal$, and after subsequent pruning $\Scalb\times\Dcal$), those functions specifying assignment rules for $\clda$ policies---even with the pruning introduced in Section~\ref{sec:clda-formal}---have an infinite domain ($\Acalv$).  Hence, the $\clda$ assignment rules span an infinite dimensional space, unlike the finite-dimensional polytopes spanned by their $\cida$ counterparts (see Appendix D for details); the former generally precludes straightforward optimization, while the latter facilitates it.

The second challenge associated with identifying strong dispatching policies with assignment rules that take queue lengths into account is the lack of exact performance analysis for most dispatching policies in $\clda\backslash\cida$. Thus, even if we could solve an infinite-dimensional optimization problem (i.e., even if we could overcome the first challenge), it is impossible to formulate the objective function for such an optimization problem.  

We attempt to jointly overcome these challenges by populating a roster of heuristic dispatching policies designed based on the $\cida$-driven policies of Sections~\ref{sec:model}--\ref{sec:numerical}.
In the next subsection, we show (via simulation) that many of these policies perform well relative to the aforementioned $\cida$-driven policies.

We address the first challenge (i.e., the infinite dimensional space spanned by the $\clda$ assignment rules) by limiting ourselves to the example assignment rules discussed in Section~\ref{sec:clda-ex}: $\jsq$, $\sed$, $\sew$, $\jsq^\star$, $\sed^\star$, and $\sew^\star$. Note that these are \emph{individual} assignment rules, rather than assignment rule \emph{families}, which obviates the need for optimizing continuous probabilistic parameters.

One hopes that even without sophisticated fine-tuned probabilistic parameters, the greedy $\sed$ and $\sew$ assignment rules (with or without class-based tie-breaking) still manage to yield stronger performance than the length-blind $\cida$-driven dispatching policies---at least when paired with a judiciously chosen querying rule.  Meanwhile, by studying $\jsq$, we can assess the extent to which queue-length information can lead to strong performance even in the absence of heterogeneity-awareness in the assignment decision.

The second challenge (i.e., the lack of performance analysis as a basis for optimization), then, reduces to the problem of choosing a querying rule to use in conjunction with our six chosen assignment rules. We propose three ideas for choosing an appropriate querying rule---ultimately, each approach will add additional dispatching policies to our roster.

The first idea for choosing a querying rule is to use the same approach that we are taking on the assignment side.  That is, we can limit ourselves to one (or some small number of) individual querying rule(s).  Of the two specific individual querying rules discussed in this paper, $\uniq$ does not guarantee stability, while $\brq$ does (see Section~\ref{sec:stability} for details).  For this reason, we add the following six dispatching policies to our roster: $\DP\brq\jsq$, $\DP\brq\sed$, $\DP\brq\sew$, $\DP\brq{\jsq^\star}$, $\DP\brq{\sed^\star}$, and $\DP\brq{\sew^\star}$.

The remaining two ideas involve leveraging the diversity of querying rules available within the families studied throughout this paper, as it is unnecessarily restrictive to only consider dispatching policies that involve no optimization (i.e., that involve combining a specific individual querying rule with a specific individual assignment rule, as above).  The broadest querying rule family that we have studied is, of course, $\genq$; unfortunately, analyzing exact mean response times under, e.g.,  $\DP{\genq}{\jsq}$ and $\DP{\genq}{\sed}$ appears to be intractable. 

Our second idea presents one way to overcome this tractability limitation: we restrict attention to the $\srcq$ family of querying rules, where one selects a class at random (according to some fixed distribution) upon the arrival of each job and then queries $d$ servers of that class.  $\srcq$ querying eliminates the possibility of needing to make assignment decisions between servers running at different speeds, meaning that pairing $\srcq$ with any of our six individual assignment rules yields the same dispatching policy; we will refer to this single policy as $\DP\srcq\jsq$.
Furthermore, because assignment decisions are always made among servers of the same speed, the analysis of $\DP{\srcq}{\jsq}$ reduces to that of $s$ independent homogeneous systems under $\jsq$. This exact analysis allows us to use IPOPT to find the $\ipoptd{\DP{\srcq}{\jsq}}$ dispatching policy, which we add to our roster of dispatching policies.

\begin{remark}
 As noted above, the analysis of $\DP{\srcq}{\jsq}$ reduces to that of $s$ independent homogeneous systems under the $\DP{\uniq}{\jsq}$ dispatching policy (referred to in the literature as $\jsq$-$d$). The mean response time in such homogeneous systems was analyzed exactly in \cite{mitzenmacher2001power,vvedenskaya1996queueing}. We then rely on IPOPT to determine the ``optimal'' $\hat{p}(i)$ parameters for $i \in \Scal$ (i.e., the probability of querying each single class $i$, see Section~\ref{sec:opt-src}). We further note that $\DP{\srcq}{\jsq}$ was  previously studied in the case of $s=2$, under the \textsf{Processor Sharing} ($\mathsf{PS}$) scheduling discipline, in \cite{mukhopadhyay2016analysis}.
\end{remark}

Our third idea is to use a novel heuristic that leverages our previous study of $\DP\qrf\cida$ dispatching policies from Sections~\ref{sec:model}--\ref{sec:numerical}. Our heuristic constructs a dispatching policy by combining a individual querying rule found by IPOPT and any one of our six individual assignment rules.  Specifically, the heuristic uses the $\genseedq$ querying rule (i.e., the querying rule yielded by the IPOPT solution to the optimization problem associated with $\DP\genq\cida$, seeded with the IPOPT solution for $\DP\indq\cida$).
Note that our choice of querying rule (i.e., $\genseedq$) is not contingent on the choice of assignment rule, as tractability necessitates foregoing any kind of joint optimization.
To this end we complete our roster with the following six policies: $\DP{\genseedq}{\jsq}$, $\DP{\genseedq}{\sed}$, $\DP{\genseedq}{\sew}$,  $\DP{\genseedq}{\jsq^\star}$, $\DP{\genseedq}{\sed^\star}$, and $\DP{\genseedq}{\sew^\star}$.

\subsection{Simulation-driven performance evaluation of dispatching policies using $\clda$ assignment}
\label{sec:clda-simulation}

We simulate the $\DP\brq\jsq$, $\DP\brq\sed$, $\DP\genseedq\jsq$, and $\DP\genseedq\sed$ dispatching policies in a system with $k=3000$ servers under under the same collection of parameter settings studied in Figure~\ref{fig:lambda-plot} in Section~\ref{sec:graphs}.  We simulate 10\,000\,000 arrivals to the system and record the observed response time for each.  We then average these values (discarding the first 1\,000\,000 to allow the system to ``reach a steady state'' where the running average response time was observed to stabilize) to obtain a $\mathbb E[T]$ value under each policy at each value of $\lambda$. We omit results for $\lambda\in\{0.92,0.94,0.96,0.98\}$ as the observed variance of response times across successive runs exceeded 1\% of the mean in these cases. Running longer simulations with more arrivals could reduce the variance in these cases, but doing so would have been prohibitively expensive in terms of the simulation runtime.

\begin{figure}
    \centering
    \hspace{-0.85cm}
\resizebox{\textwidth}{!}{
\begin{tabular}{ccc}
\begin{tikzpicture}[x=1pt,y=1pt]


\definecolor{fillColor}{RGB}{255,255,255}
\path[use as bounding box,fill=fillColor,fill opacity=0.00] (0,0) rectangle (\legendxpos-100,\legendypos);
\begin{scope}

\path[brjsqstyle] ( 42.53,168.55) --
	( 46.33,168.44) --
	( 50.14,168.59) --
	( 53.94,168.51) --
	( 57.75,168.24) --
	( 61.55,167.78) --
	( 65.36,167.37) --
	( 69.16,166.73) --
	( 72.97,165.99) --
	( 76.77,164.98) --
	( 80.58,163.90) --
	( 84.39,162.19) --
	( 88.19,160.64) --
	( 92.00,158.58) --
	( 95.80,156.10) --
	( 99.61,153.41) --
	(103.41,150.09) --
	(107.22,146.46) --
	(111.02,142.15) --
	(114.83,138.51) --
	(118.63,134.79) --
	(122.44,131.34) --
	(126.24,127.88) --
	(130.05,124.30) --
	(133.85,120.50) --
	(137.66,116.61) --
	(141.46,112.28) --
	(145.27,107.66) --
	(149.07,103.71) --
	(152.88,100.23) --
	(156.68, 97.41) --
	(160.49, 94.76) --
	(164.29, 92.51) --
	(168.10, 90.26) --
	(171.91, 88.19) --
	(175.71, 85.76) --
	(179.52, 82.83) --
	(183.32, 79.90) --
	(187.13, 76.49) --
	(190.93, 72.75) --
	(194.74, 68.11) --
	(198.54, 63.12) --
	(202.35, 57.41) --
	(206.15, 51.11) --
	(209.96, 44.05);

\path[brjsqstarstyle] ( 42.53, 83.46) --
	( 46.33, 85.16) --
	( 50.14, 87.02) --
	( 53.94, 89.07) --
	( 57.75, 91.16) --
	( 61.55, 93.31) --
	( 65.36, 95.63) --
	( 69.16, 97.63) --
	( 72.97, 99.55) --
	( 76.77,101.60) --
	( 80.58,103.42) --
	( 84.39,104.83) --
	( 88.19,106.19) --
	( 92.00,107.23) --
	( 95.80,107.51) --
	( 99.61,107.60) --
	(103.41,107.18) --
	(107.22,106.47) --
	(111.02,104.95) --
	(114.83,103.41) --
	(118.63,102.01) --
	(122.44,100.42) --
	(126.24, 98.79) --
	(130.05, 96.87) --
	(133.85, 94.57) --
	(137.66, 92.14) --
	(141.46, 89.15) --
	(145.27, 85.95) --
	(149.07, 83.17) --
	(152.88, 80.88) --
	(156.68, 78.93) --
	(160.49, 77.46) --
	(164.29, 75.95) --
	(168.10, 74.56) --
	(171.91, 73.30) --
	(175.71, 71.68) --
	(179.52, 69.90) --
	(183.32, 67.81) --
	(187.13, 65.35) --
	(190.93, 62.44) --
	(194.74, 58.98) --
	(198.54, 54.63) --
	(202.35, 49.62) --
	(206.15, 44.44) --
	(209.96, 38.59);

\path[genjsqstyle] ( 42.53, 74.75) --
	( 46.33, 74.80) --
	( 50.14, 74.85) --
	( 53.94, 74.74) --
	( 57.75, 74.77) --
	( 61.55, 74.79) --
	( 65.36, 74.81) --
	( 69.16, 74.84) --
	( 72.97, 74.77) --
	( 76.77, 74.75) --
	( 80.58, 74.66) --
	( 84.39, 74.53) --
	( 88.19, 74.45) --
	( 92.00, 74.27) --
	( 95.80, 74.11) --
	( 99.61, 73.72) --
	(103.41, 73.10) --
	(107.22, 72.54) --
	(111.02, 73.22) --
	(114.83, 74.67) --
	(118.63, 75.78) --
	(122.44, 76.37) --
	(126.24, 76.80) --
	(130.05, 76.69) --
	(133.85, 76.36) --
	(137.66, 75.84) --
	(141.46, 74.79) --
	(145.27, 75.31) --
	(149.07, 77.00) --
	(152.88, 78.67) --
	(156.68, 80.27) --
	(160.49, 81.92) --
	(164.29, 83.07) --
	(168.10, 84.10) --
	(171.91, 84.88) --
	(175.71, 83.19) --
	(179.52, 81.77) --
	(183.32, 80.75) --
	(187.13, 79.12) --
	(190.93, 76.79) --
	(194.74, 73.27) --
	(198.54, 67.38) --
	(202.35, 63.19) --
	(206.15, 56.66);

\path[genjsqstarstyle] ( 42.53, 74.82) --
	( 46.33, 74.79) --
	( 50.14, 74.79) --
	( 53.94, 74.77) --
	( 57.75, 74.79) --
	( 61.55, 74.80) --
	( 65.36, 74.80) --
	( 69.16, 74.77) --
	( 72.97, 74.79) --
	( 76.77, 74.72) --
	( 80.58, 74.59) --
	( 84.39, 74.57) --
	( 88.19, 74.42) --
	( 92.00, 74.30) --
	( 95.80, 73.98) --
	( 99.61, 73.62) --
	(103.41, 73.17) --
	(107.22, 72.50) --
	(111.02, 72.27) --
	(114.83, 72.02) --
	(118.63, 71.90) --
	(122.44, 71.60) --
	(126.24, 71.16) --
	(130.05, 70.61) --
	(133.85, 69.96) --
	(137.66, 69.00) --
	(141.46, 67.92) --
	(145.27, 67.81) --
	(149.07, 68.35) --
	(152.88, 68.81) --
	(156.68, 69.26) --
	(160.49, 69.87) --
	(164.29, 70.11) --
	(168.10, 70.48) --
	(171.91, 70.76) --
	(175.71, 69.71) --
	(179.52, 68.84) --
	(183.32, 67.97) --
	(187.13, 66.66) --
	(190.93, 64.76) --
	(194.74, 61.61) --
	(198.54, 57.13) --
	(202.35, 53.02) --
	(206.15, 47.79);
\end{scope}
\begin{scope}
\path[draw=gray,line width=0.4*\lthickness pt,dash pattern=] (38.72, 30.69) -- (38.72, 175);
\node[text=black,anchor=base east,inner sep=0pt, outer sep=0pt, scale= \labscal] at ( 32, 71.75) {1.0};

\node[text=black,anchor=base east,inner sep=0pt, outer sep=0pt, scale= \labscal] at ( 32,118.66) {1.5};

\node[text=black,anchor=base east,inner sep=0pt, outer sep=0pt, scale= \labscal] at ( 32,165.58) {2.0};
\end{scope}
\begin{scope}
\path[draw=black,line width= 0.6*\lthickness pt,line join=round] ( 35, 74.78) --
	( 38.72, 74.78);

\path[draw=black,line width= 0.6*\lthickness pt,line join=round] ( 35,121.70) --
	( 38.72,121.70);

\path[draw=black,line width= 0.6*\lthickness pt,line join=round] ( 35,168.61) --
	( 38.72,168.61);
\end{scope}
\path[draw=gray,line width=0.4*\lthickness pt,dash pattern=] (34.16, 74.78) -- (230, 74.78);
\begin{scope}
\path[draw=black,line width= 0.6*\lthickness pt,line join=round] ( 38.72, 27.94) --
	( 38.72, 30.69);

\path[draw=black,line width= 0.6*\lthickness pt,line join=round] ( 86.29, 27.94) --
	( 86.29, 30.69);

\path[draw=black,line width= 0.6*\lthickness pt,line join=round] (133.85, 27.94) --
	(133.85, 30.69);

\path[draw=black,line width= 0.6*\lthickness pt,line join=round] (181.42, 27.94) --
	(181.42, 30.69);
\end{scope}
\begin{scope}
\node[text=black,anchor=base,inner sep=0pt, outer sep=0pt, scale= \labscal] at ( 38.72, 19.68) {0.00};

\node[text=black,anchor=base,inner sep=0pt, outer sep=0pt, scale= \labscal] at ( 86.29, 19.68) {0.25};

\node[text=black,anchor=base,inner sep=0pt, outer sep=0pt, scale= \labscal] at (133.85, 19.68) {0.50};

\node[text=black,anchor=base,inner sep=0pt, outer sep=0pt, scale= \labscal] at (181.42, 19.68) {0.75};
\end{scope}
\begin{scope}
\node[text=black,anchor=base,inner sep=0pt, outer sep=0pt, scale=  1.5*\labscal] at (110.07,  180) {$\jsq$ Policies};
\node[text=black,anchor=base,inner sep=0pt, outer sep=0pt, scale=  \labscal] at (110.07,  8.22) {$\lambda$};
\node[text=black,rotate= 90.00,anchor=base,inner sep=0pt, outer sep=0pt, scale= \labscal] at ( 12.0,0.5*\legendypos) {Normalized $\mathbb E[T]$};
\end{scope}

\end{tikzpicture} & 
\begin{tikzpicture}[x=1pt,y=1pt]


\definecolor{fillColor}{RGB}{255,255,255}
\path[use as bounding box,fill=fillColor,fill opacity=0.00] (0,0) rectangle (\legendxpos-100,\legendypos);
\begin{scope}

\path[brsewstyle] ( 42.39,168.51) --
	( 46.14,168.50) --
	( 49.88,168.50) --
	( 53.63,168.20) --
	( 57.37,168.10) --
	( 61.12,167.59) --
	( 64.86,167.03) --
	( 68.61,166.38) --
	( 72.35,165.29) --
	( 76.10,164.02) --
	( 79.84,162.57) --
	( 83.59,160.91) --
	( 87.33,158.85) --
	( 91.07,156.42) --
	( 94.82,153.52) --
	( 98.56,150.18) --
	(102.31,146.80) --
	(106.05,142.77) --
	(109.80,138.20) --
	(113.54,134.01) --
	(117.29,129.92) --
	(121.03,125.87) --
	(124.78,121.93) --
	(128.52,117.97) --
	(132.27,113.85) --
	(136.01,109.50) --
	(139.75,104.78) --
	(143.50,100.05) --
	(147.24, 95.83) --
	(150.99, 92.25) --
	(154.73, 89.07) --
	(158.48, 86.15) --
	(162.22, 83.60) --
	(165.97, 81.16) --
	(169.71, 78.75) --
	(173.46, 76.43) --
	(177.20, 73.78) --
	(180.95, 71.03) --
	(184.69, 68.13) --
	(188.43, 64.86) --
	(192.18, 61.49) --
	(195.92, 57.82) --
	(199.67, 53.67) --
	(203.41, 49.66) --
	(207.16, 45.83);

\path[brsewstarstyle] ( 42.39, 89.14) --
	( 46.14, 90.76) --
	( 49.88, 92.52) --
	( 53.63, 94.36) --
	( 57.37, 96.35) --
	( 61.12, 98.28) --
	( 64.86,100.43) --
	( 68.61,102.38) --
	( 72.35,104.33) --
	( 76.10,106.05) --
	( 79.84,107.74) --
	( 83.59,109.07) --
	( 87.33,110.25) --
	( 91.07,111.11) --
	( 94.82,111.52) --
	( 98.56,111.76) --
	(102.31,111.43) --
	(106.05,110.37) --
	(109.80,109.20) --
	(113.54,107.64) --
	(117.29,106.25) --
	(121.03,104.72) --
	(124.78,103.02) --
	(128.52,100.95) --
	(132.27, 98.96) --
	(136.01, 96.38) --
	(139.75, 93.35) --
	(143.50, 90.12) --
	(147.24, 87.24) --
	(150.99, 84.78) --
	(154.73, 82.43) --
	(158.48, 80.50) --
	(162.22, 78.70) --
	(165.97, 77.01) --
	(169.71, 75.14) --
	(173.46, 73.27) --
	(177.20, 71.29) --
	(180.95, 69.05) --
	(184.69, 66.62) --
	(188.43, 63.66) --
	(192.18, 60.51) --
	(195.92, 56.87) --
	(199.67, 53.16) --
	(203.41, 49.59) --
	(207.16, 46.10);

\path[gensewstyle] ( 42.39, 81.08) --
	( 46.14, 80.99) --
	( 49.88, 81.05) --
	( 53.63, 81.05) --
	( 57.37, 81.02) --
	( 61.12, 81.05) --
	( 64.86, 81.06) --
	( 68.61, 80.98) --
	( 72.35, 80.99) --
	( 76.10, 80.97) --
	( 79.84, 80.94) --
	( 83.59, 80.85) --
	( 87.33, 80.71) --
	( 91.07, 80.61) --
	( 94.82, 80.39) --
	( 98.56, 79.97) --
	(102.31, 79.54) --
	(106.05, 78.94) --
	(109.80, 79.49) --
	(113.54, 80.55) --
	(117.29, 80.81) --
	(121.03, 80.78) --
	(124.78, 80.38) --
	(128.52, 79.59) --
	(132.27, 78.65) --
	(136.01, 77.36) --
	(139.75, 75.92) --
	(143.50, 76.26) --
	(147.24, 77.66) --
	(150.99, 78.40) --
	(154.73, 78.77) --
	(158.48, 78.76) --
	(162.22, 78.44) --
	(165.97, 77.79) --
	(169.71, 76.89) --
	(173.46, 75.02) --
	(177.20, 72.91) --
	(180.95, 71.04) --
	(184.69, 68.85) --
	(188.43, 66.05) --
	(192.18, 62.75) --
	(195.92, 58.51) --
	(199.67, 54.73) --
	(203.41, 50.45);

\path[gensewstarstyle] ( 42.39, 81.02) --
	( 46.14, 81.07) --
	( 49.88, 81.04) --
	( 53.63, 81.04) --
	( 57.37, 81.05) --
	( 61.12, 81.11) --
	( 64.86, 81.02) --
	( 68.61, 81.03) --
	( 72.35, 81.05) --
	( 76.10, 80.97) --
	( 79.84, 80.92) --
	( 83.59, 80.90) --
	( 87.33, 80.68) --
	( 91.07, 80.57) --
	( 94.82, 80.32) --
	( 98.56, 80.00) --
	(102.31, 79.59) --
	(106.05, 78.90) --
	(109.80, 78.65) --
	(113.54, 78.50) --
	(117.29, 78.32) --
	(121.03, 77.99) --
	(124.78, 77.54) --
	(128.52, 76.92) --
	(132.27, 76.13) --
	(136.01, 75.09) --
	(139.75, 73.68) --
	(143.50, 73.53) --
	(147.24, 73.93) --
	(150.99, 74.34) --
	(154.73, 74.53) --
	(158.48, 74.39) --
	(162.22, 74.23) --
	(165.97, 73.70) --
	(169.71, 73.15) --
	(173.46, 71.73) --
	(177.20, 70.15) --
	(180.95, 68.42) --
	(184.69, 66.59) --
	(188.43, 64.18) --
	(192.18, 61.14) --
	(195.92, 57.25) --
	(199.67, 53.58) --
	(203.41, 49.39);
\end{scope}
\begin{scope}
\path[draw=gray,line width=0.4*\lthickness pt,dash pattern=] (38.65, 30.69) -- (38.65, 175);
\node[text=black,anchor=base east,inner sep=0pt, outer sep=0pt, scale= \labscal] at ( 32, 34.22) {0.5};

\node[text=black,anchor=base east,inner sep=0pt, outer sep=0pt, scale= \labscal] at ( 32, 78.01) {1.0};

\node[text=black,anchor=base east,inner sep=0pt, outer sep=0pt, scale= \labscal] at ( 32,121.79) {1.5};

\node[text=black,anchor=base east,inner sep=0pt, outer sep=0pt, scale= \labscal] at ( 32,165.58) {2.0};
\end{scope}
\begin{scope}
\path[draw=black,line width= 0.6*\lthickness pt,line join=round] ( 35, 37.25) --
	( 38.65, 37.25);

\path[draw=black,line width= 0.6*\lthickness pt,line join=round] ( 35, 81.04) --
	( 38.65, 81.04);

\path[draw=black,line width= 0.6*\lthickness pt,line join=round] ( 35,124.82) --
	( 38.65,124.82);

\path[draw=black,line width= 0.6*\lthickness pt,line join=round] ( 35,168.61) --
	( 38.65,168.61);
\end{scope}
\path[draw=gray,line width=0.4*\lthickness pt,dash pattern=] (34.16, 81.04) -- (230, 81.04);
\begin{scope}
\path[draw=black,line width= 0.6*\lthickness pt,line join=round] ( 38.65, 27.94) --
	( 38.65, 30.69);

\path[draw=black,line width= 0.6*\lthickness pt,line join=round] ( 85.46, 27.94) --
	( 85.46, 30.69);

\path[draw=black,line width= 0.6*\lthickness pt,line join=round] (132.27, 27.94) --
	(132.27, 30.69);

\path[draw=black,line width= 0.6*\lthickness pt,line join=round] (179.07, 27.94) --
	(179.07, 30.69);
\end{scope}
\begin{scope}
\node[text=black,anchor=base,inner sep=0pt, outer sep=0pt, scale= \labscal] at ( 38.65, 19.68) {0.00};

\node[text=black,anchor=base,inner sep=0pt, outer sep=0pt, scale= \labscal] at ( 85.46, 19.68) {0.25};

\node[text=black,anchor=base,inner sep=0pt, outer sep=0pt, scale= \labscal] at (132.27, 19.68) {0.50};

\node[text=black,anchor=base,inner sep=0pt, outer sep=0pt, scale= \labscal] at (179.07, 19.68) {0.75};
\end{scope}
\begin{scope}
\node[text=black,anchor=base,inner sep=0pt, outer sep=0pt, scale=  1.5*\labscal] at (108.87,  180) {$\sew$ Policies};
\node[text=black,anchor=base,inner sep=0pt, outer sep=0pt, scale=  \labscal] at (108.87,  8.22) {$\lambda$};
\node[text=black,rotate= 90.00,anchor=base,inner sep=0pt, outer sep=0pt, scale= \labscal] at ( 12,0.5*\legendypos) {Normalized $\mathbb E[T]$};
\end{scope}

\end{tikzpicture} & \multirow{2}{*}[7cm]{\begin{tikzpicture}[x=1pt,y=1pt]
\begin{scope}
\renewcommand{\legendxpos}{200}
\renewcommand{\legendypos}{420}

\path[use as bounding box,fill=white,fill opacity=0.00] (0,0) rectangle (100,\legendypos);
\path[srcjsqstyle] (\legendxpos-200,0.2*\legendypos) -- (\legendxpos-170,0.2*\legendypos);
\node[text=black,anchor=base west,inner sep=0pt, outer sep=0pt, scale= \labscal] at (\legendxpos-160,0.2*\legendypos-2) {{$\ipoptd{\DP\srcq\jsq}$}};

\path[brjsqstyle] (\legendxpos-200,0.8*\legendypos) -- (\legendxpos-170,0.8*\legendypos);
\node[text=black,anchor=base west,inner sep=0pt, outer sep=0pt, scale= \labscal] at (\legendxpos-160,0.8*\legendypos-2) {{ $\DP{\brq}{\jsq}$}};
\path[brjsqstarstyle] (\legendxpos-200,0.75*\legendypos) -- (\legendxpos-170,0.75*\legendypos);
\node[text=black,anchor=base west,inner sep=0pt, outer sep=0pt, scale= \labscal] at (\legendxpos-160,0.75*\legendypos-2) {{$\DP{\brq}{\jsq^\star}$}};
\path[genjsqstyle] (\legendxpos-200,0.7*\legendypos) -- (\legendxpos-170,0.7*\legendypos);
\node[text=black,anchor=base west,inner sep=0pt, outer sep=0pt, scale= \labscal] at (\legendxpos-160,0.7*\legendypos-2) {{$\DP{\genseedq}{\jsq}$}};
\path[genjsqstarstyle] (\legendxpos-200,0.65*\legendypos) -- (\legendxpos-170,0.65*\legendypos);
\node[text=black,anchor=base west,inner sep=0pt, outer sep=0pt, scale= \labscal] at (\legendxpos-160,0.65*\legendypos-2) {{$\DP{\genseedq}{\jsq^\star}$}};

\path[brsedstyle] (\legendxpos-200,0.6*\legendypos) -- (\legendxpos-170,0.6*\legendypos);
\node[text=black,anchor=base west,inner sep=0pt, outer sep=0pt, scale= \labscal] at (\legendxpos-160,0.6*\legendypos-2) {{ $\DP{\brq}{\sed}$}};
\path[brsedstarstyle] (\legendxpos-200,0.55*\legendypos) -- (\legendxpos-170,0.55*\legendypos);
\node[text=black,anchor=base west,inner sep=0pt, outer sep=0pt, scale= \labscal] at (\legendxpos-160,0.55*\legendypos-2) {{$\DP{\brq}{\sed^\star}$}};
\path[gensedstyle] (\legendxpos-200,0.5*\legendypos) -- (\legendxpos-170,0.5*\legendypos);
\node[text=black,anchor=base west,inner sep=0pt, outer sep=0pt, scale= \labscal] at (\legendxpos-160,0.5*\legendypos-2) {{$\DP{\genseedq}{\sed}$}};
\path[gensedstarstyle] (\legendxpos-200,0.45*\legendypos) -- (\legendxpos-170,0.45*\legendypos);
\node[text=black,anchor=base west,inner sep=0pt, outer sep=0pt, scale= \labscal] at (\legendxpos-160,0.45*\legendypos-2) {{$\DP{\genseedq}{\sed^\star}$}};

\path[brsewstyle] (\legendxpos-200,0.4*\legendypos) -- (\legendxpos-170,0.4*\legendypos);
\node[text=black,anchor=base west,inner sep=0pt, outer sep=0pt, scale= \labscal] at (\legendxpos-160,0.4*\legendypos-2) {{ $\DP{\brq}{\sew}$}};
\path[brsewstarstyle] (\legendxpos-200,0.35*\legendypos) -- (\legendxpos-170,0.35*\legendypos);
\node[text=black,anchor=base west,inner sep=0pt, outer sep=0pt, scale= \labscal] at (\legendxpos-160,0.35*\legendypos-2) {{$\DP{\brq}{\sew^\star}$}};
\path[gensewstyle] (\legendxpos-200,0.3*\legendypos) -- (\legendxpos-170,0.3*\legendypos);
\node[text=black,anchor=base west,inner sep=0pt, outer sep=0pt, scale= \labscal] at (\legendxpos-160,0.3*\legendypos-2) {{$\DP{\genseedq}{\sew}$}};
\path[gensewstarstyle] (\legendxpos-200,0.25*\legendypos) -- (\legendxpos-170,0.25*\legendypos);
\node[text=black,anchor=base west,inner sep=0pt, outer sep=0pt, scale= \labscal] at (\legendxpos-160,0.25*\legendypos-2) {{$\DP{\genseedq}{\sew^\star}$}};
\end{scope}

\end{tikzpicture}} \\
\begin{tikzpicture}[x=1pt,y=1pt]


\definecolor{fillColor}{RGB}{255,255,255}
\path[use as bounding box,fill=fillColor,fill opacity=0.00] (0,0) rectangle (\legendxpos-100,\legendypos);
\begin{scope}

\path[brsedstyle] ( 43.54,121.00) --
	( 47.80,122.85) --
	( 52.07,124.52) --
	( 56.33,126.38) --
	( 60.60,128.29) --
	( 64.86,130.26) --
	( 69.13,132.23) --
	( 73.39,134.25) --
	( 77.66,135.95) --
	( 81.92,137.56) --
	( 86.19,138.88) --
	( 90.45,140.22) --
	( 94.71,141.12) --
	( 98.98,141.65) --
	(103.24,141.76) --
	(107.51,141.10) --
	(111.77,140.16) --
	(116.04,138.29) --
	(120.30,135.98) --
	(124.57,133.43) --
	(128.83,131.19) --
	(133.10,129.39) --
	(137.36,127.58) --
	(141.63,125.41) --
	(145.89,123.05) --
	(150.16,120.07) --
	(154.42,116.66) --
	(158.69,113.28) --
	(162.95,110.30) --
	(167.21,108.53) --
	(171.48,108.92) --
	(175.74,109.04) --
	(180.01,110.91) --
	(184.27,113.32) --
	(188.54,116.14) --
	(192.80,119.07) --
	(197.07,120.45) --
	(201.33,119.74) --
	(205.60,118.23) --
	(209.86,113.23) --
	(214.13,106.61) --
	(218.39, 95.68) --
	(222.66, 84.31) --
	(226.92, 69.81) --
	(231.19, 55.24);
\definecolor{drawColor}{RGB}{124,174,0}

\path[brsedstarstyle] ( 43.54,121.02) --
	( 47.80,122.77) --
	( 52.07,124.60) --
	( 56.33,126.38) --
	( 60.60,128.18) --
	( 64.86,130.38) --
	( 69.13,132.21) --
	( 73.39,134.21) --
	( 77.66,135.92) --
	( 81.92,137.67) --
	( 86.19,139.24) --
	( 90.45,140.25) --
	( 94.71,141.20) --
	( 98.98,141.52) --
	(103.24,141.93) --
	(107.51,141.27) --
	(111.77,140.13) --
	(116.04,138.29) --
	(120.30,135.58) --
	(124.57,133.71) --
	(128.83,131.40) --
	(133.10,129.84) --
	(137.36,127.79) --
	(141.63,125.87) --
	(145.89,123.41) --
	(150.16,120.53) --
	(154.42,117.15) --
	(158.69,113.57) --
	(162.95,110.90) --
	(167.21,110.45) --
	(171.48,110.46) --
	(175.74,112.25) --
	(180.01,115.34) --
	(184.27,120.07) --
	(188.54,126.15) --
	(192.80,133.00) --
	(197.07,139.16) --
	(201.33,142.73) --
	(205.60,142.31) --
	(209.86,137.18) --
	(214.13,127.80) --
	(218.39,115.46) --
	(222.66,100.34) --
	(226.92, 83.52) --
	(231.19, 66.25);
\definecolor{drawColor}{RGB}{0,191,196}

\path[gensedstyle] ( 43.54,102.92) --
	( 47.80,102.99) --
	( 52.07,102.88) --
	( 56.33,102.93) --
	( 60.60,102.88) --
	( 64.86,102.83) --
	( 69.13,102.95) --
	( 73.39,102.80) --
	( 77.66,103.08) --
	( 81.92,102.79) --
	( 86.19,102.62) --
	( 90.45,102.31) --
	( 94.71,102.10) --
	( 98.98,101.77) --
	(103.24,101.31) --
	(107.51,100.22) --
	(111.77, 99.26) --
	(116.04, 97.47) --
	(120.30, 96.96) --
	(124.57, 97.38) --
	(128.83, 97.86) --
	(133.10, 98.60) --
	(137.36, 99.00) --
	(141.63, 98.93) --
	(145.89, 97.99) --
	(150.16, 96.11) --
	(154.42, 93.35) --
	(158.69, 90.95) --
	(162.95, 90.10) --
	(167.21, 90.48) --
	(171.48, 92.81) --
	(175.74, 96.09) --
	(180.01,100.98) --
	(184.27,106.58) --
	(188.54,111.83) --
	(192.80,115.40) --
	(197.07,116.06) --
	(201.33,115.13) --
	(205.60,111.83) --
	(209.86,105.30) --
	(214.13, 94.89) --
	(218.39, 84.38) --
	(222.66, 71.53) --
	(226.92, 56.77) --
	(231.19, 40.53);
\definecolor{drawColor}{RGB}{199,124,255}

\path[gensedstarstyle] ( 43.54,102.88) --
	( 47.80,102.90) --
	( 52.07,103.00) --
	( 56.33,102.92) --
	( 60.60,103.03) --
	( 64.86,102.96) --
	( 69.13,102.89) --
	( 73.39,102.84) --
	( 77.66,102.78) --
	( 81.92,102.85) --
	( 86.19,102.63) --
	( 90.45,102.33) --
	( 94.71,102.14) --
	( 98.98,101.70) --
	(103.24,101.10) --
	(107.51,100.33) --
	(111.77, 99.06) --
	(116.04, 97.46) --
	(120.30, 96.77) --
	(124.57, 97.23) --
	(128.83, 98.21) --
	(133.10, 98.58) --
	(137.36, 98.98) --
	(141.63, 98.88) --
	(145.89, 97.99) --
	(150.16, 96.44) --
	(154.42, 93.60) --
	(158.69, 90.82) --
	(162.95, 90.11) --
	(167.21, 90.93) --
	(171.48, 92.82) --
	(175.74, 97.72) --
	(180.01,103.05) --
	(184.27,111.58) --
	(188.54,120.08) --
	(192.80,128.94) --
	(197.07,134.47) --
	(201.33,136.17) --
	(205.60,132.95) --
	(209.86,125.71) --
	(214.13,114.05) --
	(218.39,100.77) --
	(222.66, 86.31) --
	(226.92, 69.18) --
	(231.19, 51.07);
\end{scope}
\begin{scope}
\path[draw=gray,line width=0.4*\lthickness pt,dash pattern=] (38.69, 30.69) -- (38.69, 175);
\node[text=black,anchor=base east,inner sep=0pt, outer sep=0pt, scale=  \labscal] at ( 32, 56.12) {0.8};

\node[text=black,anchor=base east,inner sep=0pt, outer sep=0pt, scale=  \labscal] at ( 32, 99.90) {1.0};

\node[text=black,anchor=base east,inner sep=0pt, outer sep=0pt, scale=  \labscal] at ( 32,143.68) {1.2};
\end{scope}
\begin{scope}
\path[draw=black,line width= 0.6*\lthickness pt,line join=round] ( 35, 59.15) --
	( 38.69, 59.15);

\path[draw=black,line width= 0.6*\lthickness pt,line join=round] ( 35,102.93) --
	( 38.69,102.93);

\path[draw=black,line width= 0.6*\lthickness pt,line join=round] ( 35,146.72) --
	( 38.69,146.72);
\end{scope}
\path[draw=gray,line width=0.4*\lthickness pt,dash pattern=] (34.16,102.93) -- (230,102.93);
\begin{scope}
\path[draw=black,line width= 0.6*\lthickness pt,line join=round] ( 38.69, 27.94) --
	( 38.69, 30.69);

\path[draw=black,line width= 0.6*\lthickness pt,line join=round] ( 85.91, 27.94) --
	( 85.91, 30.69);

\path[draw=black,line width= 0.6*\lthickness pt,line join=round] (133.13, 27.94) --
	(133.13, 30.69);

\path[draw=black,line width= 0.6*\lthickness pt,line join=round] (180.34, 27.94) --
	(180.34, 30.69);
\end{scope}
\begin{scope}
\node[text=black,anchor=base,inner sep=0pt, outer sep=0pt, scale=  \labscal] at ( 38.69, 19.68) {0.00};

\node[text=black,anchor=base,inner sep=0pt, outer sep=0pt, scale=  \labscal] at ( 85.91, 19.68) {0.25};

\node[text=black,anchor=base,inner sep=0pt, outer sep=0pt, scale=  \labscal] at (133.13, 19.68) {0.50};

\node[text=black,anchor=base,inner sep=0pt, outer sep=0pt, scale=  \labscal] at (180.34, 19.68) {0.75};
\end{scope}
\begin{scope}
\node[text=black,anchor=base,inner sep=0pt, outer sep=0pt, scale=  1.5*\labscal] at (109.52,  180) {$\sed$ Policies};
\node[text=black,anchor=base,inner sep=0pt, outer sep=0pt, scale=  \labscal] at (109.52,  8.22) {$\lambda$};
\node[text=black,rotate= 90.00,anchor=base,inner sep=0pt, outer sep=0pt, scale= \labscal] at ( 12,0.5*\legendypos) {Normalized $\mathbb E[T]$};
\end{scope}

\end{tikzpicture} & 
\begin{tikzpicture}[x=1pt,y=1pt]


\definecolor{fillColor}{RGB}{255,255,255}
\path[use as bounding box,fill=fillColor,fill opacity=0.00] (0,0) rectangle (\legendxpos-100,\legendypos);
\begin{scope}

\path[genjsqstarstyle] ( 43.47,128.22) --
	( 47.70,128.12) --
	( 51.93,128.25) --
	( 56.16,128.19) --
	( 60.39,128.17) --
	( 64.63,128.34) --
	( 68.86,128.08) --
	( 73.09,128.16) --
	( 77.32,128.12) --
	( 81.55,128.11) --
	( 85.79,127.90) --
	( 90.02,127.68) --
	( 94.25,127.39) --
	( 98.48,127.04) --
	(102.71,126.56) --
	(106.94,125.88) --
	(111.18,124.53) --
	(115.41,123.42) --
	(119.64,122.55) --
	(123.87,122.25) --
	(128.10,122.03) --
	(132.34,121.47) --
	(136.57,120.71) --
	(140.80,119.38) --
	(145.03,118.06) --
	(149.26,116.13) --
	(153.50,113.59) --
	(157.73,113.34) --
	(161.96,114.63) --
	(166.19,115.58) --
	(170.42,116.41) --
	(174.65,117.40) --
	(178.89,118.44) --
	(183.12,118.95) --
	(187.35,119.32) --
	(191.58,119.54) --
	(195.81,118.65) --
	(200.05,116.56) --
	(204.28,113.69) --
	(208.51,109.03) --
	(212.74,104.25) --
	(216.97, 96.00) --
	(221.21, 85.51) --
	(225.44, 74.15) --
	(229.67, 60.18);

\path[gensedstyle] ( 43.47,128.18) --
	( 47.70,128.24) --
	( 51.93,128.15) --
	( 56.16,128.19) --
	( 60.39,128.14) --
	( 64.63,128.10) --
	( 68.86,128.21) --
	( 73.09,128.07) --
	( 77.32,128.33) --
	( 81.55,128.07) --
	( 85.79,127.90) --
	( 90.02,127.62) --
	( 94.25,127.43) --
	( 98.48,127.12) --
	(102.71,126.69) --
	(106.94,125.69) --
	(111.18,124.80) --
	(115.41,123.15) --
	(119.64,122.68) --
	(123.87,123.07) --
	(128.10,123.51) --
	(132.34,124.19) --
	(136.57,124.56) --
	(140.80,124.49) --
	(145.03,123.63) --
	(149.26,121.90) --
	(153.50,119.35) --
	(157.73,117.13) --
	(161.96,116.35) --
	(166.19,116.69) --
	(170.42,118.85) --
	(174.65,121.88) --
	(178.89,126.39) --
	(183.12,131.56) --
	(187.35,136.40) --
	(191.58,139.70) --
	(195.81,140.31) --
	(200.05,139.45) --
	(204.28,136.41) --
	(208.51,130.38) --
	(212.74,120.77) --
	(216.97,111.07) --
	(221.21, 99.20) --
	(225.44, 85.58) --
	(229.67, 70.59);

\path[gensewstarstyle] ( 43.47,128.33) --
	( 47.70,128.14) --
	( 51.93,128.17) --
	( 56.16,128.03) --
	( 60.39,128.23) --
	( 64.63,128.28) --
	( 68.86,128.19) --
	( 73.09,128.24) --
	( 77.32,127.97) --
	( 81.55,128.05) --
	( 85.79,127.94) --
	( 90.02,127.72) --
	( 94.25,127.58) --
	( 98.48,127.04) --
	(102.71,126.45) --
	(106.94,125.82) --
	(111.18,124.70) --
	(115.41,123.15) --
	(119.64,122.47) --
	(123.87,122.26) --
	(128.10,121.86) --
	(132.34,120.90) --
	(136.57,120.11) --
	(140.80,118.88) --
	(145.03,117.23) --
	(149.26,114.76) --
	(153.50,111.97) --
	(157.73,111.34) --
	(161.96,112.55) --
	(166.19,112.69) --
	(170.42,113.32) --
	(174.65,113.52) --
	(178.89,113.12) --
	(183.12,112.47) --
	(187.35,111.28) --
	(191.58,109.16) --
	(195.81,106.10) --
	(200.05,102.42) --
	(204.28, 97.53) --
	(208.51, 92.10) --
	(212.74, 85.70) --
	(216.97, 76.47) --
	(221.21, 67.04) --
	(225.44, 56.76) --
	(229.67, 45.19);

\path[srcjsqstyle] ( 43.47,128.19) --
	( 47.70,128.19) --
	( 51.93,128.19) --
	( 56.16,128.19) --
	( 60.39,128.19) --
	( 64.63,128.18) --
	( 68.86,128.17) --
	( 73.09,128.15) --
	( 77.32,128.11) --
	( 81.55,128.04) --
	( 85.79,127.93) --
	( 90.02,127.75) --
	( 94.25,127.48) --
	( 98.48,127.09) --
	(102.71,126.53) --
	(106.94,125.75) --
	(111.18,124.69) --
	(115.41,123.28) --
	(119.64,121.61) --
	(123.87,120.39) --
	(128.10,119.65) --
	(132.34,119.32) --
	(136.57,119.35) --
	(140.80,119.14) --
	(145.03,117.94) --
	(149.26,116.01) --
	(153.50,113.38) --
	(157.73,110.35) --
	(161.96,108.29) --
	(166.19,107.29) --
	(170.42,107.30) --
	(174.65,108.30) --
	(178.89,109.52) --
	(183.12,109.85) --
	(187.35,109.40) --
	(191.58,108.03) --
	(195.81,105.63) --
	(200.05,102.46) --
	(204.28, 98.41) --
	(208.51, 93.24) --
	(212.74, 86.63) --
	(216.97, 78.51) --
	(221.21, 69.36) --
	(225.44, 58.81) --
	(229.67, 46.56);
\end{scope}
\begin{scope}
\path[draw=gray,line width=0.4*\lthickness pt,dash pattern=] (39.23, 30.69) -- (39.23, 175);
\node[text=black,anchor=base east,inner sep=0pt, outer sep=0pt, scale= \labscal] at ( 32.5, 44.33) {0.6};

\node[text=black,anchor=base east,inner sep=0pt, outer sep=0pt, scale= \labscal] at ( 32.5, 84.74) {0.8};

\node[text=black,anchor=base east,inner sep=0pt, outer sep=0pt, scale= \labscal] at ( 32.5,125.16) {1.0};

\node[text=black,anchor=base east,inner sep=0pt, outer sep=0pt, scale= \labscal] at ( 32.5,165.58) {1.2};
\end{scope}
\begin{scope}
\path[draw=black,line width= 0.6*\lthickness pt,line join=round] ( 35.5, 47.36) --
	( 39.23, 47.36);

\path[draw=black,line width= 0.6*\lthickness pt,line join=round] ( 35.5, 87.77) --
	( 39.23, 87.77);

\path[draw=black,line width= 0.6*\lthickness pt,line join=round] ( 35.5,128.19) --
	( 39.23,128.19);

\path[draw=black,line width= 0.6*\lthickness pt,line join=round] ( 35.5,168.61) --
	( 39.23,168.61);
\end{scope}
\path[draw=gray,line width=0.4*\lthickness pt,dash pattern=] (31.41,128.19) -- (230,128.19);
\begin{scope}
\path[draw=black,line width= 0.6*\lthickness pt,line join=round] ( 39.23, 27.94) --
	( 39.23, 30.69);

\path[draw=black,line width= 0.6*\lthickness pt,line join=round] ( 92.13, 27.94) --
	( 92.13, 30.69);

\path[draw=black,line width= 0.6*\lthickness pt,line join=round] (145.03, 27.94) --
	(145.03, 30.69);

\path[draw=black,line width= 0.6*\lthickness pt,line join=round] (197.93, 27.94) --
	(197.93, 30.69);
\end{scope}
\begin{scope}
\node[text=black,anchor=base,inner sep=0pt, outer sep=0pt, scale= \labscal] at (39.24, 19.68) {0.00};

\node[text=black,anchor=base,inner sep=0pt, outer sep=0pt, scale= \labscal] at (92.14, 19.68) {0.25};

\node[text=black,anchor=base,inner sep=0pt, outer sep=0pt, scale= \labscal] at (145.03, 19.68) {0.50};

\node[text=black,anchor=base,inner sep=0pt, outer sep=0pt, scale= \labscal] at (197.93, 19.68) {0.75};
\end{scope}
\begin{scope}
\node[text=black,anchor=base,inner sep=0pt, outer sep=0pt, scale=  1.5*\labscal] at (118.59,  180) {Best Policies};
\node[text=black,anchor=base,inner sep=0pt, outer sep=0pt, scale=  \labscal] at (118.59,  8.22) {$\lambda$};
\node[text=black,rotate= 90.00,anchor=base,inner sep=0pt, outer sep=0pt, scale= \labscal] at ( 12,0.5*\legendypos) {Normalized $\mathbb E[T]$};
\end{scope}

\end{tikzpicture} 
\end{tabular}}
    \caption{$\mathbb E[T]$ relative to that of $\genseed$ (i.e., $\mathbb E[T]^{\mathsf{DP}}/\mathbb E[T]^{\genseed}$) as a function of $\lambda$ for the parameter settings where $s=d=3$, with $\lambda$ varying over $\{0.02,0.04,\ldots,0.90\}$,  $(q_1,q_2,q_3)=(1/3,1/6,1/2)$ and $(R_1,R_2)=(5,2)$, yielding $(\mu_1,\mu_2,\mu_3)=(2,4/5,2/5)$, for the dispatching policies $\mathsf{DP}$ constructed from the $\brq$ and $\genseedq$ querying rules when paired with the $\jsq$ and $\jsq^\star$ assignment rules (upper left), the $\sed$ and $\sed^\star$ assignment rules (lower left), and the $\sew$ and $\sew^\star$ assignment rules (upper right).  In the lower right, we compare the ``best'' policy (obtaining the lowest $\mathbb E[T]$ value on the majority of $\lambda$ values) from each of the other three subfigures: $\DP{\genseedq}{\jsq^\star}$, $\DP{\genseedq}{\sew^\star}$, and $\DP{\genseedq}{\sed}$.   We also include the $\ipoptd{\DP\srcq\jsq}$ dispatching policy in the lower right subfigure.   All $\mathbb E[T]$ values were obtained through simulation except for those associated with $\ipoptd{\DP\srcq\jsq}$ and the normalizing policy, $\genseedq$.  Note that not all subfigures use the same scale for the vertical axis.}
    \label{fig:mixandmatch}
\end{figure}

In Figure~\ref{fig:mixandmatch}, we plot the simulated $\mathbb E[T]$ of each of the above dispatching policies---as well as the computational (non-simulated, based on the assumption where $k\to\infty$) results for $\ipoptd{\iidq}$ and $\ipoptd{\DP\srcq\jsq}$---normalized by the $\mathbb E[T]$ value $\genseed \equiv \DP\genseedq\genseeda$ as a function of $\lambda$.  We examined a number of other parameter settings and chose this parameter setting in order to make the trends more salient, although qualitatively similar results are exhibited across most of the parameter settings observed.

We observe that at low values of $\lambda$, the $\brq$-driven policies perform poorly, because they occasionally query no servers of the fastest class, even though under such light traffic one would like to discard all but the fastest servers.  These policies continue to be the worst performers---as, in addition to using slow servers, queues begin to build up at these servers---until a certain point where the gap between these policies and the others begins to close. Meanwhile, in this low-$\lambda$ regime, all of the other policies (including the normalizing policy, $\genseed$, which does not make use of queue length information) perform near-identically, because all of them query essentially only the fastest servers, and most of these servers are idle, rendering the assignment rule immaterial.
At higher $\lambda$, we enter an ``assignment-driven regime,'' where all of our $\clda$-based policies outperform the $\cida$-based $\genseed$ policy.
We call this an ``assignment-driven regime'' because the performance of the policies become differentiated from one another primarily on the basis of their assignment rules.  That is, even though, e.g., $\genseed\equiv\DP{\genseedq}{\genseeda}$ and $\DP{\genseedq}{\sed}$ use the same querying rule---which is optimized for use with $\cida$ assignment---the latter achieves better performance because the advantage of $\clda$-based assignment outweighs the benefit of jointly optimizing the querying and assignment rules.  The existence of such a regime is a result of the fact that, in heavy traffic, all querying rules that maintain the system's stability must result in similar $\lb$ values, meaning that dispatching policies that stabilize the system are distinguished from one another primarily in terms of their assignment rules.

Two of the dispatching policies under consideration emerge as the consistently strongest performers: $\DP\genseedq{\sew^\star}$ and $\ipoptd{\DP\srcq\jsq}$.  It may appear surprising that $\DP{\genseedq}{\sew^\star}$ consistently outperforms its counterparts that make use of $\sed$ or $\sed^\star$.  It turns out that (assuming a judiciously chosen querying rule) it is crucial to make use of idle queried servers whenever possible; unlike $\sed$ and $\sed^\star$, $\sew$ and $\sew^\star$ never send jobs to busy servers when an idle server has been queried. The strong performance of $\DP\genseedq{\sew^\star}$ also highlights the value of our analysis of $\cida$-based dispatching policies: even though such length-unaware policies do not themselves necessarily achieve the best performance (especially at high $\lambda$), we see that the $\cida$-based optimization of the querying rule allows for the development of considerably stronger $\clda$-based policies. Such policies are likely to be difficult to discover using, e.g., a grid search approach. 

\begin{remark}
In fact, the best-performing $\DP\qr{\sew^\star}$ policies found by a simulation-driven grid search (over $\qr\in\genq$) performed no better than $\DP\genseedq{\sew^\star}$. We performed this grid search to validate the performance of our heuristic policies, but, even with a fairly coarse search, this took on the order of an hour for a single value of $\lambda$, while one can obtain a better performing $\genseedq$ policy in mere seconds by leveraging on our optimization problems rather than simulations. Our experience leads us to conclude that the high dimensionality of the $\genq$ family renders such searches poorly-suited for practice.
\end{remark}

Meanwhile, $\ipoptd{\DP\srcq\jsq}$ also exhibits consistently strong performance across the range of $\lambda$ values.  We can attribute its excellent performance to the fact that---unlike the other $\clda$-based policies under consideration---$\ipoptd{\DP\srcq\jsq}$ features a querying rule that is optimized for use with its own assignment rule, rather than for use with a $\cida$ assignment rule.

\section{Conclusion}
\label{sec:conclusion}
This paper provides a comprehensive framework for dispatching in scalable systems in the presence of heterogeneous servers, by examining two separate components of the dispatching policy: the querying rule and the assignment rule.  We highlight tradeoffs associated with the choice of each rule: less restrictive families of querying rules allow for lower mean response times at the cost of increased solution runtime.  Meanwhile, some assignment rules lend themselves to tractable analysis, while others boast better performance (insofar as observed from simulations).  Moreover, at some system loads, both the querying and assignment decision can be crucial, while at more extreme loads, one decision plays a more dominant role over the other (subject to stability constraints).

Our framework illuminates several potentially fruitful areas of future work.  First, this paper restricts attention to symmetric and static querying and assignment rules.  There has been little study of asymmetric rules (of either kind) in the literature when all jobs are \emph{ex ante} identical (i.e., when dispatching is size blind and all jobs are equally important).  Yet we believe that the explicit and separate consideration of---and study of the interaction between---querying and assignment rules suggests how asymmetry might be exploited to develop superior dispatching policies even when jobs are \emph{ex ante} identical.  A judiciously chosen asymmetric assignment rule may be able to synergistically exploit the asymmetry introduced by the querying rule.  Meanwhile, future research could allow for dynamic, rather than merely static, querying and/or assignment rules, permitting the incorporation of round-robin-like dispatching decisions into our framework, which would necessitate novel analysis.
Another direction for future work involves generalizing our framework to heterogeneous systems with multiple dispatchers, as considered in~\cite{stolyar2017pull,zhou2021asymptotically}. Such a generalization likely would require a different approach for selecting policy parameters, as each dispatcher possesses only a partial view of the system's arrival process.

While this paper presents a comprehensive examination of querying rules within the space restricted by the aforementioned assumptions, the bulk of our analysis focused on the $\cida$ family, where assignment rules eschew making decisions on the basis of detailed queue length information in favor of idleness information.  The performance analysis of even $\clda$-based dispatching policies remains an open problem, and while the explicit analysis of the set of all $\clda$ assignment rules (in conjunction with querying rules coming from, say $\genq$) may prove intractable, we anticipate that many policies incorporating more detailed---if still restricted---queue length information are amenable to analysis.  Moreover, we imagine that many such policies may outperform the $\cida$-driven dispatching policies studied in this paper.

Finally, there remain open problems on the theoretical front. For example, throughout our analysis asymptotic independence remains an assumption (although one that is validated by simulation) for which future work may provide a universal rigorous justification (as past work has for more restricted special cases of our framework).  There is also ample room for optimization theory to shed further light on the structure of the optimization problems presented in this work.

\newpage

\appendix

\section{Appendix: Tables of Notation}
\label{app:notation}
\begin{table}[!htp]
\caption{Table \ref{app:notation}1: Querying Rule and Policy Abbreviations}
\footnotesize
\begin{tabular}{>{\raggedright}p{0.15\textwidth}@{}>{\centering}p{0.05\textwidth}@{}p{0.68\textwidth}>{\raggedleft\arraybackslash}p{0.05\textwidth}}
    \toprule
    \multicolumn{4}{c}{\emph{Querying Rule Families}}\\
    $\detq$ &$\equiv$& \textbf{Deterministic Class Mix} \dotfill & p. \pageref{detq}\\
    $\genq$ &$\equiv$& \textbf{General Class Mix} \dotfill & p. \pageref{genq}\\
    $\indq$ &$\equiv$& \textbf{Independent Querying} \dotfill& p. \pageref{indq}\\
    $\iidq$ &$\equiv$& \textbf{Independent and Identically Distributed Querying} \dotfill& p. \pageref{iidq}\\
    $\qrf$ &$\equiv$& Generic notation for an arbitrary querying rule family \dotfill & p. \pageref{qrf}\\
    $\sfcq$ &$\equiv$& \textbf{Single Fixed Class} \dotfill & p. \pageref{sfc}\\
    $\srcq$ &$\equiv$& \textbf{Single Random Class} \dotfill & p. \pageref{srcq}\\
    \midrule
    \multicolumn{4}{c}{\emph{Individual Querying Rules}}\\
    $\brq$ &$\equiv$& \textsf{Balanced Routing} querying rule \dotfill& p. \pageref{br}\\
    $\mathsf{DQR}_{\dvec}$&$\equiv$&Individual querying rule in $\detq$ that always queries according to class mix $\dvec$\dotfill&p. \pageref{dqrsub}\\
    $\ipoptq{\DP{\qrf}{\arf}}$ & $\equiv$& Querying rule used by the ``optimal'' policy in $\DP{\qrf}{\arf}$ found by IPOPT \dotfill & p. \pageref{ipoptq}\\
    $\ipoptq{\qrf}$ &$\equiv$& Abbreviated notation for $\ipoptq{\DP{\qrf}{\cida}}$  \dotfill & p. \pageref{ipoptq}\\
    $\genseedq$ &$\equiv$& Querying rule used by the $\genseed$ dispatching policy  \dotfill & p. \pageref{genseedq}\\
    $\qr$ &$\equiv$& Generic notation for an arbitrary individual querying rule \dotfill & p. \pageref{QR}\\
    $\uniq$ &$\equiv$& \textsf{Uniform Querying} \dotfill& p. \pageref{uniq}\\
    \bottomrule
\end{tabular}
\end{table}

\begin{table}[!htp]
\caption{Table \ref{app:notation}2: Assignment Rule and Policy Abbreviations}
\footnotesize
\begin{tabular}{>{\raggedright}p{0.15\textwidth}@{}>{\centering}p{0.05\textwidth}@{}p{0.68\textwidth}>{\raggedleft\arraybackslash}p{0.05\textwidth}}
    \toprule
    \multicolumn{4}{c}{\emph{Assignment Rule Families}}\\
    $\arf$ &$\equiv$& Generic notation for an arbitrary assignment rule family \dotfill& p. \pageref{arf}\\
    $\cda$ &$\equiv$& \textbf{Class Differentiated} \dotfill& p. \pageref{cda}\\
    $\cida$ &$\equiv$& \textbf{Class and Idleness Differentiated} \dotfill& p. \pageref{cida}\\
    $\clda$ &$\equiv$& \textbf{Class and Length Differentiated} \dotfill & p. \pageref{clda}\\
    $\ida$ &$\equiv$& \textbf{Idleness Differentiated} \dotfill& p. \pageref{ida}\\
    $\lda$ &$\equiv$& \textbf{Length Differentiated} \dotfill& p. \pageref{lda}\\
    \midrule
    \multicolumn{4}{c}{\emph{Individual Assignment Rules}}\\
    $\ar$ &$\equiv$& Generic notation for an arbitrary individual assignment rule  \dotfill& p. \pageref{AR}\\
    $\ipopta{\DP{\qrf}{\arf}}$ & $\equiv$& Assignment rule used by the ``optimal'' policy in $\DP{\qrf}{\arf}$ found by IPOPT \dotfill & p. \pageref{ipopta}\\
    $\ipopta{\qrf}$ &$\equiv$& Abbreviated notation for $\ipopta{\DP{\qrf}{\cida}}$ \dotfill & p.  \pageref{ipopta}\\
    $\genseeda$ &$\equiv$& Assignment rule used by the $\genseed$ dispatching policy  \dotfill & p. \pageref{genseeda}\\
    $\jsq$ &$\equiv$& \textsf{Join the Shortest Queue} assignment rule \dotfill& p. \pageref{jsq}\\
    $\jsq^\star$ &$\equiv$& Variant of $\jsq$ where ties are broken in favor of faster classes \dotfill& p. \pageref{jsqstar}\\
    $\nda$ &$\equiv$& \textsf{Non-Differentiated} \dotfill& p. \pageref{nda}\\
    $\sed$ &$\equiv$& \textsf{Shortest Expected Delay} assignment rule \dotfill& p. \pageref{sed}\\
    $\sed^\star$ &$\equiv$& Variant of $\sed$ where ties are broken in favor of faster classes \dotfill& p. \pageref{sedstar}\\
    $\sew$ &$\equiv$& \textsf{Shortest Expected Wait} assignment rule \dotfill& p. \pageref{sew}\\
    $\sew^\star$ &$\equiv$& Variant of $\sew$ where ties are broken in favor of faster classes \dotfill& p. \pageref{sewstar}\\
    \bottomrule
\end{tabular}
\end{table}

\newpage

\begin{table}[!htp]
\caption{Table \ref{app:notation}3: Dispatching Rule and Policy Abbreviations}
\footnotesize
\begin{tabular}{>{\raggedright}p{0.15\textwidth}@{}>{\centering}p{0.05\textwidth}@{}p{0.68\textwidth}>{\raggedleft\arraybackslash}p{0.05\textwidth}}
    \toprule
    \multicolumn{4}{c}{\emph{Dispatching Policy Families}}\\
    $\DP{\qrf}{\arf}$ &$\equiv$& Dispatching policy family using $\qrf$ querying and $\arf$ assignment \dotfill & p. \pageref{qrfarf} \\
    $\DP{\qrf}{\ar}$ &$\equiv$& Disp. policy family using $\qrf$ querying with the individual $\ar$ assignment rule \dotfill & p. \pageref{qrfar} \\
    $\DP{\qr}{\arf}$ &$\equiv$& Disp. policy family using the individual $\qr$ querying rule with $\arf$ assignment \dotfill & p. \pageref{qrarf} \\
    $\mathbf{DPF}$ &$\equiv$& Generic notation for an arbitrary dispatching policy \dotfill & p. \pageref{dpf} \\
    \midrule
    \multicolumn{4}{c}{\emph{Individual Dispatching Policies}}\\
    $\DP{\qr}{\ar}$ &$\equiv$& Dispatching policy using the $\qr$ querying rule and $\ar$ assignment rule \dotfill & p. \pageref{qrar} \\
    $\mathsf{DP}$ &$\equiv$& Generic notation for an individual dispatching policy \dotfill & p. \pageref{DP} \\
    $\ipoptd{\DP{\qrf}{\arf}}$ & $\equiv$& ``Optimal'' dispatching policy in $\DP{\qrf}{\arf}$ found by IPOPT \dotfill & p. \pageref{ipoptd}\\
    $\ipoptd{\qrf}$ &$\equiv$& Abbreviated notation for $\ipoptd{\DP{\qrf}{\cida}}$ \dotfill & p. \pageref{ipoptd}\\
     $\genseed$ &$\equiv$& Variant of the $\ipoptd{\genq}$ dispatching policy, where the associated optimization problem is ``seeded'' with the parameters of the $\ipoptd{\indq}$ policy \dotfill & p. \pageref{genseedd}\\
    \bottomrule
\end{tabular}
\end{table}

\newpage

\begin{table}[!htp]
\caption{Table \ref{app:notation}4: List of Notations}
\centering
\scriptsize
\begin{tabular}{>{\raggedright}p{0.06\textwidth}@{}>{\centering}p{0.05\textwidth}@{}p{0.77\textwidth}>{\raggedleft\arraybackslash}p{0.05\textwidth}}
    \toprule
    $\alpha_i(\avec, \dvec)$ &$\equiv$& Probability that the job is assigned to a class-$i$ server when $\Avec=\avec$ and $\Dvec=\dvec$ \dotfill & p. \pageref{alpha_i(a,d)}\\
    $\alpha_i(j, \dvec)$ &$\equiv$& Notation for $\alpha_i(\avec,\dvec)$ when $\Avec=\avec$ is such that the fastest idle queried server belongs to class $j$\dotfill & p. \pageref{alphaijd}\\
    $\Avec$ &$\equiv$& $(A_1, \dots, A_s)$; Random vector representing the number of queried idle servers of each class \dotfill & p. \pageref{Avec}\\
    $A_i$ &$\equiv$& Random variable representing the number of queried idle class-$i$ servers \dotfill & p. \pageref{A_i}\\
    $\Acal$ &$\equiv$& $\{\avec\colon a_1+\cdots+a_s\le d\}$; Set of all possible values of the random vector $\Avec$ (given a fixed $d$) \dotfill & p. \pageref{Acal}\\
    $\avec$ &$\equiv$& $(a_1, \dots, a_s)$; Realization of the random vector $\Avec$ \dotfill & p. \pageref{Avec}  \\
    $a_i$ &$\equiv$& Realization of the random variable $A_i$ \dotfill & p. \pageref{Dvec}\\
    $B_i$ &$\equiv$& Busy period duration at a class-$i$ server \dotfill & p. \pageref{B_i}\\
    $\bi$ &$\equiv$& $\prod_{\ell=1}^{i-1}\rho_{\ell}^{d_{\ell}}$; Probability that all queried servers faster than those of class-$i$ in are busy \dotfill & p. \pageref{eq:b_i(d)}\\
    $d$ &$\equiv$& Total number of servers to be queried \dotfill & p. \pageref{d}\\
    $\Dvec$ &$\equiv$& $(D_1, \dots, D_s)$; Random vector representing the number of queried servers of each class \dotfill & p. \pageref{Dvec}\\
    $D_i$ &$\equiv$& Random variable representing the number of queried class-$i$ servers \dotfill & p. \pageref{Dvec}\\
    $\Dcal$ &$\equiv$& $\{\dvec\colon d_1+\cdots+d_s=d\}$; Set of all possible values of the random vector $\Dvec$ (given a fixed $d$) \dotfill & p. \pageref{Dcal}\\
    $\dvec$ &$\equiv$& $(d_1, \dots, d_s)$; Realization of $\Dvec$ representing the class mix \dotfill & p. \pageref{dvec}  \\
    $d_i$ &$\equiv$& Realization of $D_i$ representing the number of class-$i$ servers in the query \dotfill & p. \pageref{Dvec}\\
    $\gamma(j, \dvec)$ &$\equiv$& Mapping where $\alpha_i(j,\dvec)=\alpha_i(j,\gamma(j,\dvec))$ under our assignment rule pruning for all $(i,j,\dvec)$ \dotfill & p. \pageref{gamma}\\
    $h(\dvec)$ &$\equiv$& $\min\{\ell\in\Scal\colon d_\ell>0\}$; The fastest class included in a query when $\Dvec=\dvec$ \dotfill & p. \pageref{h(d)}\\
    $J$ &$\equiv$& $\min\{j\in\Scal\colon A_{j}>0\}$ (with $\min\emptyset\equiv s+1$); Class of the fastest idle queried server \dotfill & p. \pageref{J}\\
    $\Jcal_i(\dvec)$ &$\equiv$& Set of classes $j > i$ for which $(i,j,\gamma(j,\dvec)) \in \Tcal$ \dotfill & p. \pageref{eq:Jid}\\ 
    $k$ &$\equiv$& Total number of servers \dotfill & p. \pageref{k}\\
    $k_i$ &$\equiv$& Number of class $i$ servers for $i\in\Scal$\dotfill & p. \pageref{k_i}\\
    $\lambda$ & $\equiv$& Overall mean arrival rate to a server\dotfill & p. \pageref{lambda} \\
    $\lambda_i$ &$\equiv$& Mean arrival rate to a class-$i$ server\dotfill & p. \pageref{li}\\
    $\lb$ &$\equiv$& Mean arrival rate to a busy class-$i$ server \dotfill &p. \pageref{lib}\\
    $\li$ &$\equiv$& Mean arrival rate to an idle class-$i$ server \dotfill &p. \pageref{lii}\\
    $\mu_i$ &$\equiv$& Speed of a class-$i$ server \dotfill & p. \pageref{mu_i}\\
    $\Pcal$ &$\equiv$& Set of $(j,\dvec)$ pairs that can form a triple with some $i \in \Scal$ so that $(i,j,\dvec) \in \Tcal$ \dotfill  & p.\pageref{eq:pcal}\\   
    $\Pcal(\dvec)$ &$\equiv$& Set of server classes $j$ that can form a triple with $\dvec$ and some $i \in \Scal$ so that $(i,j,\dvec) \in \Tcal$ \dotfill & p.\pageref{eq:pcal-d}\\   
    $p(\dvec)$ &$\equiv$& $\pr(\Dvec = \dvec)$; The probability that $\Dvec=\dvec$; the function $p(\cdot)$ uniquely specifies the querying rule \dotfill & p. \pageref{p}\\
    $\mathcal{Q}$ &$\equiv$ & $\{1,2,\ldots,d\}$; Set of indices for each queried server in a query \dotfill & p. \pageref{Qcal}\\
    $q_i$ &$\equiv$& $k_i/k$; Proportion of servers which belong to class $i$ for $i\in\Scal$\dotfill & p. \pageref{q_i}\\
    $\rho_i$ &$\equiv$& Fraction of time a class-$i$ server is busy \dotfill & p. \pageref{rho_i}\\
    $R_i$ &$\equiv$& $\mu_i/\mu_s$; Speed of class-$i$ servers normalized by that of the slowest (i.e., class-$s$) servers \dotfill & p. \pageref{r_i}\\
    $\rb$ &$\equiv$& Probability that a busy queried tagged class-$i$ server is assigned the job when $\Dvec=\dvec$\dotfill & p. \pageref{ribd}\\
    $\ri$ &$\equiv$& Probability that an idle queried tagged class-$i$ server is assigned the job when $\Dvec=\dvec$\dotfill & p. \pageref{riid}\\
    $s$ &$\equiv$& Number of server classes\dotfill & p. \pageref{s}\\
    $\Scal$ &$\equiv$& $\{ 1,\dots, s \}$; Set of server class indices \dotfill & p. \pageref{s}\\
    $\Scalb$ &$\equiv$& $\{ 1,\dots, s+1 \}$; Set of all possible values for the random variable $J$\dotfill & p. \pageref{sbar}\\
    $\Scal(\dvec)$ &$\equiv$& $\{ i \in \Scal \colon d_i > 0 \}$; Indices of server classes included in the query \dotfill & p. \pageref{eq:scal-d}\\
    $\Tcal$ &$\equiv$& Set of triples $(i,j,\dvec)$ for which each $\alpha_i(j,\dvec)$ can take a distinct value in our pruning\dotfill & p. \pageref{eq:tcal}\\
    $\Tcal(\dvec)$ &$\equiv$& Set of pairs $(i,j)$ (given $\dvec$) for which each $\alpha_i(j,\dvec)$ can take a distinct value in our pruning\dotfill & p. \pageref{eq:tcal-d}\\
    $T$ &$\equiv$& Response time of a job (not conditioned on the class of the server on which the job runs)\dotfill & p. \pageref{T}\\
    $T_i$ &$\equiv$& Response time of a job that runs at a class-$i$ server \dotfill & p. \pageref{T_i}\\
    \bottomrule
\end{tabular}
\end{table}

\newpage

\section{Appendix: Asymptotic Independence Verification}
\label{app:asymptotic}
In this appendix we verify the validity of the asymptotic independence assumption via simulation.  We simulate a system with $k$ servers for $k\in\{25, 75, 125, \ldots, 925, 975\}$ under the $\genseed$ dispatching policy at arrival rates $\lambda\in\{0.4,0.6,0.8\}$; service times are exponentially distributed with rates $(\mu_1,\mu_2,\mu_3)=(2,4/5,2/5)$ and class proportions are given by $(q_1,q_2,q_3)=(1/3,1/6,1/2)$.  Figure~\ref{sfig:exp_service} shows the simulated mean response times in comparison to our theoretical results under the asymptotic independence assumption (where $k\to\infty$). As $k$ grows large, the simulated results appear to converge to the theoretical results, suggesting that the asymptotic independence assumption indeed holds for our system (simulations for higher values of $k$, e.g., 2000, 3000, etc. further corroborate this finding).

Figure~\ref{sfig:hyperexp_service} show the corresponding results under hyperexponential service.  Specifically, the service time of jobs on a class-$i$ server is equally likely to be distributed according to an exponential distribution with rate $5\mu_i/2$ and one with rate $5\mu_i/8$, resulting in $C^2=1.72$; note that the average service time of a job running on a class-$i$ server remains $1/\mu_i$ (as was the case in the system with exponentially distributed service times).  Further note that the policy used is still ``optimized'' under the assumption of exponential service times (i.e., for any given value of $\lambda$, the parameters defining the querying and assignment rules are the same as those used above).  Our simulations (including those at higher $k$ values) again suggest that as $k$ grows the asymptotic independence assumption holds.

\begin{figure}[h]
    \begin{subfigure}{0.45\textwidth}
        \scalebox{0.65}{
\renewcommand{\legendypos}{300}
\renewcommand{\labscal}{1.3}
\renewcommand{\lthickness}{1.2}
\begin{tikzpicture}[x=1pt,y=1pt]


\path[use as bounding box,fill=white,fill opacity=0.00] (0,0) rectangle (350, 300);
\begin{scope}
\path[genjsqstyle,line width=1.8pt,dotted,line join=round] ( 53.74,115.01) -- (301.82,115.01);

\path[gensedstyle,line width=1.8pt,dotted,line join=round] ( 53.74,152.28) -- (301.82,152.28);

\path[gensewstyle,line width=1.8pt,dotted,line join=round] ( 53.74,228.21) -- (301.82,228.21);

\path[genjsqstyle] ( 53.74,117.54) --
	( 66.80,116.07) --
	( 79.86,115.18) --
	( 92.91,116.01) --
	(105.97,115.48) --
	(119.03,115.05) --
	(132.08,115.28) --
	(145.14,115.35) --
	(158.20,115.17) --
	(171.25,115.40) --
	(184.31,115.21) --
	(197.37,115.17) --
	(210.43,115.08) --
	(223.48,115.02) --
	(236.54,115.05) --
	(249.60,115.09) --
	(262.65,115.15) --
	(275.71,115.10) --
	(288.77,115.10) --
	(301.82,115.20);

\path[gensedstyle] ( 53.74,156.43) --
	( 66.80,153.64) --
	( 79.86,151.93) --
	( 92.91,153.70) --
	(105.97,152.48) --
	(119.03,152.86) --
	(132.08,152.73) --
	(145.14,152.68) --
	(158.20,152.30) --
	(171.25,152.51) --
	(184.31,152.30) --
	(197.37,152.39) --
	(210.43,152.09) --
	(223.48,152.74) --
	(236.54,152.01) --
	(249.60,152.70) --
	(262.65,152.32) --
	(275.71,152.26) --
	(288.77,152.28) --
	(301.82,152.04);

\path[gensewstyle] ( 53.74,242.64) --
	( 66.80,232.97) --
	( 79.86,228.74) --
	( 92.91,231.48) --
	(105.97,231.45) --
	(119.03,228.32) --
	(132.08,229.74) --
	(145.14,229.46) --
	(158.20,226.94) --
	(171.25,229.93) --
	(184.31,229.52) --
	(197.37,229.08) --
	(210.43,228.50) --
	(223.48,228.78) --
	(236.54,227.85) --
	(249.60,230.66) --
	(262.65,228.94) --
	(275.71,228.53) --
	(288.77,228.17) --
	(301.82,228.06);

\end{scope}
\begin{scope}
\path[draw=gray,line width=0.8*\lthickness pt,dash pattern=] (47.21, 42.18) -- (47.21, 280);
\node[text=black,anchor=base east,inner sep=0pt, outer sep=0pt, scale= \labscal]at ( 42.21, 39.15)  {0.0};

\node[text=black,anchor=base east,inner sep=0pt, outer sep=0pt, scale= \labscal] at ( 42.21, 96.63) {0.5};

\node[text=black,anchor=base east,inner sep=0pt, outer sep=0pt, scale= \labscal] at ( 42.21,154.10) {1.0};

\node[text=black,anchor=base east,inner sep=0pt, outer sep=0pt, scale= \labscal] at ( 42.21,211.58) {1.5};

\node[text=black,anchor=base east,inner sep=0pt, outer sep=0pt, scale= \labscal] at ( 42.21,269.05) {2.0};

\path[draw=black,line width= 0.8*\lthickness pt,line join=round] ( 44, 42.18) --
	( 47.21, 42.18);

\path[draw=black,line width= 0.8*\lthickness pt,line join=round] ( 44, 99.66) --
	( 47.21, 99.66);

\path[draw=black,line width= 0.8*\lthickness pt,line join=round] ( 44,157.13) --
	( 47.21,157.13);

\path[draw=black,line width= 0.8*\lthickness pt,line join=round] ( 44,214.61) --
	( 47.21,214.61);

\path[draw=black,line width= 0.8*\lthickness pt,line join=round] ( 44,272.08) --
	( 47.21,272.08);
\path[draw=gray,line width=0.8*\lthickness pt,dash pattern=] (43.93, 42.18) -- (315, 42.18);
\path[draw=black,line width= 0.8*\lthickness pt,line join=round] ( 47.21, 39) --
	( 47.21, 42.18);

\path[draw=black,line width= 0.8*\lthickness pt,line join=round] ( 112.50, 39) --
	( 112.50, 42.18);

\path[draw=black,line width= 0.8*\lthickness pt,line join=round] (177.78, 39) --
	(177.78, 42.18);

\path[draw=black,line width= 0.8*\lthickness pt,line join=round] (243.07, 39) --
	(243.07, 42.18);

\path[draw=black,line width= 0.8*\lthickness pt,line join=round] (308.35, 39) --
	(308.35, 42.18);




\node[text=black,anchor=base,inner sep=0pt, outer sep=0pt, scale= \labscal] at ( 47.21, 29) {0};

\node[text=black,anchor=base,inner sep=0pt, outer sep=0pt, scale= \labscal] at ( 112.50, 29) {250};

\node[text=black,anchor=base,inner sep=0pt, outer sep=0pt, scale= \labscal] at (177.78, 29) {500};

\node[text=black,anchor=base,inner sep=0pt, outer sep=0pt, scale= \labscal] at (243.07, 29) {750};

\node[text=black,anchor=base,inner sep=0pt, outer sep=0pt, scale= \labscal] at (308.35, 29) {1000};
\end{scope}
\begin{scope}
\node[text=black,anchor=base,inner sep=0pt, outer sep=0pt, scale=  \labscal] at (\legendxpos/2+20,  8.22) {$k$ (number of servers)};
\node[text=black,rotate= 90.00,anchor=base,inner sep=0pt, outer sep=0pt, scale= \labscal] at ( 12,0.5*\legendypos) {$\mathbb E[T]$};
\end{scope}

\end{tikzpicture}}
        \caption{Exponential Service\label{sfig:exp_service}}
    \end{subfigure}
    \begin{subfigure}{0.45\textwidth}
        \scalebox{0.65}{
\renewcommand{\legendypos}{350}
\renewcommand{\labscal}{1.3}
\renewcommand{\lthickness}{1.2}
\begin{tikzpicture}[x=1pt,y=1pt]


\path[use as bounding box,fill=white,fill opacity=0.00] (0,0) rectangle (350, 300);
\begin{scope}
\path[genjsqstyle,line width= 1.8pt,dotted,line join=round] ( 53.74,119.38) -- (301.82,119.38);

\path[gensedstyle,line width= 1.8pt,dotted,line join=round] ( 53.74,162.23) -- (301.82,162.23);

\path[gensewstyle,line width= 1.8pt,dotted,line join=round] ( 53.74,256.64) -- (301.82,256.64);

\path[genjsqstyle] ( 53.74,123.49) --
	( 66.80,121.27) --
	( 79.86,119.72) --
	( 92.91,120.29) --
	(105.97,120.35) --
	(119.03,119.99) --
	(132.08,119.55) --
	(145.14,120.07) --
	(158.20,119.29) --
	(171.25,119.65) --
	(184.31,119.78) --
	(197.37,119.69) --
	(210.43,119.79) --
	(223.48,119.43) --
	(236.54,119.20) --
	(249.60,119.75) --
	(262.65,119.34) --
	(275.71,119.54) --
	(288.77,119.54) --
	(301.82,119.41);

\path[gensedstyle] ( 53.74,169.37) --
	( 66.80,165.71) --
	( 79.86,162.74) --
	( 92.91,163.82) --
	(105.97,163.26) --
	(119.03,162.01) --
	(132.08,162.61) --
	(145.14,162.29) --
	(158.20,161.32) --
	(171.25,162.81) --
	(184.31,162.65) --
	(197.37,162.75) --
	(210.43,162.55) --
	(223.48,162.58) --
	(236.54,162.48) --
	(249.60,163.02) --
	(262.65,162.62) --
	(275.71,162.45) --
	(288.77,162.02) --
	(301.82,162.34);

\path[gensewstyle] ( 53.74,267.15) --
	( 66.80,264.20) --
	( 79.86,256.55) --
	( 92.91,264.51) --
	(105.97,257.38) --
	(119.03,261.71) --
	(132.08,258.74) --
	(145.14,259.05) --
	(158.20,254.68) --
	(171.25,256.78) --
	(184.31,259.04) --
	(197.37,257.09) --
	(210.43,259.30) --
	(223.48,258.46) --
	(236.54,257.29) --
	(249.60,256.13) --
	(262.65,257.72) --
	(275.71,257.84) --
	(288.77,257.95) --
	(301.82,258.55);

\end{scope}
\begin{scope}
\path[draw=gray,line width=0.8*\lthickness pt,dash pattern=] (47.21, 42.18) -- (47.21, 280);
\node[text=black,anchor=base east,inner sep=0pt, outer sep=0pt, scale= \labscal]at ( 42.21, 39.15)  {0.0};

\node[text=black,anchor=base east,inner sep=0pt, outer sep=0pt, scale= \labscal] at ( 42.21, 96.63) {0.5};

\node[text=black,anchor=base east,inner sep=0pt, outer sep=0pt, scale= \labscal] at ( 42.21,154.10) {1.0};

\node[text=black,anchor=base east,inner sep=0pt, outer sep=0pt, scale= \labscal] at ( 42.21,211.58) {1.5};

\node[text=black,anchor=base east,inner sep=0pt, outer sep=0pt, scale= \labscal] at ( 42.21,269.05) {2.0};

\path[draw=black,line width= 0.8*\lthickness pt,line join=round] ( 44, 42.18) --
	( 47.21, 42.18);

\path[draw=black,line width= 0.8*\lthickness pt,line join=round] ( 44, 99.66) --
	( 47.21, 99.66);

\path[draw=black,line width= 0.8*\lthickness pt,line join=round] ( 44,157.13) --
	( 47.21,157.13);

\path[draw=black,line width= 0.8*\lthickness pt,line join=round] ( 44,214.61) --
	( 47.21,214.61);

\path[draw=black,line width= 0.8*\lthickness pt,line join=round] ( 44,272.08) --
	( 47.21,272.08);
\path[draw=gray,line width=0.8*\lthickness pt,dash pattern=] (43.93, 42.18) -- (315, 42.18);
\path[draw=black,line width= 0.8*\lthickness pt,line join=round] ( 47.21, 39) --
	( 47.21, 42.18);

\path[draw=black,line width= 0.8*\lthickness pt,line join=round] ( 112.50, 39) --
	( 112.50, 42.18);

\path[draw=black,line width= 0.8*\lthickness pt,line join=round] (177.78, 39) --
	(177.78, 42.18);

\path[draw=black,line width= 0.8*\lthickness pt,line join=round] (243.07, 39) --
	(243.07, 42.18);

\path[draw=black,line width= 0.8*\lthickness pt,line join=round] (308.35, 39) --
	(308.35, 42.18);




\node[text=black,anchor=base,inner sep=0pt, outer sep=0pt, scale= \labscal] at ( 47.21, 29) {0};

\node[text=black,anchor=base,inner sep=0pt, outer sep=0pt, scale= \labscal] at ( 112.50, 29) {250};

\node[text=black,anchor=base,inner sep=0pt, outer sep=0pt, scale= \labscal] at (177.78, 29) {500};

\node[text=black,anchor=base,inner sep=0pt, outer sep=0pt, scale= \labscal] at (243.07, 29) {750};

\node[text=black,anchor=base,inner sep=0pt, outer sep=0pt, scale= \labscal] at (308.35, 29) {1000};
\end{scope}
\begin{scope}
\node[text=black,anchor=base,inner sep=0pt, outer sep=0pt, scale=  \labscal] at (\legendxpos/2+20,  8.22) {$k$ (number of servers)};
\node[text=black,rotate= 90.00,anchor=base,inner sep=0pt, outer sep=0pt, scale= \labscal] at ( 12,0.5*\legendypos) {$\mathbb E[T]$};
\end{scope}














\end{tikzpicture}}
        \caption{Hyperexponential Service ($C^2 = 1.72$)\label{sfig:hyperexp_service}}
    \end{subfigure}
    \newline
    \centering
    \begin{subfigure}{\textwidth}
        \scalebox{0.65}{\renewcommand{\legendypos}{100}
\renewcommand{\labscal}{1.3}
\renewcommand{\lthickness}{1.2}

\begin{tikzpicture}[x=1pt,y=1pt]

\path[use as bounding box,fill=white,fill opacity=0.00] (0,0) rectangle (600, \legendypos);

\begin{scope}
\path[genjsqstyle,line width=2.0pt, dotted] (100, 0.3*\legendypos) -- (130,0.3*\legendypos);

\node[text=black,anchor=base west,inner sep=0pt, outer sep=0pt, scale= \labscal] at (140,0.3*\legendypos-2) {{$\lambda = 0.4$ (theoretical)}};

\path[genjsqstyle, line width=1.0pt] (100, 0.6*\legendypos) -- (130,0.6*\legendypos);

\node[text=black,anchor=base west,inner sep=0pt, outer sep=0pt, scale= \labscal] at (140,0.6*\legendypos-2) {{$\lambda = 0.4$ (simulated)}};
\end{scope}

\begin{scope}
\path[gensedstyle,line width=2.0pt, dotted] (285,0.3*\legendypos) -- (315,0.3*\legendypos);

\node[text=black,anchor=base west,inner sep=0pt, outer sep=0pt, scale= \labscal] at (325,0.3*\legendypos-2) {{$\lambda = 0.6$ (theoretical)}};

\path[gensedstyle, line width=1.0pt] (285,0.6*\legendypos) -- (315,0.6*\legendypos);

\node[text=black,anchor=base west,inner sep=0pt, outer sep=0pt, scale= \labscal] at (325,0.6*\legendypos-2) {{$\lambda = 0.6$ (simulated)}};
\end{scope}

\begin{scope}
\path[gensewstyle,line width=2.0pt, dotted] (470,0.3*\legendypos) -- (500,0.3*\legendypos);

\node[text=black,anchor=base west,inner sep=0pt, outer sep=0pt, scale= \labscal] at (510,0.3*\legendypos-2) {{$\lambda = 0.8$ (theoretical)}};

\path[gensewstyle, line width=1.0pt] (470,0.6*\legendypos) -- (500,0.6*\legendypos);

\node[text=black,anchor=base west,inner sep=0pt, outer sep=0pt, scale= \labscal] at (510,0.6*\legendypos-2) {{$\lambda = 0.8$ (simulated)}};
\end{scope}

\end{tikzpicture}}
    \end{subfigure}
    \caption{Simulated mean response times for varying values of $k$ (together with the theoretical values for $k\to\infty$ under the asymptotic independence assumption) for a system with $(q_1, q_2, q_3) = (1/3, 1/6, 1/2)$ and mean service times $(1/\mu_1, 1/\mu_2, 1/\mu_3) = (1/2, 5/4, 5/2)$ under the $\genseed$;  Figures \ref{sfig:exp_service} and \ref{sfig:hyperexp_service} are in the settings with exponentially ($C^2=1$) and hyperexponentially ($C^2 = 1.72$) distributed service times, respectively.\label{fig:asymp-indep}  
    } 
    \label{fig:Asymptotic-indep}
\end{figure}
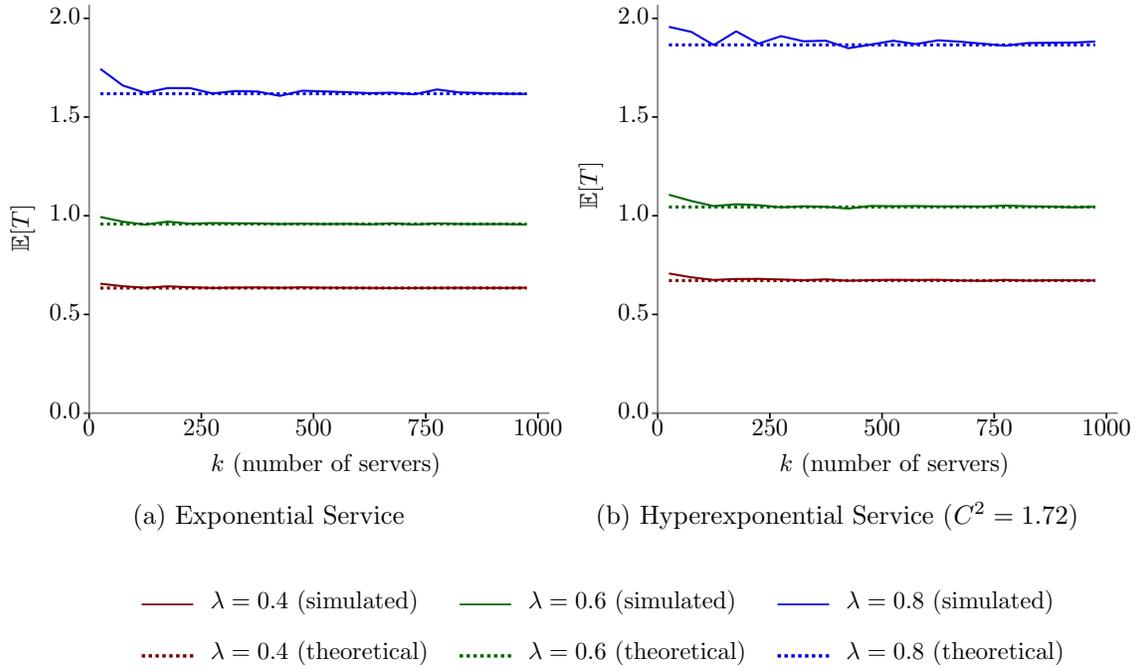

\newpage

\section{Appendix: Additional optimization problems}
\label{app:opt}

\begin{tcolorbox}

\textbf{Optimization Problem for $\DP{\iidq}{\cida}$}

Given values of $s$, $d$, $\lambda$, and $\mu_i$ and $q_i$ (both $\forall i\in\Scal$),  determine the \textbf{nonnegative} values of the decision variables $\li$, $\lb$, and $\indp(i)$ (each $\forall i\in\Scal$) and $\alpha_i(j,\dvec)$ ($\forall (i,j,\dvec)\in\Tcal$) that solve the following nonlinear program:
{\small
\begin{align*}
    \min\quad&\frac{1}{\lambda}\sum_{i=1}^s
    q_i\left(\frac{(1-\rho_i)\li+\rho_i\lb}{\mu_i-\lb}\right)\\
    \mathrm{s.t.}\quad&  \li=\frac{d!\lambda}{q_i}\sum_{\dvec\in\Dcal}\left\{\bi \left(\prod_{\ell=1}^s\frac{\indp(\ell)^{d_\ell}}{d_\ell!}\right)\alpha_i(i,\gamma(i,\dvec))\sum_{a_i=1}^{d_i}\binom{d_i-1}{a_i-1}\dfrac{(1-\rho_i)^{a_i-1}\rho_i^{d_i-a_i}}{a_i}\right\}&(\forall i\in\Scal)\\
    &\lb=\frac{d!\lambda}{q_i\rho_i}\sum_{\dvec\in\Dcal}\left\{ \left(\prod_{\ell=1}^s\frac{\indp(\ell)^{d_\ell}}{d_\ell!}\right)\ \sum_{\mathclap{j\in\Jcal_{i}(\dvec)}}\bj\left(1-\rho_j^{d_j}\right)\alpha_i(j,\gamma(i,\dvec))\right\}&(\forall i\in\Scal)\\
    &\lb <\mu_i&(\forall i\in\Scal)\\
    &\sum_{i=1}^s \indp(i)=1\\
    &\sum_{\mathclap{\substack{i\in\Scal:\\ (i,j,\dvec)\in\Tcal}}}\alpha_i(j,\dvec)=1&(\forall (j,\dvec)\in\Pcal)
\end{align*}
}%
where for all $i\in\Scal$ in writing $\rho_i$ we are denoting the expression $\li/\left(\mu_i-\lb+\li\right)$ with $\rsd\equiv0$ for all $\dvec\in\Dvec$, and for all $j\in\Scalb$ in writing $\bj$ we are denoting the expression $\prod_{\ell=1}^{j-1}\left(\lil/\left(\mu_\ell-\lbl+\lil\right)\right)^{d_\ell}$.

The $\indp(i)$ values (for all $i\in\Scal$) and the $\alpha_i(j,\dvec)$ values (for all $(i,j,\dvec)\in\Tcal$ together with $\alpha_i(j,\dvec)=0$ for all $(i,j,\dvec)\not\in\Tcal$) from an optimal solution specify the querying and assignment rules of an optimal policy, respectively.
\end{tcolorbox}


\begin{tcolorbox}
\textbf{Optimization Problem for $\DP{\indq}{\cida}$}

Given values of $s$, $d$, $\lambda$, and $\mu_i$ and $q_i$ (both given $\forall i\in\Scal$),  determine the \textbf{nonnegative} values of the decision variables $\li$ and $\lb$ ($\forall i\in\Scal$), $\indp_u(i)$ ($\forall (u,i)\in\Qcal\times\Scal$), and $\alpha_i(j,\dvec)$ ($\forall (i,j,\dvec)\in\Tcal$) that solve the following nonlinear program:
{\scriptsize
\begin{align*}
    \min\quad&\frac{1}{\lambda}\sum_{i=1}^s q_i\left(\frac{(1-\rho_i)\li+\rho_i\lb}{\mu_i-\lb}\right)\\
    \mathrm{s.t.}\quad&  \li=\frac{\lambda}{q_i}\sum_{\dvec\in\Dcal}\left\{d_i\bi \left(\sum_{\vec{\Qcal}\in\Bcal(\dvec)}\prod_{i=1}^s\prod_{u\in\Qcal_i}\indp_u(i)\right)\alpha_i(i,\gamma(i,\dvec))\sum_{a_i=1}^{d_i}\binom{d_i-1}{a_i-1}\dfrac{(1-\rho_i)^{a_i-1}\rho_i^{d_i-a_i}}{a_i}\right\}&(\forall i\in\Scal)\\
    &\lb=\frac{\lambda}{q_i\rho_i}\sum_{\dvec\in\Dcal}\left\{\left(\sum_{\vec{\Qcal}\in\Bcal(\dvec)}\prod_{i=1}^s\prod_{u\in\Qcal_i}\indp_u(i)\right)\ \  \sum_{\mathclap{j\in\Jcal_{i}(\dvec)}}\bj\left(1-\rho_j^{d_j}\right)\alpha_i(j,\gamma(i,\dvec))\right\}&(\forall i\in\Scal)\\
    &\lb <\mu_i&(\forall i\in\Scal)\\
    &\sum_{\dvec\in\Dcal}\indp_u(\dvec)=1&(\forall u\in\Qcal)\\
    &\sum_{\mathclap{\substack{i\in\Scal:\\ (i,j,\dvec)\in\Tcal}}}\alpha_i(j,\dvec)=1&(\forall (j,\dvec)\in\Pcal)
\end{align*}
}%
where for all $i\in\Scal$ in writing $\rho_i$ we are denoting the expression $\li/\left(\mu_i-\lb+\li\right)$ with $\rsd\equiv0$ for all $\dvec\in\Dvec$, and for all $j\in\Scalb$ in writing $\bj$ we are denoting the expression $\prod_{\ell=1}^{j-1}\left(\lil/\left(\mu_\ell-\lbl+\lil\right)\right)^{d_\ell}$.

The $\indp_u(i)$ values (for all $(u,i)\in\Qcal\times\Scal$) and the $\alpha_i(j,\dvec)$ values (for all $(i,j,\dvec)\in\Tcal$ together with $\alpha_i(j,\dvec)=0$ for all $(i,j,\dvec)\not\in\Tcal$) from an optimal solution specify the querying and assignment rules of an optimal policy, respectively.
\end{tcolorbox}


\begin{tcolorbox}
\textbf{Optimization Problem for $\DP{\detq}{\cida}$}

For each $\dvec\in\Dcal$ solve the following subproblem:
\begin{itemize}
    \item[]
{\scriptsize
Given values of $s$, $\dvec=(d_1,d_2,\ldots,d_s)$, $\lambda$, and $\mu_i$ and $q_i$ (both $\forall i\in\Scal(\dvec)$),  determine the \textbf{nonnegative} values of the decision variables $\li$ and $\lb$ (both $\forall i\in\Scal(\dvec)$) and $\alpha_i(j,\dvec)$ ($\forall (i,j)\in\Tcal(\dvec)$) that solve the following nonlinear program:
\begin{align*}
    \min\quad&\frac{1}{\lambda}\sum_{i=1}^s q_i\left(\frac{(1-\rho_i)\li+\rho_i\lb}{\mu_i-\lb}\right)\\
    \mathrm{s.t.}\quad&  \li=\frac{\lambda d_ib_i(\dvec)\alpha_i(i,\dvec)}{q_i}\sum_{a_i=1}^{d_i}\binom{d_i-1}{a_i-1}\dfrac{(1-\rho_i)^{a_i-1}\rho_i^{d_i-a_i}}{a_i}&(\forall i\in\Scal(\dvec))\\
    &\lb=\frac{\lambda}{q_i\rho_i}\sum_{j\in\Jcal_i(\dvec)}b_j(\dvec)\left(1-\rho_j^{d_j}\right)\alpha_i(j,\dvec)&(\forall i\in\Scal(\dvec))\\
    &\lb <\mu_i&(\forall i\in\Scal(\dvec))\\
    &\sum_{\mathclap{\substack{i\in\Scal:\\ (i,j)\in\Tcal(\dvec)}}}\alpha_i(j,\dvec)=1&(\forall j\in\Pcal(\dvec))
\end{align*}
where for all $i\in\Scal$ in writing $\rho_i$ we are denoting the expression $\li/\left(\mu_i-\lb+\li\right)$ with $\rho_{s+1}^{d_{s+1}}\equiv0$, and for all $j\in\Scalb$ in writing $b_i(\dvec)$ we are denoting the expression $\prod_{\ell=1}^{i-1}\left(\lil/\left(\mu_\ell-\lbl+\lil\right)\right)^{d_{\ell}}$.
}
\end{itemize}

Then identify $\dvec^{*}$, the query mix that yielded the lowest subproblem objective function value among all query mixes $\dvec\in\Dcal$.  This query mix $\dvec^*$ specifies the querying rule associated with $\DP{\detq}{\cida}$, while the $\alpha_i(j,\dvec^*)$ values (for all $(i,j)\in\Tcal(\dvec^*)$ together with $\alpha_i(j,\dvec^*)=0$ for all $(i,j)\not\in\Tcal(\dvec^*)$ and $\alpha_i(j,\dvec)=0$ for all $(i,j,\dvec)\in\Scal\times\Scalb\times\left(\Dcal\backslash\left\{\dvec^*\right\}\right)$ from a solution to the subproblem associated with $\dvec^*$ specifies the assignment rule associated with an optimal policy.
\end{tcolorbox}


\begin{tcolorbox}
\textbf{Optimization Problem for $\DP{\srcq}{\cida}$}

Given values of $s$, $d$, $\lambda$, and $\mu_i$ and $q_i$ (both given $\forall i\in\Scal$),  determine the \textbf{nonnegative} values of the decision variables $\li$ and $\lb$ (both $\forall i\in\Scal$), and $\srcp(i)$ ($\forall i\in\Scal$) that solve the following nonlinear program:
\begin{align*}
    \min\quad&\frac{1}{\lambda}\sum_{i=1}^s q_i\left(\frac{(1-\rho_i)\li+\rho_i\lb}{\mu_i-\lb}\right)\\
    \mathrm{s.t.}\quad&  \li=\frac{\lambda d\srcp(i)}{q_i}\sum_{a_i=1}^{d}\binom{d-1}{a_i-1}\dfrac{(1-\rho_i)^{a_i-1}\rho_i^{d-a_i}}{a_i}&(\forall i\in\Scal)\\
    &\lb=\frac{\lambda\srcp(i)\rho_i^{d-1}}{q_i}&(\forall i\in\Scal)\\
    &\lb<\mu_i&(\forall i\in\Scal)\\
    &\sum_{i=1}^s \srcp(i)=1\\
\end{align*}
where for all $i\in\Scal$ in writing $\rho_i$ we are denoting the expression $\li/\left(\mu_i-\lb+\li\right)$.

The $\srcp(i)$ values (for all $i\in\Dcal$) from an optimal solution specifies the querying rule associated with an optimal policy.
\end{tcolorbox}


\begin{tcolorbox}
\textbf{Optimization Problem for $\DP{\sfcq}{\cida}$}

For each $i\in\Scal$ solve the following subproblem:
\begin{itemize}
    \item[]
{\footnotesize
Given values of $s$, $\lambda$, $\mu_i$, and $q_i$ determine the \textbf{nonnegative} values of $\li$ and $\lb$ that solve the following nonlinear program (noting that the objective function below is irrelevant as there is at most one feasible solution):
\begin{align*}
    \min\quad&\frac{q_i}{\lambda} \left(\frac{(1-\rho_i)\li+\rho_i\lb}{\mu_i-\lb}\right)\\
    \mathrm{s.t.}\quad&  \li=\frac{\lambda d}{q_i}\sum_{a_i=1}^{d}\binom{d-1}{a_i-1}\dfrac{(1-\rho_i)^{a_i-1}\rho_i^{d-a_i}}{a_i}\\
    &\lb=\frac{\lambda\rho_i^{d-1}}{q_i}\\
    &\lb<\mu_i
\end{align*}
where for all $i\in\Scal$ in writing $\rho_i$ we are denoting the expression $\li/\left(\mu_i-\lb+\li\right)$ with $\rho_{s+1}^{d_{s+1}}\equiv0$, and for all $j\in\Scalb$ in writing $b_i(\dvec)$ we are denoting the expression $\prod_{\ell=1}^{i-1}\left(\lil/\left(\mu_\ell-\lbl+\lil\right)\right)^{d_{\ell}}$.
}
\end{itemize}
Then identify $i^{*}$, the class (or any class) that yielded the lowest subproblem objective function value among all classes $i\in\Scal$.  An optimal querying rule for the $\DP{\detq}{\cida}$ dispatching policy is that which always queries $d$ class-$i^{*}$ servers, i.e., the querying defined by $p(\dvec)=d\cdot I\{\dvec=d\evec_{i^{*}}\}$.

\end{tcolorbox}


\begin{tcolorbox}

\textbf{Optimization Problem for $\DP{\qr}{\cida}$}

Given values of $s$, $d$, $\lambda$, and $\mu_i$ and $q_i$ (both given $\forall i\in\Scal$) and some specific querying rule $\qr$ defined by some function $p\colon\Dcal\to[0,1]$ so that $p(\dvec)=\pr(\Dvec=\dvec)$  determine the \textbf{nonnegative} values of the decision variables $\li$ and $\lb$ (both $\forall i\in\Scal$) and $\alpha_i(j,\dvec)$ ($\forall (i,j,\dvec)\in\Tcal$) that solves the following nonlinear program:
\begin{align*}
    \min\quad&\frac{1}{\lambda}\sum_{i=1}^s q_i\left(\frac{(1-\rho_i)\li+\rho_i\lb}{\mu_i-\lb}\right)\\
    \mathrm{s.t.}\quad&  \li=\frac{\lambda}{q_i}\sum_{\dvec\in\Dcal}\left\{d_i\bi p(\dvec)\alpha_i(i,\gamma(i,\dvec))\sum_{a_i=1}^{d_i}\binom{d_i-1}{a_i-1}\dfrac{(1-\rho_i)^{a_i-1}\rho_i^{d_i-a_i}}{a_i}\right\}&(\forall i\in\Scal)\\
    &\lb=\frac{\lambda}{q_i\rho_i}\sum_{\dvec\in\Dcal}\left\{p(\dvec)\sum_{\mathclap{j\in\Jcal_{i}(\dvec)}}\bj\left(1-\rho_j^{d_j}\right)\alpha_i(j,\gamma(i,\dvec))\right\}&(\forall i\in\Scal)\\
    &\lb<\mu_i&(\forall i\in\Scal)\\
    &\sum_{\mathclap{\substack{i\in\Scal:\\ (i,j,\dvec)\in\Tcal}}}\alpha_i(j,\dvec)=1&(\forall (j,\dvec)\in\Pcal)
\end{align*}
where for all $i\in\Scal$ in writing $\rho_i$ we are denoting the expression $\li/\left(\mu_i-\lb+\li\right)$ with $\rsd\equiv1$ for all $\dvec\in\Dvec$, and for all $j\in\Scalb$ in writing $\bj$ we are denoting the expression $\prod_{\ell=1}^{j-1}\left(\lil/\left(\mu_\ell-\lbl+\lil\right)\right)^{d_\ell}$.

The $\alpha_i(j,\dvec)$ values (for all $(i,j,\dvec)\in\Tcal$ together with $\alpha_i(j,\dvec)=0$ for all $(i,j,\dvec)\not\in\Tcal$) from an optimal solution specifies the assignment rule of an optimal policy.
\end{tcolorbox}

\newpage

\section{Problem size}
\label{sec:problem-size}

In this section we address the sizes of the optimization problems presented in the preceding section as a theoretical proxy for problem tractability.  A more practical measure of tractability (computer runtime) will be presented in the following section, which focuses on numerical results.

\subsection{Measures of problem sizes}\label{sec:measures}

We measure problem sizes in terms of the number of variables (VAR), the number of linear equality constraints (LEC), and the number of nonlinear equality constraints (NEC).  We omit nonnegativity constraints, as there is one for each variable.  We also omit the number of upper-bound constraints (UBC) in all of our problems, as these always correspond to half the number of NECs. This is because there is always exactly one UBC for each $\lb$ variable and always exactly one NEC for each  $\li$ and each $\lb$ variable (which are equally numerous).  In the case of the problems associated with the $\detq$ and $\sfcq$ families we report the problem size associated with each \emph{subproblem} and the number of subproblems overall (SP).  Note that for $\detq$ not all subproblems are the same size, so we report all numbers in terms of the query mix, $\dvec\in\Dcal$, associated with the subproblem.  We report these problem size measures in Table~1; each entry is determined by straightforward inspection of the relevant optimization problem (as presented in Section~\ref{sec:optimization} or Appendix~\ref{app:opt}).

\begin{table}[!htp]
\label{tbl:formula}
\begin{center}
\begin{tabular}{|c|c|c|c|c|c|c|}
\cline{3-7}
\multicolumn{2}{c|}{} &
  VAR &
  \begin{tabular}[c]{@{}c@{}}LEC\end{tabular} &
  \begin{tabular}[c]{@{}c@{}}NEC\end{tabular} &
  DIM &
  \begin{tabular}[c]{@{}c@{}}SP\end{tabular} \\
  \hline \cline{1-7} 
\multicolumn{2}{|c|}{$\genq$}    & $2|\Scal|+|\Dcal|+|\Tcal|$ & $|\Pcal|+1$  & $2|\Scal|$   & $|\Dcal|+|\Tcal|-|\Pcal|-1$  & 1  \\ \hline
\multicolumn{2}{|c|}{$\indq$}                            & $(|\Qcal|+2)|\Scal|+|\Tcal|$  & $|\Pcal|+|\Qcal|$  & $2|\Scal|$  & $d|\Scal|+|\Tcal|-|\Pcal|-|\Qcal|$    & $1$  \\ \hline
\multicolumn{2}{|c|}{$\iidq$}     & $3|\Scal|+|\Tcal|$  & $|\Pcal|+1$  & $2|\Scal|$  & $|\Scal|+|\Tcal|-|\Pcal|-1$    & $1$  \\ \hline
\multicolumn{2}{|c|}{$\detq$} &
  
  \multicolumn{1}{c|}{$2|\Scal(\dvec)|+|\Tcal(\dvec)|$} &
  \multicolumn{1}{c|}{$|\Pcal(\dvec)|$} &
  \multicolumn{1}{c|}{$2|\Scal(\dvec)|$} &
  \multicolumn{1}{c|}{$|\Tcal(\dvec)|-|\Pcal(\dvec)|$} & 
  \multicolumn{1}{c|}{$|\Dcal|$} \\ \hline
\multicolumn{2}{|c|}{$\srcq$}    & $3|\Scal|$  & $1$  & $2|\Scal|$  & $1$  & $1$  \\ \hline
\multicolumn{2}{|c|}{$\sfcq$}                                & $2$ & $0$  & $2$  & $0$  & \multicolumn{1}{c|}{$|\Scal|$}  \\  \hline
\multicolumn{2}{|c|}{$\qr$} & $2|\Scal|+|\Tcal|$  &$|\Pcal|$  & $2|\Scal|$  & $|\Tcal|-|\Pcal|$  & $1$  \\ \hline
\end{tabular}
\caption{Optimization problem sizes for $\DP{\qrf}{\cida}$ (for six querying rule families $\qrf$) and for $\DP{\qr}{\cida}$ (for an individual querying rule $\qr$, e.g., $\brq$ or $\uniq$) in terms of the number of variables (VAR), the number of linear equality constraints (LEC), the number of nonlinear equality constraints (NEC), and the number of parameter dimensions of the polytope of all feasible dispatching policies (DIM). For the two querying rule families resulting in multiple subproblems (i.e., $\detq$ and $\sfcq$) the aforementioned quantities associated with each subproblem are given (as a function of $\dvec\in\Dcal$ in the case of $\detq$).  The number of subproblems (SP) in each optimization problem are also provided.}
\end{center}
\end{table}

We also introduce a measure which captures the dimension (DIM) of the polytope of parameter vectors specifying dispatching policies in Euclidean space.  We think of this dimension as corresponding to the number of ``degrees of freedom'' when specifying the querying ($p$) and assignment ($\alpha_i$) rules.  For example, when $s=3$, for any $(j, \dvec)\in\Pcal$, the three variables $\alpha_1(j,\dvec)$, $\alpha_2(j,\dvec)$, and $\alpha_3(j,\dvec)$ collectively contribute only two degrees of freedom, because the constraint $\alpha_1(j,\dvec)+\alpha_2(j,\dvec)+\alpha_3(j,\dvec)=1$ causes any one of these variables to always be uniquely determined in terms of the other two. Consequently, the number of dimensions corresponds to the number of variables minus the number of nonlinear equality constraints (because there is one such constraint for each variable that does not specify the dispatching polices, i.e., for each $\li$ and $\lb$ variable) minus the number of linear equality constraints (because these are normalization constraints, each of which reduces the ``degrees of freedom'' by one).

We note that further pruning is possible, so these problem sizes are particular to the way that we have formulated the problems.  First, we could eliminate the linear equality constraints and one variable occurring in each such constraint writing out this variable when it appears elsewhere in the optimization problem as one less the sum of the other variables appearing in the constraint; e.g., we could set $p(d\evec_1)$ equal to $\displaystyle{1-\sum_{\Dcal\backslash\{d\evec_1\}}p(\dvec)}$ in the optimization problem associated with $\DP\genq\cida$.  Second, problems allow for multiple optimal solutions in a way that could allow for further pruning opportunities that we have ignored.    For example, consider the following constraints of the optimization problem associated with $\DP{\genq}{\cida}$:
\begin{align*}
    \li&=\frac{\lambda}{q_i}\sum_{\dvec\in\Dcal}\left\{d_i\bi p(\dvec)\alpha_i(i,\gamma(i,\dvec))\sum_{a_i=1}^{d_i}\binom{d_i-1}{a_i-1}\dfrac{(1-\rho_i)^{a_i-1}\rho_i^{d_i-a_i}}{a_i}\right\}&(\forall i\in\Scal)\\
    \lb&=\frac{\lambda}{q_i\rho_i}\sum_{\dvec\in\Dcal}\left\{p(\dvec)\sum_{\mathclap{j\in\Jcal_{i}(\dvec)}}\bj\left(1-\rho_j^{d_j}\right)\alpha_i(j,\gamma(i,\dvec))\right\}&(\forall i\in\Scal)\\
    \sum_{\mathclap{\substack{i\in\Scal:\\ (i,j,\dvec)\in\Tcal}}}\alpha_i(j,\dvec)&=1&(\forall (j,\dvec)\in\Pcal).
\end{align*}
These are the only constraints in which the $\alpha_i(j,\dvec)$ variables appear for all $(i,j,\dvec)\in\Tcal$; these variables do not appear in the objective function.  Now consider an optimal solution and set the $\li$, $\lb$, and $p(\dvec)$ variables in the above constraints to values associated with that optimal solution.  Observe that any set of $\alpha(i,j,\dvec)$ values satisfying these constraints would also be optimal.  Noting that all of these constraints are linear in the $\alpha(i,j,\dvec)$ variables, the set of vectors of these variables satisfying these constraints form a polytope of at least $|\Tcal|-2|\Scal|-|\Pcal|$ dimensions, suggesting ample potential for further pruning.

\subsection{Numerical evaluation of problem sizes}

Table~1 presents the size of the problems associated with of our querying rule families in terms of the measures introduced in Section~\ref{sec:measures}.  In order to facilitate drawing insights from these problem sizes, in this section, we present numerical values for the entries of that table.  Before presenting these numerical values, we observe that most of the entries in Table~1 are presented in terms of $|\Scal|$, $|\Qcal|$, $|\Dcal|$, $|\Tcal|$, and $|\Pcal|$, which depend on $s$ and/or $d$, and are the cardinalities of sets defined respectively in Equations~\eqref{eq:scal}, \eqref{eq:dcal}, \eqref{eq:tcal}, \eqref{eq:pcal}, and~\eqref{eq:qcal}.  Consequently, computing the desired problem size measures as numerical values would require a method for computing the cardinalities of the aforementioned sets of $s$ and $d$.  Moreover, in the case of $\detq$, we also need to compute $|\Scal(\dvec)|$, $|\Tcal(\dvec)|$, and $|\Pcal(\dvec)|$, which are the cardinalities of sets defined respectively in Equations~\eqref{eq:scal-d}, \eqref{eq:tcal-d}, and~\eqref{eq:pcal-d}; since it would be inconvenient to examine these values for each $\dvec\in\Dcal$, we instead examine the maximum and average values of these sets over all $\dvec\in\Dcal$.  To this end, we present expressions for all these quantities of interest in Proposition~\ref{prop:combinatorics}.

\begin{proposition}
\label{prop:combinatorics}
Let $\Psi_{m}^n\equiv|\{\mathcal U\subseteq\{1,2,\ldots,n\}\colon |\mathcal U|\le m\}|=\sum_{\ell=0}^{m}\binom{n}{\ell}$ denote the number of subsets of cardinality at most $m$ (including the empty set) taken from a set of cardinality $n$.  The cardinalities of $\Scal$, $\Qcal$, $\Dcal$, $\Tcal$, and $\Pcal$---along with the maximum value and summation over all query mixes $\dvec\in\Dcal$ of the cardinalities of $\Scal(\dvec)$, $\Tcal(\dvec)$, and $\Pcal(\dvec)$---are as follows:
\begin{align*}
\begin{split}
    |\Scal|&=s,\quad |\Qcal|=d\\
    |\Dcal|&=
    \binom{s+d-1}{d}\\
    |\Tcal|&=s\Psi_{d-1}^{s-1}+\sum_{i=1}^s\left\{\Psi_{d-1}^{i-1}+\sum_{j=i+1}^s\Psi_{d-2}^{j-2}\right\}\\
    |\Pcal|&=\Psi_{d}^{s}+\sum_{j=2}^s\Psi_{d-1}^{j-1}\\
    \max_{\dvec\in\Dcal}|\Scal(\dvec)|&=\min(s,d)
\end{split}
\begin{split}
   \sum_{\dvec\in\Dcal}\frac{|\Scal(\dvec)|}{|\Dcal|}&=\frac{sd}{s+d-1}\\
    \max_{\dvec\in\Dcal}|\Tcal(\dvec)|&=\frac{\min\left(s^2+3s,d^2+3d\right)}{2}\\
   \sum_{\dvec\in\Dcal}\frac{|\Tcal(\dvec)|}{|\Dcal|}&=\frac{sd(sd+3s+3d-7)}{2(s+d-1)(s+d-2)}\\ 
    \max_{\dvec\in\Dcal}|\Pcal(\dvec)|&=\min(s,d)+1\\
    \sum_{\dvec\in\Dcal}\frac{|\Pcal(\dvec)|}{|\Dcal|}&=\frac{sd+s+d-1}{s+d-1}.
\end{split}
\end{align*}
Moreover, for all $\dvec\in\Dcal$, we have \[|\Tcal(\dvec)|=\frac{|\Scal(\dvec)|^2+3|\Scal(\dvec)|}{2}\quad\mbox{and}\quad\mathcal |\Pcal(\dvec)|=|\Scal(\dvec)|+1.\]
\end{proposition}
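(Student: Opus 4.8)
The plan is to prove each of the counting formulas in Proposition~\ref{prop:combinatorics} by direct combinatorial enumeration, exploiting the structural definitions of the relevant sets established in Section~\ref{sec:optimization}. The formulas naturally split into three groups: the elementary ones ($|\Scal|$, $|\Qcal|$, $|\Dcal|$), the ``aggregate'' ones involving $\Tcal$ and $\Pcal$, and the ``per-query-mix'' ones involving $\Scal(\dvec)$, $\Tcal(\dvec)$, and $\Pcal(\dvec)$ together with their maxima and averages over $\dvec\in\Dcal$. First I would dispatch the easy cases: $|\Scal|=s$ and $|\Qcal|=d$ are immediate from Equations~\eqref{eq:scal} and \eqref{eq:qcal}, while $|\Dcal|=\binom{s+d-1}{d}$ is the standard stars-and-bars count of nonnegative integer solutions to $d_1+\cdots+d_s=d$.

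Next I would attack the per-query-mix identities, since they turn out to drive everything else. For a fixed $\dvec$, by definition $\Scal(\dvec)=\{i\colon d_i>0\}$, so $|\Scal(\dvec)|$ is just the number of classes actually present in the query; its maximum is $\min(s,d)$ (one cannot have more distinct positive entries than either $s$ or $d$). The key structural observation is that $\Tcal(\dvec)$ and $\Pcal(\dvec)$, defined in Equations~\eqref{eq:tcal-d} and \eqref{eq:pcal-d}, depend on $\dvec$ \emph{only} through the set of present classes. I would prove the two clean closed forms $|\Tcal(\dvec)|=\tfrac12\bigl(|\Scal(\dvec)|^2+3|\Scal(\dvec)|\bigr)$ and $|\Pcal(\dvec)|=|\Scal(\dvec)|+1$ first, as these are the linchpin. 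Writing $m=|\Scal(\dvec)|$, the set $\Pcal(\dvec)=\{j\in\Scal\colon d_j>0\}\cup\{s+1\}$ has exactly $m+1$ elements. For $\Tcal(\dvec)$, the condition ``$i\le j$, $d_i>0$, and $(j\le s)\implies d_j>0$'' means I am counting pairs $(i,j)$ where $i$ ranges over the $m$ present classes, and $j$ is either a present class $\ge i$ or equals $s+1$; a short count over the present classes (say relabeled $1<\cdots<m$) gives $\sum_{r=1}^m (m-r+1) + m = \binom{m+1}{2}+m = \tfrac12(m^2+3m)$, matching the claim. The maxima then follow by substituting $m=\min(s,d)$.

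The averages require summing these per-$\dvec$ quantities over all $\dvec\in\Dcal$ and dividing by $|\Dcal|$. The cleanest route is to compute $\sum_{\dvec}|\Scal(\dvec)|$ by linearity: it equals $\sum_{i=1}^s \#\{\dvec\colon d_i>0\}$, and $\#\{\dvec\colon d_i>0\}=\binom{s+d-1}{d}-\binom{s+d-2}{d}$ by subtracting the compositions omitting class $i$. This yields the stated $sd/(s+d-1)$ after simplification. Because $|\Tcal(\dvec)|$ is quadratic in $m=|\Scal(\dvec)|$, I would also need $\sum_{\dvec} m^2$, which I would obtain via $\E[m^2]=\Var(m)+(\E m)^2$ or, more directly, by a second-order inclusion-exclusion counting pairs of present classes: $\sum_{\dvec}\binom{m}{2}=\binom{s}{2}\#\{\dvec\colon d_i,d_j>0\}$. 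The average $\Pcal$ formula follows immediately from $|\Pcal(\dvec)|=m+1$ and the already-computed $\E[m]$. Finally, the aggregate $|\Tcal|$ and $|\Pcal|$ formulas would be proved by summing the indicator structure over $\dvec$ directly; here the $\Psi_m^n=\sum_{\ell=0}^m\binom{n}{\ell}$ notation appears because fixing the fastest present/idle class and counting which slower classes may or may not appear in the remaining query slots produces binomial-coefficient sums truncated by the query budget $d$.

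The main obstacle I anticipate is the aggregate $|\Tcal|$ formula, $s\Psi_{d-1}^{s-1}+\sum_{i=1}^s\{\Psi_{d-1}^{i-1}+\sum_{j=i+1}^s\Psi_{d-2}^{j-2}\}$, because $\Tcal$ ranges over triples $(i,j,\dvec)$ subject to the representative-selection constraint $\gamma(j,\dvec)=\dvec$ from Equation~\eqref{eq:tcal}, not merely over present-class pairs. The subtlety is that $\gamma(j,\dvec)=\dvec$ forces $\dvec$ to be the canonical representative produced by the map $\gamma$ (defined on p.~\pageref{gamma}), which concentrates the ``free'' query slots on the fastest present class; so for each choice of $(i,j)$ I must count exactly how many representative mixes $\dvec$ are compatible. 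This requires carefully partitioning on the class $h(\dvec)$ of the fastest queried server and on which of the classes faster than $j$ are present, with the query budget $d$ capping the number of classes that can be simultaneously present — precisely what the truncated sums $\Psi_{d-1}^{\cdot}$ and $\Psi_{d-2}^{\cdot}$ encode. I would verify the bookkeeping by checking small cases (e.g.\ $s=d=2$ and $s=d=3$) against a brute-force enumeration before committing to the general algebraic simplification.
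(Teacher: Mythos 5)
Your proposal is correct, and for most of the proposition it follows the same path as the paper's proof: the elementary counts; the per-mix identities $|\Tcal(\dvec)|=\tfrac12\bigl(m^2+3m\bigr)$ and $|\Pcal(\dvec)|=m+1$ with $m=|\Scal(\dvec)|$ (the paper partitions $\Tcal(\dvec)$ into diagonal pairs, strictly-ordered pairs, and $j=s+1$ pairs, giving $m+\binom{m}{2}+m$, which is your relabeled sum); the maxima by monotonicity in $m$; and the aggregate $|\Tcal|$, $|\Pcal|$ counts via exactly the bijection you describe — for fixed $(i,j)$ the constraint $\gamma(j,\dvec)=\dvec$ forces one server per present class with the surplus at the fastest present class, so compatible mixes correspond to budget-capped subsets of classes, yielding the three cases $\Psi_{d-1}^{i-1}$ ($i=j$), $\Psi_{d-2}^{j-2}$ ($i<j\le s$), and $\Psi_{d-1}^{s-1}$ ($j=s+1$). (One small simplification over your sketch: no partitioning on $h(\dvec)$ is needed, since the subset of present classes determines the representative mix uniquely, $h(\dvec)$ being its minimum.) Where you genuinely diverge is in the average formulas. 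The paper computes the full distribution of $m$, showing $|\{\dvec\in\Dcal\colon|\Scal(\dvec)|=\ell\}|=\binom{s}{\ell}\binom{d-1}{d-\ell}$, and then evaluates the resulting sums $\sum_{\ell}\ell\binom{s}{\ell}\binom{d-1}{d-\ell}$ and $\sum_{\ell}\tfrac{\ell^2+3\ell}{2}\binom{s}{\ell}\binom{d-1}{d-\ell}$, explicitly deferring the closed-form simplification to a computer algebra system. You instead use linearity of summation: $\sum_{\dvec}m=\sum_{i=1}^s\#\{\dvec\colon d_i>0\}=s\binom{s+d-2}{d-1}$ and $\sum_{\dvec}\binom{m}{2}=\binom{s}{2}\binom{s+d-3}{d-2}$, after which all three averages follow from elementary ratios such as $\binom{s+d-2}{d-1}\big/\binom{s+d-1}{d}=d/(s+d-1)$ and $\binom{s+d-3}{d-2}\big/\binom{s+d-1}{d}=d(d-1)/\bigl((s+d-1)(s+d-2)\bigr)$; I checked that this reproduces $\tfrac{sd(sd+3s+3d-7)}{2(s+d-1)(s+d-2)}$. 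Your route buys a fully hand-verifiable proof of the averages with no CAS step, at the cost of not exhibiting the distribution of $m$, which the paper's route yields as a byproduct.
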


\begin{proof}
Clearly, $|\Scal|=|\{1,2,\ldots,s\}|=s$ and $|\Qcal|=|\{1,2,\ldots,d\}|=d$.  Meanwhile, $|\Dcal|$ is the number of query mixes consisting of $d$ servers drawn from $s$ classes, which corresponds exactly to the number of \emph{multisets} of cardinality $d$ taken from a set of cardinality $s$, so standard combinatorial reasoning (e.g., the so-called ``stars and bars'' technique) yields $|\Dcal|=\binom{s+d-1}{d}$.  We proceed to prove the remaining six formulas separately.

\noindent \textbf{Counting $|\Tcal|$.} Recalling that  \[\Tcal\equiv\left\{(i,j,\dvec)\in\Scal\times\Scalb\times\Dcal\colon i\le j,\, d_i>0,\, (j\le s)\implies d_j>0,\, \gamma(j,\dvec)=\dvec\right\},\] we count $|\Tcal|$ by counting the number of elements in each set in the partition described by the equation \[\Tcal=\bigcup_{i=1}^s\{(i',j,\dvec)\in\Tcal\colon i'=i\}\]  and then summing these counts:
\begin{align*}
|\Tcal|&=\sum_{i=1}^s|\{(i',j,\dvec)\in\Tcal\colon i'=i\}|\\
&=\sum_{i=1}^s\sum_{j=i}^{s+1}|\{(i',j',\dvec)\in\Tcal\colon (i',j')=(i,j)\}|\\
&=\sum_{i=1}^s\sum_{j=i}^{s+1}|\{\dvec\in\Dcal\colon (i,j,\dvec)\in\Tcal\}|\\
&=\sum_{i=1}^s\left\{|\{\dvec\in\Dcal\colon (i,i,\dvec)\in\Tcal\}|+\left(\sum_{j=i+1}^{s}|\{\dvec\in\Dcal\colon (i,j,\dvec)\in\Tcal\}|\right)+|\{\dvec\in\Dcal\colon (i,s+1,\dvec)\in\Tcal\}|\right\}\\
&=\sum_{i=1}^s\left\{\Psi_{d-1}^{i-1}+\left(\sum_{j=i+1}^s\Psi_{d-2}^{j-2}\right)+\Psi_{d-1}^{s-1}\right\}\\
&=s\Psi_{d-1}^{s-1} +\sum_{i=1}^s\left\{\Psi_{d-1}^{i-1}+\sum_{j=i+1}^s\Psi_{d-2}^{j-2}\right\},
\end{align*}
with the penultimate step requiring some further justification, which we now provide.  By the definition of $\Tcal$ (and specifically, the definition of the map $\gamma$), for any fixed $i\in\Scal$ and $j\in\Scalb$ such that $j\ge i$, the triple $(i,j,\dvec)\in\Tcal$ for some $\dvec\in\Dcal$ if and only if the query mix $\dvec$ satisfies the following conditions:
\begin{itemize}
    \item The query mix $\dvec$ consists of at least one class-$i$ server.
    \item The query mix $\dvec$ consists of at least one class-$j$ server if $j\in\Scal$ i.e., $j\neq s+1$.
    \item The query mix $\dvec$ consists of no servers that are slower than class $j$ (if $j=s+1$ all such $\dvec\in\Dcal$ satisfy this requirement vacuously).
    \item No more than one server of each class except for the fastest class is queried under the query mix $\dvec$.
\end{itemize}
Hence, there exists an isomorphism between such mixes $\dvec$ and the set of subsets of $\Scal$ consisting of at most $d$ classes (elements), which include both $i$ and (if $j\neq s+1$) $j$, and possibly some additional classes slower than class $i$ (selected from any of the classes $\{i+1,i+2,\ldots,j-1,j+1,\ldots,s\}$).
\begin{align*}
|\{d\in\Dcal\}\colon (i,j,\dvec)\in\Tcal|&=|\{\mathcal U\subseteq\{1,2,\ldots,\min(j,s)\}\colon |\mathcal U|\le d,\  i\in\mathcal U,\ j\in\mathcal U\cup\{s+1\}\}|\\
&=
\begin{cases}
|\{\mathcal U\subseteq\{1,2,\ldots,i-1\}\colon |\mathcal U|\le d-1\}|=\Psi_{d-1}^{i-1}&\mbox{if }i=j\\
|\{\mathcal U\subseteq\{1,2,\ldots,j-2\}\colon |\mathcal U|\le d-2\}|=\Psi_{d-2}^{j-2}&\mbox{if }i<j<s+1\\
|\{\mathcal U\subseteq\{1,2,\ldots,s-1\}\colon |\mathcal U|\le d-1\}|=\Psi_{d-1}^{s-1}&\mbox{if }j=s+1
\end{cases},
\end{align*}
as $\Psi_m^n\equiv|\{\mathcal U\subseteq\{1,2,\ldots,n\}\colon |\mathcal U|\le m\}|=\sum_{\ell=0}^{m}\binom{n}{\ell}$.  Hence, we have justified the equations above, and proved the claimed count of $|\Tcal|$.

\noindent\textbf{Counting $|\Pcal|$.} Recalling that \[\Pcal\equiv\left\{(j,\dvec)\in\Scalb\times\Dcal\colon(\exists i\in\Scal\colon (i,j,\dvec)\in\Tcal)\right\},\] we count $|\Pcal|$ by counting the number of elements in each set in the partition described by the equation \[\Pcal=\bigcup_{j=1}^s\{(j',\dvec)\in\Pcal\colon j'=j\}\] and then summing these counts:
\begin{align*}
    |\Pcal|&=\sum_{j=1}^{s+1}|\{(j',\dvec)\in\Pcal\colon j'=j\}|\\
    &=\sum_{j=1}^{s+1}|\{\dvec\in\Dcal\colon (j,\dvec)\in\Pcal\}|\\
    &=\sum_{j=1}^{s+1}|\{\dvec\in\Dcal\colon (\exists i\in\Scal\colon (i,j,\dvec)\in\Tcal)\}|\\
    &=|\{\mathcal U\subseteq\mathcal S\colon 1\le|\mathcal U|\le d\}|+\sum_{j=1}^s|\{\mathcal U\subseteq\{1,2,\ldots,j-1\}\colon |\mathcal U|\le d\}|\\
    &=\left(|\{\mathcal U\subseteq\mathcal S\colon |\mathcal U|\le d\}|-|\{\mathcal U\subseteq\mathcal S\colon |\mathcal U|=0\}|\right)+\left(|\{\mathcal U\subseteq\emptyset\colon|\mathcal U|\le d\}|+\sum_{j=2}^s|\{\mathcal U\subseteq\{1,2,\ldots,j-1\}\colon |\mathcal U|\le d\}|\right)\\
    &=\left(\Psi_d^s-1\right)+\left(1+\sum_{j=2}^s\Psi_{d-1}^{j-1}\right),\\
    &=\Psi_d^s+\sum_{j=2}^s\Psi_{d-1}^{j-1}.
\end{align*}

\noindent\textbf{Evaluating $\displaystyle{\max_{\dvec\in\Dcal}|\Scal(\dvec)|}$ and $\displaystyle{\sum_{\dvec\in\Dcal}|\Scal(\dvec)|}$.} Recalling that $\Scal(\dvec)\equiv\{i\in\Scal\colon d_i>0\}$, observe that we must clearly have $|\Scal(\dvec)|\le\min(s,d)$; in fact, this inequality holds tightly, e.g., when  $\dvec=\displaystyle{\max(d-s,0)\evec_s+\sum_{i=1}^{\mathclap{\min(s,d)}}\evec_i}$, we have $|\Scal(\dvec)| = \min(s,d)$.
Therefore, $\displaystyle{\max_{\dvec\in\Dcal}|\Scal(\dvec)|} = \min(s,d)$ as claimed.

Next, we evaluate the mean value of $|\Scal(\dvec)|$ over all $d\in\Dcal$.  To this end we count $|\{\Scal(\dvec)\colon\dvec\in\Dcal,\,|\Scal(\dvec)|=\ell\}|=|\{\dvec\in\Dcal\colon|\Scal(\dvec)|=\ell\}|$ for each $\ell\in\{1,2,\ldots,d\}$.  This count must correspond to the number of ways that $d$ indistinguishable balls can be put into $s$ distinguishable bins such that exactly $\ell$ bins are nonempty.  It follows from this observation that $|\{\dvec\in\Dcal\colon|\Scal(\dvec)|=\ell\}|=\binom{s}{\ell}\binom{d-1}{d-\ell}$ because there are $\binom{s}{\ell}$ ways of choosing $\ell$ nonempty bins from among $s$ possible bins, and---since each of the $\ell$ nonempty bins necessarily includes at least one ball---there are (by the ``stars and bars'' technique)  $\binom{\ell+(d-\ell)-1}{d-1}=\binom{d-1}{d-\ell}$ ways of distributing the remaining $d-\ell$ balls among these $\ell$ bins.  We then use these counts to compute the desired mean value as follows:
\[ \footnotesize
    \frac1{|\Dcal|}\sum_{\dvec\in\Dcal}|\Scal(\dvec)|=\frac1{|\Dcal|}\sum_{\ell=1}^d \ell\left(|\{\Scal(\dvec)\colon\dvec\in\Dcal,\,|\Scal(\dvec)|=\ell\}|\right)=\frac1{|\Dcal|}\sum_{\ell=1}^d \ell\binom{s}{\ell}\binom{d-1}{d-\ell}=\frac{sd}{(s+d-1)|\Dcal|}\binom{s+d-1}{d}=\frac{sd}{s+d-1},
\]
where the penultimate inequality can be verified using a computer algebra system and the last equality follows from the previously established fact that $|\Dcal|=\binom{s+d-1}{d}$.

\noindent\textbf{Expressing $|\Tcal(\dvec)|$ in terms of $|\Scal(\dvec)|$ and evaluating $\displaystyle{\max_{\dvec\in\Dcal}|\Tcal(\dvec)|}$ and $\displaystyle{\sum_{\dvec\in\Dcal}|\Tcal(\dvec)|}$.} Recalling that \[\Tcal(\dvec)\equiv\left\{(i,j)\in\Scal\times\Scalb\colon i\le j,\, d_i>0,\, (j\le s)\implies d_j>0\right\},\] and $\Scal(\dvec)\equiv\{i\in\Scal\colon d_i>0\}$, observe that we can partition $\mathcal T(\dvec)$ into the three sets $\{(i,i)\colon i\in\Scal(\dvec)\}$, $\left\{(i,j)\in\Scal(\dvec)^2\colon i<j\right\}$, and $\{(i,s+1)\colon i\in\Scal(\dvec)\}$.  Using this partition, we compute $|\Tcal(\dvec)|$ in terms of $\Scal(\dvec)$:
\begin{align*}
    |\mathcal T(\dvec)|&=|\{(i,i)\colon i\in\Scal(\dvec)\}|+\left|\left\{(i,j)\in\Scal(\dvec)^2\colon i<j\right\}\right|+|\{(i,s+1)\colon i\in\Scal(\dvec)\}|\\
    &=|\Scal(\dvec)|+\binom{|\Scal(\dvec)|}{2}+|\Scal(\dvec)|=\frac{|\Scal(\dvec)|^2+3|\Scal(\dvec)|}{2},
\end{align*}
as claimed.  

Since $|\mathcal T(\dvec)|$ is monotonically increasing in $|\Scal(\dvec)|$, the former is maximized at precisely those values of $\dvec$ where the latter is maximized, allowing us to obtain the claimed result on the maximum value of $|\Tcal(\dvec)|$ over all $\dvec\in\Dcal$: \[\max_{\dvec\in\Dcal}|\Tcal(\dvec)|=\max_{\dvec\in\Dcal}\left\{\frac{|\Scal(\dvec)|^2+3|\Scal(\dvec)|}{2}\right\}=\frac{\min(s,d)^2+3\min(s,d)}2=\frac{\min\left(s^2+3s,d^2+3d\right)}{2}.\]

Our expression of $|\Tcal(\dvec)|$ in terms of $|\Scal(\dvec)|$ also allows us to prove the claim regarding the mean value of $|\Tcal(\dvec)|$ over all $\dvec\in\Dcal$ in a fashion similar to that used to prove the claim regarding the analogous mean value associated with $|\Scal(\dvec)|$:
\begin{align*}
    \footnotesize
    \frac1{|\Dcal|}\sum_{\dvec\in\Dcal}|\Tcal(\dvec)|=\frac1{|\Dcal|}\sum_{\ell=1}^d \left(\frac{\ell^2+3\ell}{2}\right)\left(|\{\Scal(\dvec)\colon\dvec\in\Dcal,\,|\Scal(\dvec)|=\ell\}|\right)=\frac1{|\Dcal|}\sum_{\ell=1}^d \left(\frac{\ell^2+3\ell}{2}\right)\binom{s}{\ell}\binom{d-1}{d-\ell}=\frac{sd(sd+3s+3d-7)}{2(s+d-1)(s+d-2)},
\end{align*}
where the last equality can be verified using a computer algebra system.

\noindent\textbf{Expressing $|\Pcal(\dvec)|$ in terms of $|\Scal(\dvec)|$ and evaluating $\displaystyle{\max_{\dvec\in\Dcal}|\Pcal(\dvec)|}$ and $\displaystyle{\sum_{\dvec\in\Dcal}|\Pcal(\dvec)|}$.} Recalling that \[\Pcal(\dvec)\equiv\left\{j\in\Scalb\colon(\exists i\in\Scal\colon(i,j)\in\Tcal(\dvec))\right\}=\{j\in\Scal\colon d_j>0\}\cup\{s+1\},\] we must clearly have $|\Pcal(\dvec)|=|\Scal(\dvec)|+1$, as $\Scal(\dvec)\equiv\{\ell\in\Scal\colon d_\ell>0\}$.  We readily obtain the remaining claimed results:
\begin{align*}
    \max_{\dvec\in\Dcal}|\Pcal(\dvec)|&=\max_{\dvec\in\Dcal}\{|\Scal(\dvec)|+1\}=\min(s,d)+1\\
    \frac1{|\Dcal|}\sum_{\dvec\in\Dcal}|\Pcal(\dvec)|&=\footnotesize{\frac1{|\Dcal|}\sum_{\dvec\in\Dcal}\{|\Scal(\dvec)|+1\}=\frac1{|\Dcal|}\left(|\Dcal|+\sum_{\dvec\in\Dcal}|\Scal(\dvec)|\right)=1+\frac1{|\Dcal|}\sum_{\dvec\in\Dcal}|\Scal(\dvec)|=1+\frac{sd}{s+d-1}=\frac{sd+s+d-1}{sd}.}
\end{align*}
\end{proof}

\begin{table}[!htp]
\label{tab:numeric}
\centering
\begin{adjustbox}{addcode={\begin{minipage}{\width}}{\caption{%
Problem sizes for the optimization problems associated with $\DP{\qrf}{\cida}$ for five querying rule families $\qrf$ and for $\DP{\qr}{\cida}$ (for an individual querying rule $\qr$)---in terms of VAR, LEC, NEC, and DIM---for each combination of $s,d\in\{2,3,4,5\}$.  For the case where $\qrf=\detq$ we present both the maximum (labeled ``max'') and average (labeled ``avg'') problem size measures, rounding averages to the nearest integer; we also present the number of subproblems (above the letters ``SP'' on the far right of the box associated with each $(s,d)$ pair in the rows labeled ``$\detq$'').
      }\end{minipage}},rotate=90,center, scale=0.9}

\begin{tabular}{ccc !{\vrule width 2pt}c|c|c|c|c!{\vrule width 2pt}c|c|c|c|c!{\vrule width 2pt}c|c|c|c|c!{\vrule width 2pt}c|c|c|c|c!{\vrule width 2pt}}
\cline{4-23}
 &  &  & \multicolumn{5}{c!{\vrule width 2pt}}{$d=2$} & \multicolumn{5}{c!{\vrule width 2pt}}{$d=3$} & \multicolumn{5}{c!{\vrule width 2pt}}{$d=4$} & \multicolumn{5}{c!{\vrule width 2pt}}{$d=5$} \\ \cline{2-23} 
\multicolumn{1}{c|}{} & \multicolumn{2}{c!{\vrule width 2pt}}{Policy} & VAR & LEC & NEC & \multicolumn{2}{c!{\vrule width 2pt}}{DIM} & VAR & LEC & NEC & \multicolumn{2}{c!{\vrule width 2pt}}{DIM} & VAR & LEC & NEC & \multicolumn{2}{c!{\vrule width 2pt}}{DIM} & VAR & LEC & NEC & \multicolumn{2}{c!{\vrule width 2pt}}{DIM} \\ 
\specialrule{.2em}{.1em}{.1em} 
\rowcolor[HTML]{EFEFEF}
\multicolumn{1}{|c|}{\cellcolor{white}} & \multicolumn{2}{c!{\vrule width 2pt}}{$\genq$} & 15 & 7 & 4 & \multicolumn{2}{c!{\vrule width 2pt}}{4} & 16 & 7 & 4 & \multicolumn{2}{c!{\vrule width 2pt}}{5} & 17 & 7 & 4 & \multicolumn{2}{c!{\vrule width 2pt}}{6} & 18 & 7 & 4 & \multicolumn{2}{c!{\vrule width 2pt}}{7} \\ \cline{2-23} 
\multicolumn{1}{|c|}{} & \multicolumn{2}{c!{\vrule width 2pt}}{$\indq$} & 16 & 8 & 4 & \multicolumn{2}{c!{\vrule width 2pt}}{4} & 18 & 9 & 4 & \multicolumn{2}{c!{\vrule width 2pt}}{5} & 20 & 10 & 4 & \multicolumn{2}{c!{\vrule width 2pt}}{6} & 22 & 11 & 4 & \multicolumn{2}{c!{\vrule width 2pt}}{7}\\ \cline{2-23} 
\rowcolor[HTML]{EFEFEF}
\multicolumn{1}{|c|}{\cellcolor{white}} & \multicolumn{2}{c!{\vrule width 2pt}}{$\iidq$} & 14 & 7 & 4 & \multicolumn{2}{c!{\vrule width 2pt}}{3} & 14 & 7 & 4 & \multicolumn{2}{c!{\vrule width 2pt}}{3} & 14 & 7 & 4 & \multicolumn{2}{c!{\vrule width 2pt}}{3} & 14 & 7 & 4 & \multicolumn{2}{c!{\vrule width 2pt}}{3} \\ \cline{2-23} 
\multicolumn{1}{|c|}{} & \multicolumn{1}{c|}{} & max & 9 & 3 & 4 & 2 & \multicolumn{1}{c!{\vrule width 2pt}}{{\small 3}} & 9 & 3 & 4 & 2 & \multicolumn{1}{c!{\vrule width 2pt}}{{\small 4}} & 9 & 3 & 4 & 2 & \multicolumn{1}{c!{\vrule width 2pt}}{{\small 5}} & 9 & 3 & 4 & 2 & \multicolumn{1}{c!{\vrule width 2pt}}{{\small 6}} \\ \cline{3-7} \cline{9-12} \cline{14-17} \cline{19-22}
\multicolumn{1}{|c|}{} & \multicolumn{1}{c|}{\multirow{-2}{*}{$\detq$}} & avg & 6 & 2 & 3 & 1 & \multicolumn{1}{c!{\vrule width 2pt}}{{\scriptsize SP}} & 6 & 2 & 3 & 1 & \multicolumn{1}{c!{\vrule width 2pt}}{{\scriptsize SP}} & 7 & 3 & 3 & 1 & \multicolumn{1}{c!{\vrule width 2pt}}{{\scriptsize SP}} & 7 & 3 & 3 & 1 & 
\multicolumn{1}{c!{\vrule width 2pt}}{{\scriptsize SP}}\\ \cline{2-23} 
\rowcolor[HTML]{EFEFEF}
\multicolumn{1}{|c|}{\cellcolor{white}} & \multicolumn{2}{c!{\vrule width 2pt}}{$\srcq$} &  6 & 1 & 4 & \multicolumn{2}{c!{\vrule width 2pt}}{1} & 6 & 1 & 4 & \multicolumn{2}{c!{\vrule width 2pt}}{1} & 6 & 1 & 4 & \multicolumn{2}{c!{\vrule width 2pt}}{1} & 6 & 1 & 4 & \multicolumn{2}{c!{\vrule width 2pt}}{1} \\ \cline{2-23} 
\multicolumn{1}{|c|}{\cellcolor{white} \multirow{-7}{*}{s = 2}} & \multicolumn{2}{c!{\vrule width 2pt}}{$\qr$} & 12 & 6 & 4 & \multicolumn{2}{c!{\vrule width 2pt}}{2} & 12 & 6 & 4 & \multicolumn{2}{c!{\vrule width 2pt}}{2} & 12 & 6 & 4 & \multicolumn{2}{c!{\vrule width 2pt}}{2} & 12 & 6 & 4 & \multicolumn{2}{c!{\vrule width 2pt}}{2} \\ 
\specialrule{.2em}{.1em}{.1em} 
\rowcolor[HTML]{EFEFEF}
\multicolumn{1}{|c|}{\cellcolor{white}} & \multicolumn{2}{c!{\vrule width 2pt}}{$\genq$} & 30 & 13 & 6 & \multicolumn{2}{c!{\vrule width 2pt}}{11} & 40 & 15 & 6 & \multicolumn{2}{c!{\vrule width 2pt}}{19} & 45 & 15 & 6 & \multicolumn{2}{c!{\vrule width 2pt}}{24} & 51 & 15 & 6 & \multicolumn{2}{c!{\vrule width 2pt}}{30} \\ \cline{2-23} 
\multicolumn{1}{|c|}{} & \multicolumn{2}{c!{\vrule width 2pt}}{$\indq$} & 30 & 14 & 6 & \multicolumn{2}{c!{\vrule width 2pt}}{10} & 39 & 17 & 6 & \multicolumn{2}{c!{\vrule width 2pt}}{16} & 42 & 18 & 6 & \multicolumn{2}{c!{\vrule width 2pt}}{18} & 45 & 19 & 6 & \multicolumn{2}{c!{\vrule width 2pt}}{20} \\ \cline{2-23} 
\rowcolor[HTML]{EFEFEF}
\multicolumn{1}{|c|}{\cellcolor{white}} & \multicolumn{2}{c!{\vrule width 2pt}}{$\iidq$} & 27 & 13 & 6 & \multicolumn{2}{c!{\vrule width 2pt}}{8} & 33 & 15 & 6 & \multicolumn{2}{c!{\vrule width 2pt}}{12} & 33 & 15 & 6 & \multicolumn{2}{c!{\vrule width 2pt}}{12} & 33 & 15 & 6 & \multicolumn{2}{c!{\vrule width 2pt}}{12} \\ \cline{2-23} 
\multicolumn{1}{|c|}{} & \multicolumn{1}{c|}{} & max & 9 & 3 & 4 & 2 & \multicolumn{1}{c!{\vrule width 2pt}}{6} & 15 & 4 & 6 & 5 & \multicolumn{1}{c!{\vrule width 2pt}}{10} & 15 & 4 & 6 & 5 & \multicolumn{1}{c!{\vrule width 2pt}}{15} & 15 & 4 & 6 & 5 & \multicolumn{1}{c!{\vrule width 2pt}}{21} \\ \cline{3-7} \cline{9-12} \cline{14-17} \cline{19-22} 
\multicolumn{1}{|c|}{} & \multicolumn{1}{c|}{\multirow{-2}{*}{$\detq$}} & avg & 6 & 2 & 3 & 1 & \multicolumn{1}{c!{\vrule width 2pt}}{{\scriptsize SP}} & 8 & 3 & 4 & 2 & \multicolumn{1}{c!{\vrule width 2pt}}{{\scriptsize SP}} & 9 & 3 & 4 & 2 & \multicolumn{1}{c!{\vrule width 2pt}}{{\scriptsize SP}} & 10 & 3 & 4 & 3 & \multicolumn{1}{c!{\vrule width 2pt}}{{\scriptsize SP}} \\ \cline{2-23} 
\rowcolor[HTML]{EFEFEF}
\multicolumn{1}{|c|}{\cellcolor{white}} & \multicolumn{2}{c!{\vrule width 2pt}}{$\srcq$} & 9 & 1 & 6 & \multicolumn{2}{c!{\vrule width 2pt}}{2} & 9 & 1 & 6 & \multicolumn{2}{c!{\vrule width 2pt}}{2} & 9 & 1 & 6 & \multicolumn{2}{c!{\vrule width 2pt}}{2} & 9 & 1 & 6 & \multicolumn{2}{c!{\vrule width 2pt}}{2} \\ \cline{2-23} 
\multicolumn{1}{|c|}{\cellcolor{white} \multirow{-7}{*}{s = 3}} & \multicolumn{2}{c!{\vrule width 2pt}}{$\qr$} & 24 & 12 & 6 & \multicolumn{2}{c!{\vrule width 2pt}}{6} & 30 & 14 & 6 & \multicolumn{2}{c!{\vrule width 2pt}}{10} & 30 & 14 & 6 & \multicolumn{2}{c!{\vrule width 2pt}}{10} & 30 & 14 & 6 & \multicolumn{2}{c!{\vrule width 2pt}}{10} \\ 
\specialrule{.2em}{.1em}{.1em} 
\rowcolor[HTML]{EFEFEF}
\multicolumn{1}{|c|}{\cellcolor{white}} & \multicolumn{2}{c!{\vrule width 2pt}}{$\genq$} & 50 & 21 & 8 & \multicolumn{2}{c!{\vrule width 2pt}}{21} & 84 & 29 & 8 & \multicolumn{2}{c!{\vrule width 2pt}}{47} & 107 & 31 & 8 & \multicolumn{2}{c!{\vrule width 2pt}}{68} & 128 & 31 & 8 & \multicolumn{2}{c!{\vrule width 2pt}}{89} \\ \cline{2-23} 
\multicolumn{1}{|c|}{} & \multicolumn{2}{c!{\vrule width 2pt}}{$\indq$} & 48 & 22 & 8 & \multicolumn{2}{c!{\vrule width 2pt}}{18} & 76 & 31 & 8 & \multicolumn{2}{c!{\vrule width 2pt}}{37} & 88 & 34 & 8 & \multicolumn{2}{c!{\vrule width 2pt}}{46} & 92 & 35 & 8 & \multicolumn{2}{c!{\vrule width 2pt}}{49} \\ \cline{2-23} 
\rowcolor[HTML]{EFEFEF}
\multicolumn{1}{|c|}{\cellcolor{white}} & \multicolumn{2}{c!{\vrule width 2pt}}{$\iidq$} & 44 & 21 & 8 & \multicolumn{2}{c!{\vrule width 2pt}}{15} & 68 & 29 & 8 & \multicolumn{2}{c!{\vrule width 2pt}}{31} & 76 & 31 & 8 & \multicolumn{2}{c!{\vrule width 2pt}}{37} & 76 & 31 & 8 & \multicolumn{2}{c!{\vrule width 2pt}}{37} \\ \cline{2-18} 
\multicolumn{1}{|c|}{} & \multicolumn{1}{c|}{} & max & 9 & 3 & 4 & 2 & \multicolumn{1}{c!{\vrule width 2pt}}{10} & 15 & 4 & 6 & 5 & \multicolumn{1}{c!{\vrule width 2pt}}{20} & 22 & 5 & 8 & 9 & \multicolumn{1}{c!{\vrule width 2pt}}{35} & 22 & 5 & 8 & 9 & \multicolumn{1}{c!{\vrule width 2pt}}{56} \\ \cline{3-7} \cline{9-12} \cline{14-17} \cline{19-22}  
\multicolumn{1}{|c|}{} & \multicolumn{1}{c|}{\multirow{-2}{*}{$\detq$}} & avg & 7 & 3 & 3 & 1 & \multicolumn{1}{c!{\vrule width 2pt}}{\scriptsize SP} & 9 & 3 & 4 & 2 & \multicolumn{1}{c!{\vrule width 2pt}}{\scriptsize SP} & 11 & 3 & 5 & 3 & \multicolumn{1}{c!{\vrule width 2pt}}{\scriptsize SP} & 12 & 4 & 5 & 4 & \multicolumn{1}{c!{\vrule width 2pt}}{\scriptsize SP} \\ \cline{2-23} 
\rowcolor[HTML]{EFEFEF}
\multicolumn{1}{|c|}{\cellcolor{white}} & \multicolumn{2}{c!{\vrule width 2pt}}{$\srcq$} & 12 & 1 & 8 & \multicolumn{2}{c!{\vrule width 2pt}}{3} & 12 & 1 & 8 & \multicolumn{2}{c!{\vrule width 2pt}}{3} & 12 & 1 & 8 & \multicolumn{2}{c!{\vrule width 2pt}}{3} & 12 & 1 & 8 & \multicolumn{2}{c!{\vrule width 2pt}}{3} \\ \cline{2-23} 
\multicolumn{1}{|c|}{\cellcolor{white} \multirow{-7}{*}{s = 4}} & \multicolumn{2}{c!{\vrule width 2pt}}{$\qr$} & 40 & 20 & 8 & \multicolumn{2}{c!{\vrule width 2pt}}{12} & 64 & 28 & 8 & \multicolumn{2}{c!{\vrule width 2pt}}{28} & 72 & 30 & 8 & \multicolumn{2}{c!{\vrule width 2pt}}{34} & 72 & 30 & 8 & \multicolumn{2}{c!{\vrule width 2pt}}{34} \\ 
\specialrule{.2em}{.1em}{.1em} 

\rowcolor[HTML]{EFEFEF}
\multicolumn{1}{|c|}{\cellcolor{white}} & \multicolumn{2}{c!{\vrule width 2pt}}{$\genq$} & 75 & 31 & 10 & \multicolumn{2}{c!{\vrule width 2pt}}{34} & 155 & 51 & 10 & \multicolumn{2}{c!{\vrule width 2pt}}{94} & 230 & 61 & 10 & \multicolumn{2}{c!{\vrule width 2pt}}{159} & 296 & 63 & 10 & \multicolumn{2}{c!{\vrule width 2pt}}{223} \\ \cline{2-23} 
\multicolumn{1}{|c|}{} & \multicolumn{2}{c!{\vrule width 2pt}}{$\indq$} & 70 & 32 & 10 & \multicolumn{2}{c!{\vrule width 2pt}}{28} & 135 & 53 & 10 & \multicolumn{2}{c!{\vrule width 2pt}}{72} & 180 & 64 & 10 & \multicolumn{2}{c!{\vrule width 2pt}}{106} & 195 & 67 & 10 & \multicolumn{2}{c!{\vrule width 2pt}}{118} \\ \cline{2-23} 
\rowcolor[HTML]{EFEFEF}
\multicolumn{1}{|c|}{\cellcolor{white}} & \multicolumn{2}{c!{\vrule width 2pt}}{$\iidq$} & 65 & 31 & 10 & \multicolumn{2}{c!{\vrule width 2pt}}{24} & 125 & 51 & 10 & \multicolumn{2}{c!{\vrule width 2pt}}{64} & 165 & 61 & 10 & \multicolumn{2}{c!{\vrule width 2pt}}{94} & 175 & 63 & 10 & \multicolumn{2}{c!{\vrule width 2pt}}{102} \\ \cline{2-18} 
\multicolumn{1}{|c|}{} & \multicolumn{1}{c|}{} & max & 9 & 3 & 4 & 2 & \multicolumn{1}{c!{\vrule width 2pt}}{15} & 15 & 4 & 6 & 5 & \multicolumn{1}{c!{\vrule width 2pt}}{35} & 22 & 5 & 8 & 9 & \multicolumn{1}{c!{\vrule width 2pt}}{70} & 30 & 6 & 10 & 14 & \multicolumn{1}{c!{\vrule width 2pt}}{126} \\ \cline{3-7} \cline{9-12} \cline{14-17} \cline{19-22}  
\multicolumn{1}{|c|}{} & \multicolumn{1}{c|}{\multirow{-2}{*}{$\detq$}} & avg & 7 & 3 & 3 & 1 & \multicolumn{1}{c!{\vrule width 2pt}}{\scriptsize SP} & 10 & 3 & 4 & 3 & \multicolumn{1}{c!{\vrule width 2pt}}{\scriptsize SP} & 12 & 4 & 5 & 4 & \multicolumn{1}{c!{\vrule width 2pt}}{\scriptsize SP} & 14 & 4 & 6 & 5 & \multicolumn{1}{c!{\vrule width 2pt}}{\scriptsize SP} \\ \cline{2-23} 
\rowcolor[HTML]{EFEFEF}
\multicolumn{1}{|c|}{\cellcolor{white}} & \multicolumn{2}{c!{\vrule width 2pt}}{$\srcq$} & 15 & 1 & 10 & \multicolumn{2}{c!{\vrule width 2pt}}{4} & 15 & 1 & 10 & \multicolumn{2}{c!{\vrule width 2pt}}{4} & 15 & 1 & 10 & \multicolumn{2}{c!{\vrule width 2pt}}{4} & 15 & 1 & 10 & \multicolumn{2}{c!{\vrule width 2pt}}{4} \\ \cline{2-23} 
\multicolumn{1}{|c|}{\cellcolor{white} \multirow{-7}{*}{s = 5}} & \multicolumn{2}{c!{\vrule width 2pt}}{$\qr$} & 60 & 30 & 10 & \multicolumn{2}{c!{\vrule width 2pt}}{20} & 120 & 50 & 10 & \multicolumn{2}{c!{\vrule width 2pt}}{60} & 160 & 60 & 10 & \multicolumn{2}{c!{\vrule width 2pt}}{90} & 170 & 62 & 10 & \multicolumn{2}{c!{\vrule width 2pt}}{98} \\ 
\specialrule{.2em}{.1em}{.1em}
\end{tabular}
\end{adjustbox}
\end{table}

Using Proposition~\ref{prop:combinatorics}, we present numerical values of the problem sizes associated with the optimization problems of interest for each combination of $s,d\in\{2,3,4,5\}$ in Table~7; we omit the optimization problem associated with $\DP{\sfcq}{\cida}$, as it features $s$ problems and the sizes of all of the subproblems (insofar as our measures are concerned) are constant in both $s$ and $d$.  For $\DP{\detq}{\cida}$ we report the maximum and average measures across the subproblems (of which there is one for each $\dvec\in\Dcal$; we also report the number of subproblems).  As average measures may not be integers, we round them to the nearest integer to conserve space.

Across most of our measures and combinations of $s$ and $d$, the problem associated with $\genq$ is larger than that associated with $\indq$ (although there are exceptions, particularly with LEC), which is in turn larger than that associated with $\iidq$, with those associated with $\qr$ and $\srcq$ following in that order (with $\srcq$ problem sizes being much smaller than those associated with the other families/rules at higher $s$ and $d$).  The proportional disparity in problem sizes with respect to most of our measures becomes substantially more pronounced as both $s$ and $d$ grow.  Meanwhile, while each subproblem  associated with $\detq$ is small, the number of subproblems grows sharply in $s$ and $d$, suggesting that the optimization problem associated with $\DP\detq\cida$ is generally the most computationally intensive among the problems studied in this paper. As the $\detq$-driven dispatching policies are the only ones studied in~\cite{gardner2020scalable} (to our knowledge, the only work that jointly optimizes querying and assignment rules), one important contribution of this paper is the generalization of the $\detq$ querying rule family to families (i.e., $\indq$ and $\genq$) that more readily lend themselves to efficient optimization. 


\newpage 

\section{Appendix: Notes on implementation and numerical results}
\label{app:implementation}
In this section, we present some notes on the implementation of the numerical computations and results presented in Sections~\ref{sec:numerical}~and~\ref{sec:construction}.

\subsection{System specifications and implementation}

The numerical results in Section~\ref{sec:numerical} were obtained on a system with an Intel i7-7500U CPU @ 2.70GHz with 16GB of 2133MHz RAM running the Windows 10 Home 64-bit operating system.  These numerical results were obtained using code written in the programming language Julia.  We used the JuMP package \cite{dunning2017jump} in Julia to define our optimization models and we solved these problems using the Interior Point Optimizer (IPOPT) optimization algorithm \cite{Lubin2015}.  

The JuMP package facilitates formulating the optimization problems as presented in the paper.  For our optimization problem, we used an alternate $\gamma$ mapping defined as follows:
\[\gamma(j,\dvec)\mapsto\sum_{i=1}^sI\{d_i>0\mbox{ and } i\le j\}\evec_i.\]
This alternate definition should not influence the solutions or the size of the optimization problem in any way.  The optimization problem resulting from this alternate mapping is equivalent to that obtained using the mapping defined in Section~\ref{sec:optimization}.

\subsection{Failure to find solutions}

We note that while IPOPT returned a solution for more than $99.99\%$ of these optimization problems, there were 5 optimization problems where IPOPT failed to find a solution (arising from 2 and 3 parameter settings for the $\DP\indq\cida$ and $\DP\iidq\cida$ families, respectively).  These cases were omitted from the average and median $\mathbb E[T]$ calculations for the corresponding policies, but were included in the policy runtimes.  We took the runtime in these cases to be the time it took for IPOPT to report that it failed to find a solution.

\subsection{Computing $\genseed$}
\label{app:genseed}

The $\genseed$ policy is found using the following methodology: we seed the optimization problem associated with $\DP\genq\cida$ with an initial value corresponding to the parameters of $\ipoptd\indq$ (where possible) and augment this optimization problem by imposing an additional constraint that the objective function value be no higher than $\mathbb E[T]^{\ipoptd\indq}+10^{-3}$; this additional slack of $10^{-3}$ helped obtain better solutions, presumably by avoiding complications that would arise due to numerical imprecision.  We take the $\genseed$ policy to be specified by the solution IPOPT finds to this optimization problem, unless IPOPT fails to find a solution (which occurs across $2.44\%$ of our parameter settings), in which case we set $\genseed=\ipoptd\indq$.  Finally, whenever IPOPT fails to find any solution to the optimization problem associated with $\DP\indq\cida$ (which occurred in only $2$ of our $12\,825$ parameter settings), we set $\genseed=\ipoptd\genq$.

\newpage

\section{General service times and scheduling rules}
\label{sec:general}
In this section we briefly discuss how the consideration of general (rather than exponential) service distributions impacts the framework and analysis presented thus far in this paper.  This generalization also necessitates giving explicit consideration to the ramifications of the choice of scheduling rule.

\subsection{The general service distribution model}

We now relax the assumption that job sizes (i.e., service requirements in terms of time) are distributed according to an exponential distribution.  Instead, we now assume that the size of a job running on a class-$i$ server is drawn from a general distribution with cdf $G_i$ and mean $1/\mu_i$, where all such distributions have the same ``shape,'' i.e., $G_i(x)=G_\ell(\mu_i x/\mu_\ell)$. For example, job sizes could be drawn from Pareto distributions, so that for each $i\in\mathcal S$, a job's size at a class-$i$ server follows the cdf  $G_i(x) = 1 - \left( \frac{\kappa - 1}{\kappa\mu_i x} \right)^{\kappa}$, for $\kappa > 2$.

\begin{remark}
We provide one interpretation of the assumption that job size distributions have the same shape at all server classes: each job's size $X$ is drawn---independently of the server on which it ultimately runs---from some distribution $G$ with mean 1; the job would then require $X/\mu_i$ time units of service to run on a server of speed $\mu_i$ (see \cite{gardner2017better} for further discussion on this assumption).
\end{remark}

\subsection{Analysis under general service distributions and various scheduling rules}

With the new assumption described above, the analysis of $\mathbb E[T]$ now depends on the choice of scheduling rule, $\mathsf{SR}$; stability results remain unchanged. The only changes to the analysis presented in Section~\ref{sec:analysis} are the formula for $\mathbb E[T_i]$ and the resulting formula for $\mathbb E[T]$.  Rather than using the M/M/1 formula for $\mathbb E[T_i]$, we now use the M/G/1/$\mathsf{SR}$ formula.  Recall that we have restricted attention to size-blind working-conserving scheduling rules.  Common examples of such rules include \textsf{First Come First Served} ($\fcfs$), \textsf{Processor Sharing} ($\ps$) and \textsf{Preemptive Last Come First Served} ($\plcfs$). 
We also note that the analysis in Section~\ref{sec:analysis} relies on the assumption that, as the number of servers approaches infinity, all queue states become independent. This assumption has, in general, been more difficult to prove in settings with general service time distributions.
The result has been shown for the Least-Loaded($d$) dispatching policy, for the JSQ($d$) dispatching policy under FCFS scheduling when the service time distribution has decreasing hazard rate, and for any dispatching policy provided that the arrival rate is sufficiently small and that the service time distribution has finite first and second moments~\cite{Bramson2012}.  See Appendix~\ref{app:asymptotic} for simulation studies supporting the asymptotic independence assumption under our model.

Under $\fcfs$ scheduling, using the Pollaczek-Khinchine formula (for the mean response time in an M/G/1/$\fcfs$ system) we find that
\begin{align}\label{eq:generalization-1}
\mathbb E[T_i]=\frac{\left(1+C^2\right)\lb}{2\mu_i(\mu_i-\lb)}+\frac1{\mu_i},\end{align}  where the squared coefficient of variation \label{Csquared} $C^2\equiv\ep[X^2]/\ep[X]^2-1$ for $X\sim G$ (equivalently, for $X\sim G_i$ for any $i\in\Scal$).  These changes result in a modified objective function for all of the optimization problems presented in this paper, but they leave the variables and constraints of these problems unchanged.
Specifically, the new objective function under the $\mathsf{FCFS}$ scheduling rule is as follows:
\begin{align}\label{eq:generalization-2}
    \ep[T] = \sum_{i=1}^s \left(\frac{q_i\lambda_i}{\lambda}\right)\ep[T_i]
    =\frac{1}{\lambda} \sum_{i=1}^{s} q_i\lambda_i \left( \frac{\left(1+C^2\right)\lb}{2\mu_i(\mu_i-\lb)}+\frac1{\mu_i} \right).
\end{align}

Meanwhile, under both $\ps$ and $\plcfs$, $\mathbb E[T_i]=1/(\mu_i-\lambda_i)$, just as it was under the assumption of exponentially distributed job sizes for all work-conserving size-blind scheduling rules.  Therefore, under both $\ps$ and $\plcfs$, the analysis presented in Section~\ref{sec:analysis} holds without modification (assuming that asymptotic independence holds).

In principle one could generalize this model by allowing for ``greater heterogeneity'' across server classes.  Specifically, one could allow an entirely different cdf $G_i$ for the service distribution (rather than merely considering scalings of the same distributions), and one could also implement a different scheduling rule at each server class (perhaps to compensate for the different job size distributions).  While analyzing policies under such generalizations may at first appear to be straightforward, these generalizations can introduce complications that are best explored in future work. Most notably, there is no longer a natural ``ordering'' on server classes as two server classes may have the same ``speed,'' (so that jobs will have the same mean service time at each) but with different service time distribution ``shapes.''  Moreover, even when one server class is faster than another, differences in the service time distribution ``shapes'' and scheduling rules employed at each class may confound the use of a clear-cut ordering, and possibly necessitate more care in the pruning of $\cida$ assignment rules than that presented earlier in this paper.

\subsection{Age-based assignment and scheduling rules}

The allowance of general service distributions introduces new state information that can be used to further differentiate busy servers (even those of the same class) from one another for the purpose of assignment.  Specifically, we define the \emph{attained service level} (ASL) of a job (sometimes referred to as a job's \emph{age}) at any given time to be the number of units of processing time received by the job thus far. For example, under $\mathsf{FCFS}$ a job that has been in service for 2 minutes has an ASL of 2 minutes.  Now consider a slightly more complicated example: under $\mathsf{PS}$ a job that has been present at a server for 2 minutes, where it spent the first minute alone and the second minute sharing the server with one other job has an ASL of 1.5 minutes.

Scheduling rules informed by the ASLs of the jobs present at a server have been explored extensively in the literature.  One common size-blind work-conserving ASL-based policy is the \textsf{Least Attained Service} ($\las$)---also known as \textsf{Foreground-Background} ($\fb$)---scheduling rule~\cite{nuijens2004foreground,rai2003analysis}, which can be generalized to the \textsf{Gittins Index Policy}~\cite{gittins2011multi,scully2018soap}.  In principle, for any such policy with a known closed-form mean response time formula, our performance analysis will once again hold with a properly modified objective function incorporating said formula.

The consideration of ASLs also significantly complicates the construction of assignment rules.  In general, the aggregate state associated with a query would need to include each queried server's class together with the ASL of each job at that server.  Moreover, a job's expected remaining size can be non-monotonic in that job's ASL, which may obstruct straightforward attempts at pruning the space of assignment rules.  Note, however, that some scheduling rules can greatly simplify the aggregate query states that one can encounter: for example, under any non-preemptive scheduling rule (e.g., $\fcfs$) at most one job at each server will have a non-zero ASL.

\end{document}